\author{} 
\title{} 
\date{\today} 
\newcommand{\E}{\mathbb{E}}
\newcommand{\Cov}{\mathrm{Cov}}
\newtheorem{definition}{Definition}
\newtheorem{example}{Example}
\newtheorem*{assumption*}{Assumption}
\newtheorem*{theoremcustommmm}{Theorem 4+5}
\newenvironment{customthm}[1]
  {\innercustomthm}
  {\endinnercustomthm}
\newenvironment{customlem}[1]
  {\innercustomlem}
  {\endinnercustomlem}
  \newtheorem{theorem}{Theorem}
\newtheorem{lemma}{Lemma}
\newcommand{\indep}{\perp \!\!\! \perp}
\begin{document}

\title{Cross-World Assumption and Refining Prediction Intervals for Individual Treatment Effects}

\author{\name Juraj Bodik \email juraj.bodik@unil.ch \\
       \addr Department of Statistics, University of California, Berkeley, USA \\Department of Operations, HEC, University of Lausanne, Switzerland
       \AND
       \name Yaxuan Huang \email yaxuan\_huang@berkeley.edu \\
       \addr  Department of Statistics, University of California, Berkeley, USA
        \AND
       \name Bin Yu \email binyu@berkeley.edu \\
       \addr  Department of Statistics, University of California, Berkeley, USA \\ Department of Electrical Engineering and Computer Science, UC Berkeley, USA}

\maketitle

\begin{abstract}

While average treatment effects (ATE) and conditional average treatment effects (CATE) provide valuable population- and subgroup-level summaries, they fail to capture uncertainty at the individual level. For high-stakes decision-making, individual treatment effect (ITE) estimates must be accompanied by valid prediction intervals that reflect heterogeneity and unit-specific uncertainty. However, the fundamental unidentifiability of ITEs limits the ability to derive precise and reliable individual-level uncertainty estimates. To address this challenge, we investigate the role of a cross-world correlation parameter, \( \rho(x) = \text{cor}\big(Y(1), Y(0) \mid X = x\big) \), which describes the dependence between potential outcomes, given covariates, in the Neyman-Rubin super-population model with i.i.d. units.  Although \( \rho \) is fundamentally unidentifiable, we argue that in most real-world applications, it is possible to impose reasonable and interpretable bounds informed by domain-expert knowledge. Given $\rho$, we design prediction intervals for ITE, achieving more stable and accurate coverage with substantially shorter widths; often less than 1/3 of those from competing methods. The resulting intervals satisfy coverage guarantees $P\big(Y(1) - Y(0) \in C_{ITE}(X)\big) \geq 1 - \alpha$ and are asymptotically optimal under Gaussian assumptions. We provide strong theoretical and empirical arguments that cross-world assumptions can make individual uncertainty quantification both practically informative and statistically valid.

\textbf{Keywords:} Causal Inference, Individual Treatment Effect, Cross-World Assumptions, Uncertainty Quantification, Conformal Inference 
\end{abstract}
  

\pagenumbering{arabic}

\section{Introduction}

The average treatment effect (ATE) is a common standard for evaluating causal effects, offering a population-level summary of the impact of an intervention \citep{Imbens_Rubin_book}. In many decision-making settings, relying solely on the ATE is insufficient, as it fails to capture the heterogeneity in individual responses to treatment. In personalized medicine, policy evaluation, and economics, understanding individual treatment effects (ITE) is crucial for tailoring interventions to a specific individual \citep{athey2016recursive, robins2000marginal}.  Nevertheless, estimating ITE comes with significant challenges, particularly in constructing reliable prediction intervals that quantify uncertainty.  

A fundamental challenge in estimating ITE is its inherent non-identifiability. Since we can never observe both potential outcomes $Y(0), Y(1)$ for the same individual, the true ITE remains unknown. This limitation is deeply connected to Pearl’s third ladder of causation \citep{TheBookOfWhy}, which deals with counterfactual reasoning. While association (first ladder) and intervention (second ladder) can be addressed through statistical and experimental methods, counterfactual inference requires untestable assumptions about the unobserved hypothetical ``What if'' world. 

A natural way to mitigate the challenges of estimating ITE is to consider the Conditional Average Treatment Effect (CATE). Unlike ITE, which remains fundamentally unobservable, CATE can be consistently estimated using statistical and machine learning methods under standard causal assumptions \citep{Hahn1998, Wager_2024_Book_Causal}. While CATE provides valuable insights into treatment heterogeneity, it does not offer direct uncertainty quantification for individual effects; only for group-level averages. In many applications, such as personalized medicine or policy evaluation, decision-makers require prediction intervals rather than just point estimates with confidence intervals, ensuring reliable risk assessment for each individual. 

Several recent works have proposed methods for constructing prediction intervals for ITEs \citep{Lei_2021_Conformal_Inference, Kivaranovic2020, Alaa_2023_Conformal_Meta_learners, jonkers2024conformalconvolutionmontecarlo}. However, these approaches either result in extremely wide intervals with limited practical utility, or lack coverage guarantees due to the fundamental unidentifiability of ITEs.  We aim to bridge this gap by producing narrow prediction intervals for individual treatment effects while maintaining rigorous coverage guarantees. To achieve this, we rely on a new assumption, which can be specified or bounded using expert domain knowledge.

\begin{definition}[Cross-world assumption]
In the Neyman–Rubin super-population model with i.i.d. units, the dependence structure (conditional correlation) between the potential outcomes $Y(1), Y(0)$, conditioned on the observed covariates \( X \), is defined as:
\begin{equation}
\label{rho_definition}
    \rho(x) = \operatorname{cor}\big(Y(1), Y(0) \mid X=x\big).
\end{equation}
We refer to an assumption about $\rho$ as a cross-world assumption. 
\end{definition}
While the term \textit{``cross-world assumption''} is newly introduced in this work, related ideas have appeared in prior literature (see related literature in Section~\ref{subsection_main_cross_world3}), albeit without being explicitly named. These assumptions are typically embedded within additive structural equation models, in which the dependence between potential outcomes is captured by the correlation $\rho$ between the exogenous noise terms $\varepsilon_0$ and $\varepsilon_1$ in the model:
\begin{equation*}
    Y_i(0) = \mu_0(X_i)+ \varepsilon_0,  \quad  Y_i(1) = \mu_1(X_i)+ \varepsilon_1,\,\,\,\,\,\,\text{ where }\operatorname{cor}(\varepsilon_1,\varepsilon_0) = \rho(X_i). 
\end{equation*}
Although \( \rho \) is unidentifiable, we argue that incorporating expert domain knowledge to postulate at least plausible bounds on \( \rho \) is both reasonable and aligned with how humans make judgments in practice. Observing one potential outcome is typically highly informative about the unobserved counterfactual, often conveying information beyond what is captured by measured covariates \citep{Chernozhukov2023Toward, wu2025counterfactual}. Despite it being untestable, humans make use of counterfactual statements all the time, which is  necessary for intelligent behavior  \citep{lewis2020counterfactuals}. 

Consider the following example: a randomized clinical trial for a new antihypertensive drug. The potential outcomes
$Y_i(1)$ and $Y_i(0)$ denote the patient’s systolic blood pressure one year later under
treatment and control. Observed covariates $X_i$ include age, sex, baseline blood
pressure, and basic health indicators. Even after conditioning on $X_i$, important unobserved
physiological factors remain, such as genetic predisposition, vascular stiffness, and long-term
lifestyle habits. These latent traits affect blood pressure under both treatment regimes in a similar manner. Consequently, individuals with high $Y_i(0)$ tend also to have high $Y_i(1)$. As a result, the cross-world correlation $\rho(x)$ is naturally positive and often plausibly large. Assuming $\rho(x) > 0$, or even $\rho(x) > 0.5$, is therefore reasonable in similar settings,
as the intervention primarily shifts outcomes without substantially altering the relative
ordering of individuals.

In Section~\ref{section_2} we delve deeper into the discussion about the cross-world assumption. Nevertheless, our main goal is to construct the smallest possible prediction intervals $C_{ITE}$ for individual treatment effect satisfying for given $\alpha\in(0,1)$:
\begin{equation}
    \label{marginal_coverage_definition}
\mathbb{P}\big(Y_{}(1) -  Y_{}(0) \in C_{ITE}(X_{})\big) \geq 1 - \alpha,
\end{equation}
or more ambitiously,
\vspace{-0.2cm}
\begin{equation}
    \label{def_cond_coverage}
    \mathbb{P}\big(Y(1) -  Y(0) \in C_{ITE}(X) \mid X=x\big) \geq 1-\alpha.
\end{equation}

\textbf{Our contributions} represent, to our knowledge, the first to establish rigorous finite-sample and asymptotic guarantees for individual-level prediction intervals under cross-world assumption. Specifically:
\begin{itemize}
\item We clarify the role of the cross-world correlation $\rho$ in ITE uncertainty quantification, and provide practical tools for reasoning about, bounding, and \textit{estimating} $\rho$ using auxiliary covariates and domain knowledge. This includes a novel way of leveraging covariates to sharpen prediction intervals. 
\item We propose a novel class of prediction intervals, denoted $CW(\rho)$ (Cross-World Intervals), which incorporate cross-world assumption on $\rho$. 
\item When $\rho$ is known, we prove that $CW(\rho)$ intervals are, under some assumptions, asymptotically optimal in a sense that they are the smallest intervals that satisfy \eqref{def_cond_coverage}. Even when $\rho$ is only bounded, or sample size is small, they also enjoy strong finite-sample coverage guarantees \eqref{marginal_coverage_definition}.
\item We provide empirical evaluations that the $CW(\rho)$ intervals (in particular, the $CW^{+CI}(\rho)$ variant) consistently maintain stable and valid coverage while achieving significantly narrower widths, often less
than 1/3 of those based on existing methods, even when $\rho$ is slightly misspecified or the noise is non-Gaussian.
\end{itemize}

\textbf{The rest of the paper is organized as follows.} Section~\ref{section_2} provides a detailed discussion of the cross-world assumption and the related literature. We introduce $CW(\rho)$ and $CW^{+CI}(\rho)$ prediction intervals in Section~\ref{section4}, and provide theoretical guarantees, including asymptotic optimality and finite-sample coverage when $\rho$ is known or bounded, in Section~\ref{section5}. Empirical evaluations on simulated and real data are presented in Section~\ref{section_simulations}. Additional theoretical results are provided in Appendix~\ref{section_appendix_theoretical_guarantees}. Further empirical results appear in Appendix~\ref{appendix_simulations}, and all formal proofs are deferred to Appendix~\ref{appendix_proofs}.

\section{Cross-World Assumption}
\label{section_2}

\subsection{Causal inference: notation and preliminaries}
\label{section_preliminaries}

Throughout this paper, we adopt the Neyman–Rubin potential outcome framework \citep{Neyman1923, Rubin1974} and assume a super-population from which a realization of $n$ independent random variables is given as the training data. For each of the \( n \) observed units, let \( Y_i(1) \) and \( Y_i(0) \) denote the potential outcomes for unit \( i \) under treatment and control, respectively. The ITE is defined as \(Y_i(1) - Y_i(0)\).
The fundamental issue of causal inference is that the ITE is unobservable: for each unit, we can only observe one of the two potential outcomes. Let \( T_i \in \{0,1\} \) indicate treatment assignment, where \( T_i = 1 \) if unit \( i \) receives treatment and \( T_i = 0 \) otherwise. The observed outcome is then given by  
\(
Y_i = T_i Y_i(1) + (1 - T_i) Y_i(0).
\) 
In addition to treatment and outcome, we assume access to a vector of pre-treatment covariates \( X_i \in \mathcal{X}\subseteq\mathbb{R}^d \). 
We consider $\big(Y_i(1), Y_i(0), T_i, X_i\big) \overset{\text{i.i.d.}}{\sim} \big(Y(1), Y(0), T, X\big)$ for a generic random vector. Additionally, we define ATE as $\mathbb{E}[Y(1) - Y(0)]$ and CATE as 
$$
\tau(\textbf{x}) = \mu_1(\textbf{x})-\mu_0(\textbf{x}), \,\,\,\,\,\text{where}\,\,\,\,\mu_t(\textbf{x}) =  \mathbb{E}[Y(t) |  \textbf{X} = \textbf{x}].
$$
Throughout the paper, we assume the standard causal assumptions of \textbf{strong ignorability} and \textbf{overlap}: \(
(Y(1), Y(0)) \perp\!\!\!\perp T \mid X
\) and $0 < \pi(\textbf{x}) < 1$ for all $\textbf{x}\in \mathcal{X}$, where $\pi(\textbf{x})=\mathbb{P}(T = 1 \mid \textbf{X}=\textbf{x})$ is usually referred to as a propensity score. This implies that the treatment assignment \( T \) does not depend on the potential outcomes after conditioning on the covariates \( X \), and each unit has a non-trivial probability of being assigned to either treatment or control. These assumptions allow us to identify CATE using $\mu_t(\textbf{x}) =  \mathbb{E}[Y |  T=t, \textbf{X} = \textbf{x}]$ \citep{Imbens_Rubin_book}.

Several approaches have been proposed for estimating CATE, including flexible meta-learners such as T-, S-, X-, and DR-learners \citep{Kunzel_2019_Metalearners_Estimating, kennedy_2023_optimal_doubly_robust}, or tree-based methods like causal forests \citep{athey_2016_recursive_partitioning, Wager_2018_Estimation_Inference}. CATE estimators are often compared using validation-based risk measures \citep{schuler2018comparisonmethodsmodelselection, Dwivedi_2020_Stable_Discovery}, and ensemble methods have been developed to combine strengths across models and reduce estimation risk \citep{nie2021quasioracle, han2022ensemblemethodestimatingindividualized}.

Two types of ``no treatment effect'' null hypotheses are commonly considered in causal inference. Neyman’s null hypothesis is a population-level statement that $ATE=0$ \citep{Neyman1923}. Fisher’s null hypothesis is a sharp null stating that the treatment has no effect on any individual $ITE = 0$ \citep{Fisher1935}. The sharp null is a stronger assumption and indeed an entire population can exhibit zero average effect despite heterogeneous individual effects that cancel out \citep{Ding2017Paradox}.

\subsection{Cross-world assumption: connection with existing literature}
\label{subsection_main_cross_world3}

While ignorability assumption \(
\big(Y(1), Y(0)\big) \perp\!\!\!\perp T \mid X
\) is extremely prominent in the causal literature, assumptions on the joint distribution of \((Y(1), Y(0))  \mid X\) are rare. Although we observe the marginals $Y(1) \mid X$ and $Y(0) \mid X$, their dependence is unidentifiable because only one of $Y(0)$ and $Y(1)$ is observed.

Due to the linearity of expectation, we can identify CATE without any cross-world assumption, since always $\mathbb{E}[Y(1) - Y(0) \mid X] = \mathbb{E}[Y(1) \mid X]-\mathbb{E}[Y(0) \mid X]$. \cite{Rubin1990} and \citep[subsection 3.2]{Rubin_rho_for_CATE_as_sensitivity_parameter}  considered $\rho$ as a sensitivity parameter and showed that it plays no role for an inference of CATE.  However, this does not extend to functionals like quantiles or variances. \citet{Gadbury2000} raised this issue in obesity research. In the Noble lecture and subsequent work, \cite{heckman1997making, heckman2001micro, carneiro2003estimating} use parametric factor models to estimate the joint counterfactual distribution. \cite{cai2022relatepotentialoutcomesestimating} used $\rho$ for estimating missing counterfactual in linear Gaussian model.  

The parameter $\rho$ naturally arises in discussions of ITE variance \citep{Rubin1990}. \citet{ding2019decomposing} showed that idiosyncratic variance is inherently cross-world and unidentifiable without joint assumptions. Another line of work \citep{zhang2025individualtreatmenteffectprediction, zhang2025boundsdistributionsumrandom} seeks to tighten ITE prediction intervals by leveraging Fréchet–Hoeffding-type copula bounds. Recent work, such as \citet{wu2024quantifying}, incorporates $\rho$ to estimate the Fraction Negatively Affected (FNA) for binary outcomes. Our work extends this perspective to continuous outcomes.

\cite{Chernozhukov2023Toward}, in a commentary on \cite{Jin2023Sensitivity}, discussed how $\rho$ can be used to define conservative conformal prediction intervals for ITE. They advocated varying $\rho$ to explore sensitivity and robustness, and noted that assuming a Gaussian copula could mitigate over-conservativeness under certain regularity conditions. We build on this, focusing on the Gaussian copula due to (i) its plausibility for continuous outcomes, (ii) since 
$Y(1)$ and $Y(0)$ are never observed together, more flexible models cannot be effectively used, (iii) interpretability of $\rho$, and (iv) robust empirical performance across settings.

In the Structural Causal Models (SCM, \cite{Pearl_book, Peters2017, bodik2024structuralrestrictionslocalcausal}), we assume a data-generating process of the form $Y(T) = f_Y(T, X, \varepsilon)$, $\varepsilon\indep T, X$. In an additive case $Y = f_Y(T,X)+\varepsilon$, we automatically have  $\operatorname{cor}\big(Y(1), Y(0) \mid X\big) = 1$.  However, this is no longer true in more complicated data-generating processes; consider, for example $Y = T\varepsilon + (1-T)\varepsilon^2$, with $\varepsilon\sim N(0,1)$.
In such a case, knowledge of $Y(0)$ brings only limited information about $Y(1)$ (large $Y(0)$ can mean either large $Y(1)$ or large $-Y(1)$) and $\operatorname{cor}\big(Y(1), Y(0) \mid X\big) = 0$. The previous example highlights the fact that in very simple data-generating processes where the external factors ($\varepsilon$) have an additive form, we typically obtain $\operatorname{cor}\big(Y(1), Y(0) \mid X\big) \approx 1$. However, in real-world scenarios, $\varepsilon$ is complex and affects the actual data-generating processes in a non-additive way, and $cor(Y(1), Y(0)\mid X)$ can be potentially very different from $1$. 

 \begin{figure}[h!]
\centering
\includegraphics[width=\textwidth]{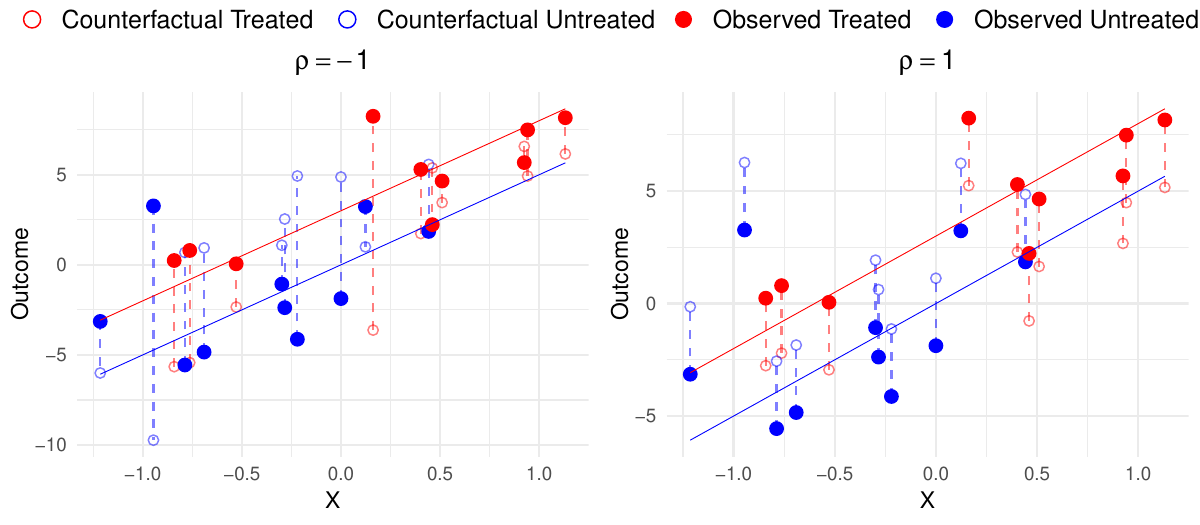}
\caption{Difference between perfectly negatively dependent ($\rho = -1$) and perfectly positively dependent ($\rho = 1$) potential outcomes. In both cases, the observed variables are the same. However, much larger prediction intervals for ITEs are needed when $\rho = -1$, while $\rho = 1$ leads to the narrowest possible intervals. The solid lines represent conditional expected values. }
\label{Rho_figure}
\end{figure}

\subsection{Plausible values of \texorpdfstring{$\rho$}{rho} in practice: domain knowledge, auxiliary data, and sensitivity analysis}

The cross-world correlation $\rho(x)$ is not identifiable from standard observed data, so reasoning about it must rely on domain knowledge, auxiliary data, or both. We therefore recommend sensitivity analysis over a plausible range of $\rho$ values, while noting that auxiliary covariates or domain knowledge may sometimes justify narrower ranges or informative lower bounds. In the following we present practical and transparent ways to assess plausible values of $\rho(x)$, illustrated with data examples, and offers a guideline for sensitivity analysis.

\subsubsection{Domain knowledge reasoning via simplified latent-structure}
\label{subsubsection2.21}

A useful starting point is to consider simple settings that clarify what kinds of cross-world dependence are scientifically plausible. We do not encourage such examples for estimation; rather, they provide an interpretable way to think about the sign and magnitude of dependence between $Y(1)$ and $Y(0)$.

A natural benchmark is the constant-treatment-effect setting: if
$Y(1)=Y(0)+f(X)$
for some function $f$, then $\rho(x)=1$ for all $x$. This includes Fisher's sharp null \citep{Fisher1935} as the special case $f(\cdot)\equiv 0$. While constant treatment effects are typically too restrictive to be taken literally, this suggests that when we expect the ITE $Y_i(1)-Y_i(0)$ to be small relative to the baseline heterogeneity in outcomes, assuming a high cross-world correlation is plausible and natural.

A related and often useful way to reason about $\rho(x)$ is through a simplified latent-structure model. Consider the decomposition
\begin{equation}\label{eq_hidden_covariate}
    Y(1) = \mu_1({X}) + \underbrace{H + \tilde\varepsilon_1}_{ \varepsilon_1},\quad
Y(0) =  \mu_0({X})  + \underbrace{H + \tilde\varepsilon_0}_{\varepsilon_0},\quad
\tilde\varepsilon_0 \indep \tilde\varepsilon_1,
\end{equation}
where $H\indep ({X},T)$ represents an unobserved factor that affects both potential outcomes in a similar way, while $\tilde\varepsilon_1$ and $\tilde\varepsilon_0$ capture treatment-specific residual variation. Under the additional assumptions that $H \indep \tilde\varepsilon_1,\tilde\varepsilon_0$, the cross-world parameter $\rho(x)$ satisfies
\[
\rho(x)
=
\frac{\operatorname{Var}(H)}
{\sqrt{\operatorname{Var}(H)+\operatorname{Var}(\tilde\varepsilon_1)}
 \sqrt{\operatorname{Var}(H)+\operatorname{Var}(\tilde\varepsilon_0)}}
\ge 0.
\]
This example captures a simple and useful idea: when a substantial part of unexplained variation is driven by latent components that affect both treatment states similarly, the two potential outcomes tend to be positively dependent. It is therefore natural to ask \textit{what fraction of the residual outcome variation is plausibly due to common latent drivers}, such as underlying health status, baseline ability, or stable individual features. The answer provides a meaningful starting point for choosing plausible values of $\rho(x)$.

The decomposition in \eqref{eq_hidden_covariate} is only a simplified tool for domain-based reasoning. In practice, latent factors may enter nonadditively and interact with treatment, which can weaken or even reverse the dependence between $Y(1)$ and $Y(0)$. We do not claim that large positive cross-world dependence is universal, rather that it is a natural default in settings where stable latent traits affect both potential outcomes similarly or rank preservation assumption is reasonable. The next subsections support this view through auxiliary-data lower bounds, benchmark datasets with near-paired outcomes, and motivate sensitivity analysis over plausible values of $\rho$. 

\subsubsection{Estimating lower bounds on $\rho$ via auxiliary data}
\label{section_rho_with_Z}

In many applications, one observes a collection of \textbf{auxiliary covariates} $Z$ in some subset of data, in addition to the covariates $X$ used for ITE modeling. For example, in a clinical setting, a subset of patients may have rich auxiliary measurements such as detailed medical history, lab panels, or genetic markers, whereas for most new incoming patients only a small set of routine covariates $X$ (e.g., age, sex, and a few basic vitals) is available at decision time. The presence of $Z$ allows us to construct informative, data-driven \emph{bounds} on $\rho(x)$.

By the law of total covariance,\vspace{-0.5cm}
\begin{align}
\mathrm{Cov}\{Y(0),Y(1)\mid X=x\}
&=
\overbrace{\mathrm{Cov}\{\mu_0(x,Z),\mu_1(x,Z)\mid X=x\}}^{=:A(x)}
\notag\\
&\quad+
\underbrace{\mathbb{E}\!\left[\mathrm{Cov}\{Y(0),Y(1)\mid X=x,Z\}\,\middle|\, X=x\right]}_{=:B(x)},
\label{eq:cov_decomp_lower}
\end{align}
where $\mu_t(x,z):=\mathbb{E}[Y(t)\mid X=x,Z=z]$ for $t\in\{0,1\}$.

The term $A(x)$ is identified from the observed data under standard causal assumptions, and captures the portion of cross-world dependence explained through $Z$ above the baseline covariates $X = x$.  The term $B(x)$ is non-identifiable, but it can be bounded using Cauchy--Schwarz inequality, which gives us an identifiable lower bound for $\rho(x)$:
\begin{lemma}\label{lemma_rho_L}
   Assume that the relevant conditional second moments exist. Denote  $\sigma_t(x):=\mathrm{Var}\{Y(t)\mid X=x\}$, and $\sigma_t(x,z):=\mathrm{Var}\{Y(t)\mid X=x,Z=z\}$, $t=0,1$. Then $$B(x)\geq - B_L(x), \qquad \text{where }\qquad B_L(x):=\mathbb{E}\!\left[\sqrt{\sigma_0(x,Z)\,\sigma_1(x,Z)}\,\middle|\, X=x\right],$$ and consequently \begin{align}
\rho(x)\ \ge\ \tilde\rho_L(x)
:= \max\!\left\{
-1,\,
\frac{A(x)-B_L(x)}{\sqrt{\sigma_0(x)\sigma_1(x)}}
\right\}.
\label{eq:rho_lower}
\end{align}
 Additionally, if we assume $\mathrm{Cov}\{Y(0),Y(1)\mid X,Z\}\ge 0$, then the bound tightens to
\begin{equation}\label{eq_rho_L}\rho(x)\ge \rho_L(x):= A(x)/ \sqrt{\sigma_0(x)\sigma_1(x)}.\end{equation}
\end{lemma}

\textbf{Direct estimation of} $\rho_L(x)$. Constructing estimators $\widehat{\rho}_L(x)$ and
$\widehat{\tilde\rho}_L(x)$ is straightforward; see Appendix~\hyperref[first_equation_in_appendix_A]{A}
for details, consistency guarantees, and simulations. In practice, performance hinges on two
factors: (i) the auxiliary covariates $Z$ must explain a non-negligible portion of the outcome
variation, otherwise the resulting bound is overly conservative; and (ii) the nuisance regressions
$\widehat{\mu}_0(x,z)$ and $\widehat{\mu}_1(x,z)$ must be estimated accurately, which can be difficult
in small samples, especially when $Z$ is high-dimensional. When either condition fails, the resulting
lower bound (or a lower confidence limit based on $\widehat{\rho}_L$) may become too conservative.

\paragraph{Imputation-based estimation of $\rho$. }
In Appendix~\hyperref[first_equation_in_appendix_A]{A}, we complement the bound-based approach with the imputation-based estimator $\widehat\rho_{\mathrm{imp}}(x)$, which replaces the missing counterfactual outcome $\widehat Y_i(1-T_i)$ for each unit by a prediction based on $(X,Z)$ (done e.g. by matching). The cross-world correlation is then estimated as the conditional correlation between $\hat{Y}(0),\hat{Y}(1)$ of these pseudo-pairs given $X=x$ (implemented, e.g., by binning or kernel smoothing). While this estimator is generally not guaranteed to produce a lower bound on $\rho(x)$ without additional assumptions on the imputation error, it can often be a practically useful choice. 

\subsubsection{Values of $\rho$ in many real-world datasets.}

We illustrate these ideas on several widely used benchmark datasets; full details are deferred to Appendix~\ref{appendix_simulations_rho}. First, the \textbf{Twins dataset} \citep{Twins_dataset} provides a quasi-observational setting in which paired potential outcomes can be approximated by comparing outcomes within twin pairs. For strata defined by covariates $(X_i,X_j)=(x_i,x_j)$, we compute the empirical Pearson correlation between $Y(1)$ and $Y(0)$ within each stratum. For typical covariate pairs $(i,j)$, we find (see Appendix~\ref{appendix_twins_rho} for details)
\[
\rho(x_i,x_j)
=
\operatorname{cor}\bigl(Y(1),Y(0)\mid X_i=x_i, X_j=x_j\bigr)
\approx 0.6 \pm 0.1.
\]
In a more realistic setting in which only the factual outcome $Y=Y(T)$ is observed, we applying the auxiliary-data approach from Section~\ref{section_rho_with_Z}, with $(X_i,X_j)$ treated as observed covariates and the remaining variables used as auxiliary covariates. This yields typical values around $\widehat\rho_L(x_i,x_j)\approx 0.4$, while the fully assumption-free bound gives $\widehat{\tilde\rho}_L(x_i,x_j)\approx -0.1$.

Second, we consider \textbf{30 cross-over studies} \citep{Senn2002}, in which each unit receives both treatments $T=0$ and $T=1$ in different periods. Under the standard assumption that period and carryover effects are negligible, the two observed outcomes can be viewed as a close proxy for the two potential outcomes of the same unit, allowing a direct paired estimate of the cross-world correlation. The dataset compilation of \cite{Schutz2020ReplicateBE} contains 30 such studies; Appendix~\ref{appendix_Cross_over_studies} reports the estimated values of $\rho$ for each. Overall, we find $\rho>0.5$ in 25 out of 30 datasets, and $\rho>0.33$ in 29 out of 30 datasets, with lowest $\rho\approx 0.17\pm 0.2$. 

Finally, we consider the \textbf{IHDP dataset} \citep{Bayes_Hill}. Although the original data-generating mechanism uses independent noise terms for the two potential outcomes, conditioning only on a reduced covariate set can still induce positive cross-world correlation, with typical values around $\rho(x)\approx 0.35$. Taken together, these examples suggest that positive cross-world dependence is common in benchmark settings, and that richer covariates or auxiliary variables can substantially sharpen practically relevant lower bounds on $\rho(x)$.

\subsubsection{Practitioner recipe}

In practice, we recommend treating $\rho(x)$ as a sensitivity parameter and combining sensitivity analysis with any available domain knowledge or auxiliary-data-based information.

\begin{enumerate}
\item \textbf{Start with some plausible range.} Choose a range of $\rho$ values that is scientifically reasonable for the application. In many settings, it is natural to focus on nonnegative values when the two potential outcomes are expected to share substantial common drivers; in other settings, a wider range may be appropriate. When a conservative analysis is desired, one may also report intervals corresponding to smaller values of $\rho$, including worst-case choices such as $\rho=-1$, or, when a constant treatment effect is expected, one may report intervals corresponding to $\rho=1$.
\item \textbf{Reason with domain knowledge to calibrate the range.} Simple latent-structure considerations, such as those in Section~\ref{subsubsection2.21}, can help assess whether high positive dependence is plausible, whether values near zero are more appropriate, or whether negative dependence should be considered.
\item \textbf{Use auxiliary covariates when available.} When richer covariates $Z$ are available and nuisance estimation is feasible, compute $\widehat{\tilde\rho}_L(x)$ and, when scientifically appropriate, $\widehat\rho_L(x)$. These provide data-driven bounds that can help narrow the sensitivity range.
\item \textbf{Report intervals across representative values of $\rho$.} Rather than selecting a single default value, report intervals for several representative choices of $\rho$ within the chosen range. This shows directly how the assumed cross-world dependence affects interval width and helps readers interpret the substantive consequences of different assumptions.
\end{enumerate}

\color{black}

\section{Constructing Prediction Intervals for ITE under Cross-World Assumption }
\label{section4}

Our objective is to construct the smallest possible prediction set \( C_{ITE} \) that satisfies marginal coverage \eqref{marginal_coverage_definition}, or ideally, conditional coverage \eqref{def_cond_coverage}. If the individual treatment effects \( Y_i(1) - Y_i(0) \) were observable, we can construct the prediction intervals as in a classical regression setting, as described in Section~\ref{section_UQ_preliminaries}. 

\subsection{Uncertainty quantification and prediction intervals in classical regression}
\label{section_UQ_preliminaries}
In a classical regression setting, we observe a random sample \( (X_i, Y_i) \overset{i.i.d}{\sim} P_X \times P_{Y \mid X} \), for \( i = 1, \dots, n \), and aim to construct a prediction set \( C(X) \subseteq \text{supp}(Y) \) satisfying marginal coverage or conditional coverage, defined respectively as: 
\begin{equation*}
 \mathbb{P}\big(Y_{n+1} \in C(X_{n+1})\big) \geq 1 - \alpha, \quad \text{ or }    \quad \mathbb{P}\big(Y_{n+1} \in C(X_{n+1}) \mid X_{n+1} = x\big) \geq 1 - \alpha,
\end{equation*}
where \( \alpha \in (0,1) \) (e.g., \( \alpha = 0.1 \)) and \( x\in\text{supp}(X) \) lies in the support of the covariate distribution. While conditional coverage is more informative, marginal coverage can be achieved in a finite-sample, model-free setting. In contrast, \cite{barber2020limitsdistributionfreeconditionalpredictive} show that achieving conditional coverage generally requires strong model assumptions or relies on asymptotic guarantees.

Conformal inference \citep{vovk_2005_algorithmic_learning, vovk_2009_online_predictive, vovk_2012_conditional_validity, vovk_2015_cross_conformal, angelopoulos2024theoreticalfoundationsconformalprediction} provides a framework for producing valid prediction sets with guaranteed finite-sample marginal coverage without relying on strong assumptions about the joint distribution of the data. 
In order to construct $C$, data $\mathcal{D} = \{(X_i, Y_i), i=1, \dots, n\}$ are split into train and calibration subsets $\mathcal{D}_{train}, \mathcal{D}_{calib}$. On the training data, a first estimate of $\tilde{C}$ is constructed, and then we use data from $\mathcal{D}_{calib}$ to calibrate $\tilde{C}$ such that marginal coverage is satisfied. A well-known instance is Conformalized Quantile Regression (CQR, \cite{Romano_2019_Conformalized_Quantile}), which leverages conditional quantile estimates to form $\tilde{C}$, yielding asymptotically valid conditional coverage.

The strong marginal validity guarantees of conformal methods rely on the exchangeability assumption.  While this assumption is typically reasonable in randomized control trials, it is often violated in observational studies due to covariate shift between treated and untreated individuals. If we can produce conditionally valid prediction intervals, the covariate shift is not an issue since conditional coverage implies marginal coverage for any distribution of $X$. However, guarantees of conditional validity are very scarce \citep{gibbs_conditional_conformal}. 

To tackle this challenge, \cite{Tibshirani_2019b_Conformal_Prediction} extended the traditional conformal prediction framework by introducing weighted conformal prediction. This method guarantees correct marginal coverage when a known likelihood ratio between \(X\mid T=1\) and \(X\mid T=0\) is available. Subsequently, \cite{Lei_2021_Conformal_Inference} demonstrated that even when this likelihood ratio (or equivalently, the propensity score \(\pi(x)\)) is estimated, the framework still achieves asymptotically valid marginal coverage, accompanied by strong empirical performance.

\subsection{\texorpdfstring{$CW(\rho)$}{D} prediction intervals }

While a construction of a prediction set \( C_{ITE} \) would be straightforward if the individual treatment effects \( Y_i(1) - Y_i(0) \) were observable, we must work with factual outcomes alone. As a starting point, we construct prediction intervals $C_1$ and $C_0$ for the treated and control groups separately:  
\begin{equation}
    \label{separate_def}
    \mathbb{P}\big(Y(1) \in C_{1}(X)\big) \geq 1 - \alpha, \quad \mathbb{P}\big(Y(0) \in C_{0}(X)\big) \geq 1 - \alpha,
\end{equation}
which can be guaranteed if any conformal method is used. Suppose their form: 
\begin{equation}
\begin{aligned}
        C_1(x) &= [\hat{\mu}_1(x) -l_1(x), \hat{\mu}_1(x) +u_1(x) ], \\
        C_{0}(x) &= [\hat{\mu}_0(x) -l_0(x), \hat{\mu}_0(x) +u_0(x) ],
\end{aligned}
\label{C_1_and_C_0_definitions}
\end{equation}
where $\hat{\mu}_t$ is an estimate of $\mu_t$ and $l_t, u_t\geq 0$ are the (lower and upper) widths of prediction intervals for $Y(t), t=0,1$. We combine \( C_1(X) \) and \( C_0(X) \) into a prediction set \( C_{\text{ITE}}(X) \) for the individual treatment effect as follows. 

\begin{definition}[$CW(\rho)$ intervals]
Let $\rho\in [-1, 1]$. Consider the individual prediction intervals for $Y(1), Y(0)$ in the form \eqref{C_1_and_C_0_definitions}. Denote an estimate of CATE as $ \hat{\tau}(x)$ (possibly $\hat{\tau}(x) = \hat{\mu}_1(x) - \hat{\mu}_0(x)$). We define $CW(\rho)$ intervals as follows: 
\begin{equation}
\begin{aligned}   \label{D_rho_intervals_definition}
   CW(\rho)\text{ intervals:}\quad\quad &C_{ITE}(x)=\left[ \hat{\tau}(x) - D_\rho\big(l_1(x), u_0(x)\big), \quad 
    \hat{\tau}(x) + D_\rho\big(l_0(x), u_1(x)\big) \right], \\
    & \text{where}\footnotemark\,\,\,\, D_\rho(x,y) = \sqrt{x^2 + y^2 - 2\rho xy}.
\end{aligned}
\end{equation}
\footnotetext{Given that we did not find a similar concept in the literature, we refer to $D_\rho$ as the Correlation-Adjusted Euclidean Distance. Note that $D_0$ is classical Euclidean distance and $D_{-1}(x,y) = x + y$. Furthermore, $D_\rho$ is monotonic in the sense that for $\rho_1 \geq \rho_2$, it holds that $D_{\rho_1}(x,y) \leq D_{\rho_2}(x,y)$. }
The choice of our $CW(\rho)$ prediction intervals is motivated by the following theorem. 
\end{definition}
\begin{theorem}[Motivation and optimality under a perfect (asymptotic) scenario]\label{motivation_theorem}
Let $x\in\mathcal{X}$ and $\rho = \operatorname{cor}\big(Y(0), Y(1) \mid X = x\big) \in[-1, 1]$. Assume a perfect scenario:  $\big(Y(1), Y(0)\big)\mid X=x$ is Gaussian,  $\hat{\mu}_t(x) =  \mu_t(x)$ and suppose that we found conditionally valid prediction intervals:
\begin{align*}
 &\mathbb{P}\big(Y(t) \leq \hat{\mu}_t(x) + u_t(x)\mid X=x\big)  = 0.95, \quad  \mathbb{P}\big(Y(t) \geq \hat{\mu}_t(x) - l_t(x)  \mid X=x\big)=0.95, \,\,\,\, t=0,1.
\end{align*}
Then,  $CW(\rho)$ prediction intervals \eqref{D_rho_intervals_definition} are optimal in a sense that it is the smallest set satisfying:
\begin{align}\tag{\ref{def_cond_coverage}}
 \mathbb{P}\big(Y(1) - Y(0)\in C_{ITE}(X)\mid X=x\big)= 0.9.
\end{align}
\end{theorem}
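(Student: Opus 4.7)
The plan is to exploit the bivariate Gaussianity to reduce everything to a one-dimensional Gaussian calculation for $D := Y(1) - Y(0) \mid X = x$. Under bivariate normality of $(Y(1),Y(0))\mid X=x$, the variable $D$ is univariate Gaussian with mean $\tau(x)$ and variance $\sigma^2 = \sigma_1^2 + \sigma_0^2 - 2\rho\sigma_1\sigma_0$, where $\sigma_t^2 := \operatorname{var}(Y(t) \mid X=x)$. This identity is the only probabilistic ingredient the proof really needs.

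First I would unpack the marginal widths. Because each $Y(t)\mid X=x$ is itself Gaussian and $\hat\mu_t(x) = \mu_t(x)$, the one-sided $95\%$ coverage equalities stated in the theorem uniquely determine $l_t(x) = u_t(x) = z\sigma_t$ with $z := \Phi^{-1}(0.95)$. Substituting into the definition of $D_\rho$, a direct calculation gives
\begin{equation*}
D_\rho\big(l_1(x), u_0(x)\big) = z\sqrt{\sigma_1^2 + \sigma_0^2 - 2\rho\sigma_1\sigma_0} = z\sigma,
\end{equation*}
and symmetrically $D_\rho(l_0(x), u_1(x)) = z\sigma$. Combined with $\hat\tau(x) = \tau(x)$, the $CW(\rho)$ set therefore collapses to the symmetric interval $[\tau(x) - z\sigma,\ \tau(x) + z\sigma]$, whose coverage under the law of $D$ is $\Phi(z) - \Phi(-z) = 0.9$, establishing the coverage claim.

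For the optimality statement -- smallest Borel set with coverage exactly $0.9$ -- I would invoke the classical highest-density-region argument: since the density of $D$ is unimodal and symmetric about $\tau(x)$, each level set $\{y : f_D(y) \geq c\}$ is a symmetric interval around $\tau(x)$, and a short Neyman-Pearson-style swap shows that any other Borel set of the same probability mass has strictly larger Lebesgue measure. Matching mass $0.9$ then identifies the optimal set as exactly $[\tau(x) - z\sigma,\ \tau(x) + z\sigma]$, which is the $CW(\rho)$ interval derived above. I do not anticipate a substantive obstacle: the formula for $D_\rho$ is designed precisely so that one-sided $95\%$ marginal widths combine into the correct two-sided $90\%$ symmetric width for the Gaussian difference, and the remaining optimality step is the standard HPD lemma for unimodal symmetric densities.
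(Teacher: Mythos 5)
Your proposal is correct and follows essentially the same route as the paper's proof: identify $l_t(x)=u_t(x)=\Phi^{-1}(0.95)\,\sigma_t(x)$ from the Gaussian marginals, use the variance formula for the correlated Gaussian difference so that $D_\rho$ reproduces the $0.95$ quantile of $Y(1)-Y(0)\mid X=x$, and conclude optimality from the fact that the symmetric central interval is the shortest Gaussian prediction set. The only (minor) difference is that you spell out the highest-density-region/Neyman--Pearson argument, and thereby get minimality over all Borel sets, where the paper simply cites the well-known shortest-interval property of the Gaussian.
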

The result in Theorem~\ref{motivation_theorem} holds under an idealized asymptotic scenario. In practice, estimation error or non-Gaussianity can lead to suboptimal performance, while additional assumptions can lead us to a different optimal prediction intervals. Nonetheless, the theorem provides valuable motivation: it shows that under ideal conditions, the $CW(\rho)$  construction yields the smallest valid prediction set for the individual treatment effect. In the next section, we provide theoretical guarantees under much weaker assumptions in finite samples.

\section{Theoretical guarantees}
\label{section5}
We show that 
$CW(\rho)$ intervals achieve valid finite-sample coverage for ITEs, provided a lower bound on $\rho$ and under mild, testable assumptions. The proofs are technically involved, drawing on tools from copula theory and sharp probabilistic inequalities to handle the challenges of reasoning under cross-world uncertainty. Formal proofs are provided in Appendix~\ref{appendix_proofs}.

In order to state any theoretical guarantees, we make a few simplifications:  1) we only consider upper prediction interval for the ease of notation. Generalizing this for both-sided intervals is straightforward. 2) Correlation $\rho(x) = \rho\in [-1,1]$ exists (we can generally work with a copula instead, but this would add a few technical and notational challenges). 3) We always use  $\hat{\tau}(x)= \hat\mu_1(x) - \hat\mu_0(x)$. 4) For simplicity of notation, we consider $1-\alpha = 0.9$. 

 In what follows, we distinguish between 'ground truth' $\rho$ and 'used hyperparameter' $\tilde \rho$ which can potentially differ. Note that for any $\tilde\rho \leq \rho$, $CW(\tilde\rho)$ bounds are wider than $CW(\rho)$ bounds. 
\begin{theorem}[Gaussian case]
\label{theorem_general_gaussian}
Let $\rho=\operatorname{cor}\big(Y(0), Y(1) \mid X = x\big)\in[-1,1]$. Suppose that we found marginally valid intervals:
\begin{equation}
\label{eq_in_theorem_coverage}
\mathbb{P}\big(Y(0) \geq \hat{\mu}_0(X) - l_0(X)\big) \geq 0.9, \quad \mathbb{P}\big(Y(1) \leq \hat{\mu}_1(X) + u_1(X)\big) \geq 0.9.
\end{equation}
Then, for any $\tilde\rho\leq \rho$, the $CW(\tilde\rho)$-bounds are marginally valid:
\begin{equation}
\label{eq_in_theorem_validity}
\mathbb{P}\bigg(Y(1)- Y(0) \leq  \hat{\tau}(X)  + D_{\tilde\rho}\big(l_0(X), u_1(X)\big)\bigg) \geq 0.9,
\end{equation}
provided that the following conditions hold:  
\begin{itemize}
    \item \textbf{Gaussianity}: $\big(Y(0), Y(1)\big)\mid X=x$ is normally distributed for all $x\in\mathcal{X}$, 
    \item \textbf{Coverage-variance relation}:   $\mathbb{E}[w(X)\phi_0(X)+\big (1-w(X)\big) \phi_1(X)] \geq 0.9,$ where  
   \[
\begin{aligned}
\phi_0(x) &:= \mathbb{P}\big(Y(0) \geq \hat{\mu}_0(X) -l_0(X) \mid X=x\big),\phi_1(x) := \mathbb{P}\big(Y(1) \leq \hat{\mu}_1(X) + u_1(X) \mid X=x\big), \\
w(x) &:= (1-\rho^2) \frac{\operatorname{Var}\big(Y(0)\mid X=x\big)}{\operatorname{Var}\big(Y(1)-Y(0)\mid X=x\big)} \in [0,1].
\end{aligned}
\]
    This condition holds, for example, under \textit{Homoskedasticity}, if  
    $\frac{\operatorname{Var}\big(Y(0)\mid X=x\big)}{\operatorname{Var}\big(Y(1)\mid X=x\big)} = \text{constant},$ if $\rho = \pm 1$, if $\phi_t(x)\geq 0.9$ or if $\phi_0(x)=\phi_1(x)$ for all $x\in\mathcal{X}$. 
    \item \textbf{$\boldsymbol{\hat\tau(x)}$ is not too biased} (this is relaxed in Section~\ref{section_CI}): $$\tau(x) - \hat\tau(x) = e_0(x) +e_1(x) \leq  D_{\tilde\rho}\big(l_0(x), u_1(x)\big) - D_{\rho}\big(l_0(x)-e_0(x), u_1(x)-e_1(x)\big),$$ where $e_1(x) =   \mu_1(x) - \hat  \mu_1(x) , e_0(x) =   -\mu_0(x) + \hat  \mu_0(x) $. This holds when $\rho = -1$.
\end{itemize}
\end{theorem}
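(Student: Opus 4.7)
The plan is to condition on $X=x$ and exploit the bivariate Gaussianity of $(Y(0),Y(1))\mid X=x$. Under the Gaussianity hypothesis, $Y(1)-Y(0)\mid X=x$ is normal with mean $\tau(x)$ and variance $s^2(x):=\sigma_0^2+\sigma_1^2-2\rho\sigma_0\sigma_1$ where $\sigma_t^2:=\operatorname{Var}(Y(t)\mid X=x)$, so the conditional coverage has the closed form
\[
A(x):=\mathbb{P}\bigl(Y(1)-Y(0)\leq\hat\tau(x)+D_{\tilde\rho}(l_0(x),u_1(x))\mid X=x\bigr)=\Phi\!\left(\frac{\hat\tau(x)-\tau(x)+D_{\tilde\rho}(l_0(x),u_1(x))}{s(x)}\right).
\]
Translating the marginal coverage conditions into standard-Gaussian quantiles, set $q_t(x):=\Phi^{-1}(\phi_t(x))$ and observe that under Gaussianity $l_0(x)-e_0(x)=\sigma_0(x)\,q_0(x)$ and $u_1(x)-e_1(x)=\sigma_1(x)\,q_1(x)$. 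Combining the bias hypothesis with the monotonicity $D_{\tilde\rho}\geq D_\rho$ for $\tilde\rho\leq\rho$ (which applies once $q_0,q_1\geq 0$), the numerator inside $\Phi$ is lower bounded by $D_\rho(\sigma_0 q_0,\sigma_1 q_1)$, yielding
\[
A(x)\;\geq\;\Phi\!\left(D_\rho(\sigma_0(x)q_0(x),\sigma_1(x)q_1(x))/s(x)\right).
\]

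The principal obstacle, and the technical heart of the argument, is the pointwise inequality
\[
\Phi\!\left(D_\rho(\sigma_0 q_0,\sigma_1 q_1)/s\right)\;\geq\;w(x)\phi_0(x)+(1-w(x))\phi_1(x),\qquad w=\frac{(1-\rho^2)\sigma_0^2}{s^2}.
\]
The natural tool is the orthogonal Gaussian decomposition $V_1-V_0=(\sigma_1-\rho\sigma_0)Z_1-\sigma_0\sqrt{1-\rho^2}\,Z_R$ into independent standard normals $Z_1\indep Z_R$, where $V_t=Y(t)-\mu_t(x)$. This identifies $(1-w)s^2=(\sigma_1-\rho\sigma_0)^2$ and $w s^2=\sigma_0^2(1-\rho^2)$ as the variance shares of $V_1-V_0$ carried by the component along $V_1$ and by the residual orthogonal to $V_1$, which is precisely why those weights appear. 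The two sides coincide on the diagonal $q_0=q_1$, but their first-order behavior in the off-diagonal direction does not match termwise, so a sharp Gaussian estimate tuned to the decomposition (a copula-based bound or an Anderson/Slepian-type Gaussian comparison are the natural candidates) is required to close the gap; this is where the bulk of the technical effort concentrates.

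Granting the pointwise inequality, the conclusion follows by integration over $X$ via the tower property and the coverage-variance hypothesis:
\[
\mathbb{P}\bigl(Y(1)-Y(0)\leq\hat\tau(X)+D_{\tilde\rho}(l_0(X),u_1(X))\bigr)=\mathbb{E}[A(X)]\;\geq\;\mathbb{E}\bigl[w(X)\phi_0(X)+(1-w(X))\phi_1(X)\bigr]\;\geq\;0.9.
\]
The listed sufficient scenarios (homoskedasticity or constant variance ratio, $\rho=\pm 1$, $\phi_t(x)\geq 0.9$ pointwise, or $\phi_0\equiv\phi_1$) are then checked directly: each collapses $w(X)\phi_0(X)+(1-w(X))\phi_1(X)$ into a quantity bounded below by $0.9$ using only the marginal coverages $\mathbb{E}[\phi_t]\geq 0.9$, so the argument closes without appealing to the joint law of $(\phi_0(X),\phi_1(X))$.
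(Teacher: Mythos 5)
Your setup is sound and mirrors the paper's strategy: conditioning on $X=x$, writing the conditional coverage as $\Phi$ of a standardized quantity, translating the marginal coverages into $l_0(x)-e_0(x)=\sigma_0(x)\Phi^{-1}(\phi_0(x))$ and $u_1(x)-e_1(x)=\sigma_1(x)\Phi^{-1}(\phi_1(x))$, invoking the bias condition to lower-bound the numerator by $D_\rho\big(\sigma_0 q_0,\sigma_1 q_1\big)$ (note the monotonicity of $D$ in $\rho$ is not even needed here --- the bias hypothesis gives this directly), and finishing by the tower property together with the coverage--variance condition and the routine verification of the listed special cases. Your orthogonal decomposition of $V_1-V_0$ also correctly explains where the weight $w(x)=(1-\rho^2)\sigma_0^2/s^2$ comes from, and it is essentially the same regression decomposition the paper uses.

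However, there is a genuine gap exactly where you locate ``the technical heart'': the pointwise inequality
\[
\Phi\!\left(D_\rho\big(\sigma_0 q_0,\sigma_1 q_1\big)/s\right)\;\geq\; w\,\phi_0+(1-w)\,\phi_1
\]
is asserted but not proved, and the tools you name (Anderson- or Slepian-type Gaussian comparisons, an unspecified copula bound) are not the right instruments --- Slepian compares Gaussian vectors with different correlation matrices and Anderson concerns symmetric convex sets; neither delivers a bound of a quantile by a convex combination of coverage levels. The paper closes this step with a dedicated lemma (Lemma~\ref{lemma_subGauss}) whose proof has two ingredients you are missing: (i) the convexity of $h(p)=\big(\Phi^{-1}(p)\big)^2$, which via Jensen yields the independent case, since then $\operatorname{Quantile}_{\phi_2}$ of the sum equals $\sqrt{h(\phi_2)(\sigma_X^2+\sigma_Y^2)}\le\sqrt{h(\phi_0)\sigma_X^2+h(\phi_1)\sigma_Y^2}$ for $\phi_2$ the appropriately weighted average; and (ii) a reduction of the correlated case to the independent one through exactly your regression decomposition, followed by an explicit algebraic verification (splitting on the sign of the regression coefficient and taking WLOG $\phi_1\ge\phi_0$) that $\sqrt{\sigma_Z^2h(\phi_0)+\sigma_\varepsilon^2h(\phi_1)}\le\sqrt{q_X^2+q_Y^2+2\rho q_Xq_Y}$. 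Without supplying something equivalent to (i)--(ii), your argument does not establish the theorem; everything else in your proposal is scaffolding around this unproven core.
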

Theorem~\ref{theorem_general_gaussian} demonstrates that $CW(\rho)$ intervals remain marginally valid in the Gaussian case, provided two assumptions hold. The coverage-variance relation protects us against the atypical scenario in which the error in conditional coverage is large in regions with high (relative) variance of the potential outcomes, and small in regions with low (relative) variance. Although this assumption is less interpretable, it is typically satisfied in common settings such as when $\frac{\operatorname{Var}\big(Y(0)\mid X=x\big)}{\operatorname{Var}\big(Y(1)\mid X=x\big)}$ is constant. Regarding the last assumption, we require control over the bias of the CATE estimator $\hat{\tau}$. We relax this assumption and provide more details in Section~\ref{section_CI}. 

The assumption of Gaussianity is not essential; in the following, we present results for important special cases $\rho = \pm 1$ without relying on any distributional assumptions. For brevity, we only present here the main message; full statement can be found in Appendix~\ref{section_appendix_theoretical_guarantees}.

\begin{theoremcustommmm}[Insights when $\rho=\pm 1$; Full Theorems in Appendix~\ref{section_appendix_theoretical_guarantees}]
\label{theorem_summary}
Suppose we have marginally valid prediction bounds \eqref{eq_in_theorem_coverage}. Then, the $CW(\rho)$ prediction intervals satisfy marginal coverage  \eqref{eq_in_theorem_validity} under either of the following assumptions:
\begin{itemize}
    \item \textbf{Shortest-width scenario:} \( \rho = 1 \) and \( \hat{\tau}(X) \) is not too biased.
    \item \textbf{Largest-width scenario:} \( \rho = -1 \) and at least one additional condition holds, such as:
    \begin{itemize}
        \item conditional coverage or $95\%$ marginal coverage of bounds in \eqref{eq_in_theorem_coverage},
        \item \( Y(t) \mid X \) has convex quantile tails, as is the case for most standard distributions including Gaussian, Uniform, and Gamma.
    \end{itemize}
    Additionally, the $CW(\rho)$ prediction intervals are tight without additional assumptions. 
\end{itemize}
\end{theoremcustommmm}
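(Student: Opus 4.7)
The plan is to use the fact that $\rho = \pm 1$ pins the conditional distribution of $\bigl(Y(0), Y(1)\bigr) \mid X$ to a Fr\'echet--Hoeffding extreme, so that both potential outcomes become a deterministic function of a single uniform. Concretely, $\rho = 1$ gives the comonotone coupling $Y(t) = F_{Y(t) \mid X}^{-1}(U)$, while $\rho = -1$ gives the antimonotone coupling $Y(1) = F_{Y(1) \mid X}^{-1}(1 - U)$ and $Y(0) = F_{Y(0) \mid X}^{-1}(U)$, with $U \sim \operatorname{Unif}[0,1]$. A short algebraic check gives $D_{1}(l_0, u_1) = \lvert u_1 - l_0 \rvert$ and $D_{-1}(l_0, u_1) = u_1 + l_0$, matching exactly the comonotone (difference) and antimonotone (sum) arithmetic, and reducing each regime to a one-dimensional computation in $U$.

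For the shortest-width scenario ($\rho = 1$), I would write $Y(1) - Y(0) = F_{Y(1) \mid X}^{-1}(U) - F_{Y(0) \mid X}^{-1}(U)$, translate each marginal-coverage event in \eqref{eq_in_theorem_coverage} into an event of the form $\{U \in I_t(X)\}$, and intersect. The bias-control assumption, which is tailored so that $\tau(x) - \hat{\tau}(x) \leq \lvert u_1 - l_0 \rvert - \lvert u_1 - l_0 - (e_1 - e_0) \rvert$, is then exactly what is needed to absorb the estimation error of $\hat{\tau}$ into the remaining slack of the interval, yielding the $\hat{\tau}(X) + \lvert u_1 - l_0 \rvert$ upper bound at the prescribed $90\%$ level.

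For the largest-width scenario ($\rho = -1$), the bias error cancels inside the additive form $D_{-1} = l_0 + u_1$, so coverage reduces to $\mathbb{P}(A \cap B) \geq 0.9$ with $A = \{Y(1) \leq \hat{\mu}_1(X) + u_1(X)\}$ and $B = \{Y(0) \geq \hat{\mu}_0(X) - l_0(X)\}$. Under the antimonotone coupling, a direct computation yields the sharp identity $\mathbb{P}(A \cap B \mid X) = \min\bigl(q_0(X), q_1(X)\bigr)$, where $q_t(X)$ is the conditional coverage at $X$---strictly better than the generic Fr\'echet lower bound $\max\bigl(0, q_0 + q_1 - 1\bigr)$. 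This plugs into each sub-condition cleanly: conditional coverage gives $\min(q_0, q_1) \geq 0.9$ pointwise, and $95\%$ marginal coverage gives $\mathbb{E}[\min(q_0, q_1)] \geq \mathbb{E}[q_0] + \mathbb{E}[q_1] - 1 \geq 0.9$ via the elementary inequality $\min(a, b) \geq a + b - 1$ valid on $[0,1]^2$. The convex-quantile-tail case is where I expect the main obstacle, since only a $90\%$ margin is available; here I would exploit convexity of the tail of $F_{Y(t) \mid X}^{-1}$ via a Jensen-type rearrangement to rule out the pathological imbalance in which severe under-coverage at some $X$ is merely compensated by over-coverage elsewhere in the marginal average.

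Finally, for the tightness claim, I would exhibit a minimal Gaussian instance---for example a degenerate $X$ with equal conditional variances, $\hat{\mu}_t = \mu_t$, and exact $90\%$ one-sided quantile bounds---in which the $CW(-1)$ upper endpoint coincides with the true $90\%$ quantile of $Y(1) - Y(0)$; any strict shrinkage then violates coverage, so no universally valid $\rho = -1$ construction can improve on $CW(-1)$.
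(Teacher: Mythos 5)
Your treatment of the largest-width case ($\rho=-1$) is essentially the paper's argument: correlation $-1$ forces an affine antimonotone relation $Y(1)=a_x-b_xY(0)$, so the two one-sided coverage events become nested lower-bound events in $Y(0)$ and $\mathbb{P}(\cdot\mid X=x)\ge\min\big(q_0(x),q_1(x)\big)$, which is exactly the content of Lemma~\ref{lemma_about_rho_pm1}; the conditional-coverage and $95\%$-marginal sub-cases then follow as you say (the paper invokes Bonferroni where you use $\min(a,b)\ge a+b-1$, which is the same thing). Your Gaussian tightness instance is also correct and in fact simpler than the paper's discrete construction: with $\rho=-1$, equal conditional variances and exact one-sided $90\%$ quantile bounds, the $CW(-1)$ endpoint $\hat\tau+l_0+u_1$ coincides with the $90\%$ quantile of $Y(1)-Y(0)$, so any $\epsilon$-shrinkage drops coverage below $0.9$.

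There are, however, two genuine gaps. First, the shortest-width case: your plan to ``translate each marginal-coverage event into $\{U\in I_t(X)\}$ and intersect'' does not work. Under the comonotone coupling the events are $\{U\le\phi_1(X)\}$ and $\{U\ge 1-\phi_0(X)\}$, and with only the marginal guarantees $\mathbb{E}[\phi_t(X)]\ge 0.9$ their intersection is only guaranteed probability $0.8$, not $0.9$. The paper's mechanism (Theorem~\ref{rho_1_theorem}) is different: correlation $+1$ forces $Y(1)=a_x+b_xY(0)$ with $b_x>0$ (Cauchy--Schwarz, i.e.\ strictly more than the Fr\'echet coupling — and you need this, since $F_{Y(1)\mid X}^{-1}(U)-F_{Y(0)\mid X}^{-1}(U)$ need not be monotone in $U$ under comonotonicity alone), so $Y(1)-Y(0)$ is affine in $Y(0)$ alone and the ITE coverage event is a single one-sided event in one outcome; the bias condition — which necessarily involves the conditional sd ratio $b_x$ and splits into the cases $b_x\le 1$ and $b_x\ge 1$ — is what guarantees that this event \emph{contains} the marginal coverage event of the appropriate arm, whose probability is $\ge 0.9$. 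Your $b_x$-free bias inequality plus intersection does not deliver this reduction. Second, the convex-quantile-tail sub-case of $\rho=-1$ is only announced as ``the main obstacle'' but never carried out; the paper closes it with a Jensen-type bound (Lemma~\ref{lemma_number_46}) giving $\mathbb{P}(\cdot\mid X=x)\ge\frac{\phi_0(x)+b\,\phi_1(x)}{1+b}$, a convex combination of the two conditional coverages whose integral is $\ge 0.9$; some such quantitative step is needed, since marginal $90\%$ coverage alone is not enough there.
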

Theorems~\ref{Theorem_sum} and \ref{rho_1_theorem} show that strong finite-sample coverage guarantees can be achieved by imposing mild assumptions on the quality of either estimation of \( \hat{\tau}(X) \) or of the coverage in \eqref{eq_in_theorem_coverage}. All conditions in Theorems~\ref{Theorem_sum} and \ref{rho_1_theorem} are testable (except for the cross-world assumption on $\rho$).

\subsection{ When $\hat\tau(x)$ is strongly biased: introducing \texorpdfstring{$CW^{+CI}(\rho)$}{D+CI} prediction intervals}
\label{section_CI}

Theorems~\ref{theorem_general_gaussian} and \ref{theorem_summary} assume some control over the bias of the CATE estimator $\hat{\tau}$. This is in contrast with typical model-free guarantees from conformal inference. Unfortunately, some assumptions on $\hat{\tau}$ are necessary especially when $\rho=1$. If $Y_i(1) = \tau(X_i) + Y_i(0)$, optimal prediction intervals are simply confidence intervals for $\tau$, since $\mathbb{P}\big(Y(1)- Y(0) \in C_{ITE}(X)\big)  =  \mathbb{P}\big(\tau(X)\in C_{ITE}(X)\big)$; guarantees that conformal inference typically can not provide. 

To relax the assumptions on the bias of the CATE estimator, we enlarge $CW(\rho)$ prediction intervals by incorporating confidence intervals for $\hat{\tau}$ estimated, for instance, via bootstrapping.

\begin{definition}
Let $\rho\in [-1, 1]$. Consider the prediction intervals for $Y(1), Y(0)$ in the form~\eqref{C_1_and_C_0_definitions}. Denote an estimate of CATE as $ \hat{\tau}(x)$ with confidence intervals  $CI(x) = [\hat{\tau}(x)-r_l(x),\hat{\tau}(x)+r_u(x)]$. We define $CW^{+CI}(\rho)$ intervals as follows: 
\begin{equation*}
 \begin{aligned} 
CW^{+CI}(\rho):\,   C_{ITE}(x) =
   \Big[ \hat{\tau}(x)-cr_l(x) - D_\rho\left(l_1(x), u_0(x)\right), \,\,\, 
    \hat{\tau}_u(x) + cr_u(x)  + D_\rho\left(l_0(x), u_1(x)\right) \Big],
\end{aligned}
\end{equation*}
where $c\in[0,1]$ is a hyperparameter of our choice.  
\end{definition}
In simple terms, we extend  $CW(\rho)$ prediction intervals by adding c-times confidence intervals for $c\in[0,1]$. If $c=0$ then $CW^{+CI}(\rho) = CW(\rho)$. 

The bias assumption on $\hat{\tau}$ from Theorems~\ref{theorem_general_gaussian} and \ref{rho_1_theorem} is more strict when $\rho$ is large, while no assumption is required when $\rho = -1$. This motivates the following choice of hyperparameter: $c = \frac{1+\rho}{2}$. For $\rho = -1$ we have  $CW^{+CI}(\rho) = CW(\rho)$, while for $\rho = 1$ we add full confidence intervals to the original $CW(\rho)$. Empirically, we found that a choice of hyperparameter $c = (\frac{1+\rho}{2})^2$ works better in practice. 

\begin{theorem}[Conditional validity of \( CW^{+CI}(\rho)\) intervals when \( \rho = 1 \)]
\label{theorem_CI}
Let \( x \in \mathcal{X} \) and assume $\rho(x) = \operatorname{cor}\big(Y(0), Y(1)\mid X = x\big) = 1$.  Suppose that for some \( \beta \in (0,1) \), our confidence intervals satisfy \( \mathbb{P}\big(\tau(x) \leq \hat{\tau}(x) + r(x)\big) \geq 1 - \beta \). For any $\tilde\rho\in[-1,1]$ holds:

\begin{itemize}
    \item If \( \operatorname{var}(Y(0)\mid X = x) = \operatorname{var}(Y(1) \mid X = x) \),   then \( CW^{+CI}(\rho) \) prediction intervals are conditionally valid:
\[
\mathbb{P}\big(Y(1)-Y(0) \leq \hat{\tau}(X) + r(X) + D_{\tilde\rho}\big(l_0(X), u_1(X)\big) \mid X = x\big) \geq 1 - \beta.
\]
\item  If \( \operatorname{var}\big(Y(0) \mid X = x\big) \neq \operatorname{var}\big(Y(1) \mid X = x\big) \), and the following conditions hold:
    \begin{itemize}
        \item $    \mathbb{P}\big(Y(0) \geq \hat{\mu}_0(X) -l_0(X) \mid X = x\big) = 1 - \alpha=\mathbb{P}\big(Y(1) \leq \hat{\mu}_1(X) + u_1(X) \mid X = x\big),$
        \item $    \mathbb{P}\left(\mu_0(x) \geq \hat{\mu}_0(x) - \tfrac{1}{2} r(x)\right) \geq 1 - \tfrac{1}{2} \beta$, and $\mathbb{P}\left(\mu_1(x) \leq \hat{\mu}_1(x) + \tfrac{1}{2} r(x)\right) \geq 1 - \tfrac{1}{2} \beta$,
    \end{itemize}
    then \( CW^{+CI}(\rho) \) prediction intervals are conditionally valid:
 \[
    \mathbb{P}\bigg(Y(1) -Y(0)\leq \hat{\tau}(X) + r(X) + D_{\tilde{\rho}}\big(l_0(X), u_1(X)\big) \mid X = x\bigg) \geq (1 - \alpha)(1 - \beta).
    \]
\end{itemize}
\end{theorem}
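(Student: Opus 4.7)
The plan is to leverage the structural consequence of $\rho(x)=1$: by the equality case of Cauchy--Schwarz, conditional on $X=x$ one can write $Y(t)=\mu_t(x)+\sigma_t(x)\,Z$ for $t\in\{0,1\}$, where $Z$ has mean zero, unit variance, and CDF $F_Z$ given $X=x$, and is independent of the training data $\mathcal{T}$ that determines $\hat\mu_0,\hat\mu_1,l_0,u_1,r$. Hence $Y(1)-Y(0)=\tau(x)+(\sigma_1(x)-\sigma_0(x))\,Z$. Set $e_1:=\mu_1-\hat\mu_1$, $e_0:=\hat\mu_0-\mu_0$ so that $\tau-\hat\tau=e_0+e_1$, and let $B:=B_0\cap B_1$; a union bound on the two $\tfrac12\beta$-hypotheses yields $\mathbb{P}(B)\geq 1-\beta$, and on $B$ we have $\tau-\hat\tau\leq r$.

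\textbf{Case 1 ($\sigma_0(x)=\sigma_1(x)$).} The noise term $(\sigma_1-\sigma_0)Z$ vanishes, so $Y(1)-Y(0)=\tau(x)$ is deterministic given $X=x$. On $B$, $\tau\leq\hat\tau+r\leq\hat\tau+r+D_{\tilde\rho}(l_0,u_1)$ since $D_{\tilde\rho}\geq 0$, giving conditional coverage at least $\mathbb{P}(B)\geq 1-\beta$.

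\textbf{Case 2 ($\sigma_0(x)\neq\sigma_1(x)$).} WLOG $\sigma_1>\sigma_0$ (else replace $Z$ by $-Z$ and swap the roles of $(l_0,A_0,B_0)$ and $(u_1,A_1,B_1)$). The target event $E$ is $\{Z\leq T(\mathcal{T},x)\}$ with
\[
T(\mathcal{T},x)\;=\;\frac{(\hat\tau-\tau)+r+D_{\tilde\rho}(l_0,u_1)}{\sigma_1-\sigma_0}.
\]
Using $Z\perp\mathcal{T}\mid X=x$, Tonelli yields $\mathbb{P}(E\mid X=x)=\mathbb{E}_{\mathcal{T}}[F_Z(T)]$. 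Setting $s_1(\mathcal{T},x):=(u_1-e_1)/\sigma_1$, the marginal-coverage hypothesis on $u_1$ reads $\mathbb{E}_{\mathcal{T}}[F_Z(s_1)]=1-\alpha$. The key deterministic step is the pointwise comparison $T\geq s_1$ on $B$: there the bias term $(\hat\tau-\tau)+r$ is nonnegative, so the task reduces to $D_{\tilde\rho}(l_0,u_1)/(\sigma_1-\sigma_0)\geq s_1$. This is where the $\rho=1$ structure enters, via the scale matching $l_0\sim\sigma_0 q$, $u_1\sim\sigma_1 q$ dictated by the exact marginal coverage, combined with the monotonicity $D_{\tilde\rho}\geq D_1=|u_1-l_0|$ noted in the paper.

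The main obstacle is upgrading the pointwise domination $T\geq s_1$ on $B$ into the product bound $(1-\alpha)(1-\beta)$: a naive Tonelli estimate $\mathbb{E}_{\mathcal{T}}[F_Z(T)]\geq\mathbb{E}_{\mathcal{T}}[F_Z(s_1)\mathbf{1}_B]\geq(1-\alpha)-\mathbb{P}(B^c)$ only delivers $1-\alpha-\beta<(1-\alpha)(1-\beta)$. I would close this gap via a conditional-factorization argument: given $\mathcal{T}$, $A_1=\{Z\leq s_1\}$ depends solely on the $\mathcal{T}$-independent $Z$ while $\mathbf{1}_B$ is a function of $\mathcal{T}$ alone, so $\mathbb{P}(A_1\cap B\mid X=x)=\mathbb{E}_{\mathcal{T}}[F_Z(s_1)\mathbf{1}_B]$; under the exact-conditional-coverage interpretation of the hypothesis on $l_0,u_1$ (equivalently $F_Z(s_1)=1-\alpha$ $\mathcal{T}$-a.s.) this factors as $(1-\alpha)\mathbb{P}(B)\geq(1-\alpha)(1-\beta)$. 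Without exact conditional coverage, a positive-association (FKG-style) argument between the ``good'' $\mathcal{T}$-events $A_1$ and $B$ would serve as a substitute.
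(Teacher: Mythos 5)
Your proposal follows essentially the same route as the paper (Cauchy--Schwarz degeneracy of $\rho=1$, the two variance cases, reduction to a one-dimensional threshold comparison, and an independence factorization to get $(1-\alpha)(1-\beta)$), but the central deterministic step of Case 2 has a genuine gap. You claim that since $(\hat\tau-\tau)+r\ge 0$ on $B$, ``the task reduces to'' $D_{\tilde\rho}(l_0,u_1)/(\sigma_1-\sigma_0)\ge s_1$. That reduced inequality is false in general: exact coverage pins the widths only after removing the centering errors, $u_1-e_1=\sigma_1 q$ and $l_0-e_0=\sigma_0 q$ (not $u_1\sim\sigma_1 q$, $l_0\sim\sigma_0 q$ as you wrote), so $D_1(l_0,u_1)=|u_1-l_0|=|(e_1-e_0)+(\sigma_1-\sigma_0)q|$, which can be much smaller than $(\sigma_1-\sigma_0)s_1=(\sigma_1-\sigma_0)q$ whenever $e_1<e_0$. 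Concretely, with Gaussian $Z$, $\alpha=0.1$, $\sigma_0=1$, $\sigma_1=2$, $\hat\mu_0=\mu_0+1$, $\hat\mu_1=\mu_1$, one gets $l_0\approx 2.28$, $u_1\approx 2.56$, hence $D_{1}(l_0,u_1)\approx 0.28$ while $(\sigma_1-\sigma_0)s_1\approx 1.28$, so your sufficient condition fails for $\tilde\rho=1$ even though the theorem's conclusion holds (there $r\ge 2$ on $B_0$ and $\hat\tau-\tau=-1$, giving $T\ge 1.28=s_1$). The slack you discarded is exactly what makes the argument work: keeping $(\hat\tau-\tau)+r=r-e_0-e_1$ and using $e_0\le r/2$ on $B_0$ gives $r-e_0-e_1+|u_1-l_0|\ \ge\ r-2e_0+(\sigma_1-\sigma_0)q\ \ge\ (\sigma_1-\sigma_0)q$, i.e.\ $T\ge s_1$ on $B$. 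This is precisely the role played by the paper's algebraic fact ($r\ge -2\max\{e_0,e_1\}$ implies $|(e_0+u_0)-(e_1+u_1)|\le e_0+e_1+r+|u_0-u_1|$), so the fix is close at hand but currently missing from your argument; as written, the key inequality you propose to prove would fail.

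On the point you flag yourself: the upgrade from $1-\alpha-\beta$ to $(1-\alpha)(1-\beta)$ is handled in the paper exactly the way your ``exact conditional coverage'' reading does it --- the coverage hypotheses are used pointwise in the fitted quantities (the paper's identity $e_0+u_0=(e_1+u_1)/b_x$ already presupposes this), after which the product follows from independence of the new unit's noise and the training randomness. So that part of your plan is acceptable and matches the paper's level of rigor; the FKG-style fallback is neither needed nor obviously workable. One further caveat, which your argument shares with the paper's: identifying both calibrated widths with a single scalar $q$ tacitly equates $-q_\alpha$ with $q_{1-\alpha}$ for the noise $Z$ (a symmetry-type condition), since the $Y(0)$ condition calibrates a lower tail and the $Y(1)$ condition an upper tail; this should be made explicit.
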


While Theorem~\ref{theorem_CI} focuses on conditional coverage, a similar result can be also stated for marginal coverage by integrating over $X$. We have shown that \( CW(\rho) \) and \( CW^{+CI}(\rho) \) intervals enjoy strong theoretical guarantees; in the following section, we also demonstrate their strong empirical performance.

\section{Numerical experiments }
\label{section_simulations}

In this section, we present a series of numerical experiments to evaluate the empirical performance of our proposed prediction intervals. Since the evaluation is possible only when both potential outcomes are known, we focus on synthetic datasets and semi-synthetic (real-data-inspired) benchmarks with known ground truth. We examine a variety of scenarios by varying the dimensionality, noise distribution, and the cross-world correlation parameter $\rho$, and benchmark our approach against several state-of-the-art methods.

A user-friendly implementation of our methods in both \texttt{R} and \texttt{Python}, along with scripts to reproduce all experiments, is available at: \url{https://github.com/jurobodik/ITE_prediction_cross_world.git}.

\subsection{Details of the experiments}
\label{Section5.1}

\textbf{Synthetic datasets: }We consider a variety of data-generating processes similar as in the related literature; full details are provided in Appendix~\ref{appendix_simulations_data}. The \textbf{synthetic} datasets feature non-constant propensity scores and randomly generated CATE functions based on random smooth polynomials. These settings allow us to vary the dimensionality \(d = \dim(\mathbf{X})\), the cross-world correlation parameter \(\rho\), the distribution of the noise, and the signal-to-noise ratio. In addition, we include the \textbf{IHDP} dataset, which uses real covariates and simulated counterfactual outcomes. Figure~\ref{fig_rho_d_1} illustrates the synthetic setup for \(d = 1\), along with our proposed \(CW(\rho)\) prediction intervals.

\textbf{Implementation of our methods:} To implement the proposed  $CW(\rho)$ and $CW^{+CI}(\rho)$ intervals, we use the classical version of CQR to construct prediction intervals \eqref{C_1_and_C_0_definitions}. While several alternatives, such as CLEAR \citep{azizi2025clearcalibratedlearningepistemic} or PCS-UQ \citep{agarwal2025pcs, Yu_2024_Vertidical_Data_Science_Book} often achieve better performance, we opt for the simplicity of CQR. This choice also aligns with prior work, such as \cite{Lei_2021_Conformal_Inference}, which adopts a similar baseline. For quantile regression in low-dimensional case ($d\leq 5$), we apply qGAM \citep{qGAM}, while in higher dimensions ($d > 5$), we use quantile random forests \citep{Meinshausen_2006_Quantile_Regression}. This choice is mainly motivated by the computational complexity;  qGAM has often superior accuracy in low-dimensional settings, however random forests has much better computational complexity. Alternative is to use TabPFN \citep{hollmann2023tabpfntransformersolvessmall}. 

\textbf{Baseline methods: } In Appendix~\ref{lit_review}, we provide literature review of existing methods that estimate prediction intervals for ITEs. In our experiments, we consider two methods from \cite{Lei_2021_Conformal_Inference}: the Exact Naïve and the Inexact Nested approaches. We do not include the Exact Nested method from the same source, as its performance is comparable to or worse than the Naïve method. Additionally, we consider the Conformal Monte Carlo (CMC) method introduced by \cite{jonkers2024conformalconvolutionmontecarlo} and doubly-robust (DR) conformal meta-learner from \cite{Alaa_2023_Conformal_Meta_learners}. In all cases, we use random forests as the base estimators.
 \begin{figure}[t]
    \centering
    \includegraphics[width=1\linewidth]{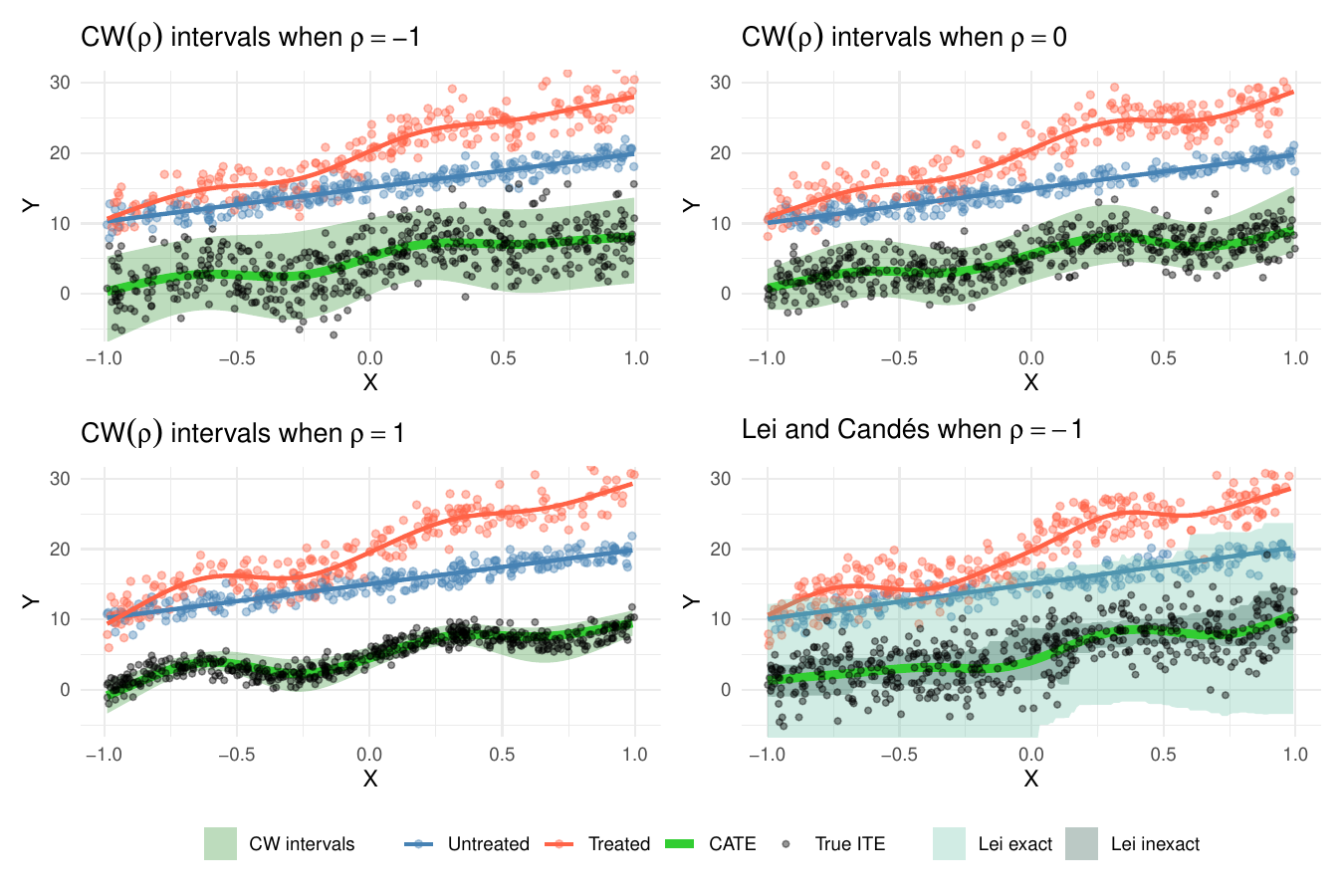}
    \caption{Synthetic $d=1$ dataset described in Section~\ref{Section5.1}. The variance of ITE drastically decreases with larger $\rho$, and our $CW(\rho)$ intervals seem to have correct coverage. We also include exact and inexact prediction method from \cite{Lei_2021_Conformal_Inference} as a comparison. Note that the observed values (treated and untreated units) are the same on all four figures.   }
    \label{fig_rho_d_1}
\end{figure}

\subsection{Evaluation of the experiments}
Figure~\ref{fig_sim1} shows results on synthetic data. Results on the IHDP dataset are similar to the 15-dimensional synthetic case and can be found in Appendix~\ref{appendix_simulations_additional_experiments}. Additional experiments with non-Gaussian noise are also located in this appendix. 

\textbf{Here are the key takeaways:}
\begin{itemize}
    \item All competing methods exhibit significant under-coverage when $\rho \leq -0.5$ and over-coverage when $\rho \geq 0.5$ (with the exception of the Naive method of course). In contrast, our proposed $CW(\rho)$ and $CW^{+CI}(\rho)$ intervals maintain valid coverage in all scenarios when the true value (lower bound) of $\rho$ is known. 
    
\item $CW^{+CI}(\rho)$ achieves more than 3× better performance in terms of Coverage--Width loss (a combination of interval width and coverage error; see Appendix~\ref{appendix_simulations_additional_experiments}) when $\rho = \pm 1$, while still outperforming other methods even when $\rho$ is slightly misspecified or equal to zero. When $\rho = 1$, our $CW^{+CI}(\rho)$ intervals achieve correct coverage with intervals more than 2-3× shorter than competitors. Among the other methods, CMC~\citep{jonkers2024conformalconvolutionmontecarlo} performs best when $\rho \leq 0$, and DR~\citep{Alaa_2023_Conformal_Meta_learners} performs best when $\rho > 0$. This is summarized in Figure~\ref{figure_result_coverage_width_loss} in Appendix~\ref{appendix_simulations_additional_experiments}.

    \item In low-dimensional settings, where the treatment effect $\tau$ is well-estimated, the $CW(\rho)$ intervals achieve a perfect coverage. However, in higher-dimensional settings ($d = 15$) and when $\rho \geq 0.5$, the $CW(\rho)$ intervals tend to under-cover. This aligns with our theory from Section~\ref{section5}, as higher dimensionality (with fixed sample size) leads to less accurate estimation of $\tau$. In all cases, even if $\rho \geq 0.5$, the $CW^{+CI}(\rho)$ prediction intervals achieve almost perfect coverage across all situations, with only a slight increase in width (approximately 1-8\% wider) compared to the $CW(\rho)$ intervals. \textbf{In conclusion, we recommend using the $\boldsymbol{CW^{+CI}(\rho)}$ prediction intervals}. However, the main drawback of the $CW^{+CI}(\rho)$ intervals is their higher computational cost due to bootstrapping, whereas the $CW(\rho)$ intervals are exceptionally fast.

       \item Our methodology appears to be quite robust to deviations from Gaussianity: in the copula-based simulations presented in Appendix~\ref{appendix_simulations_additional_experiments}, the \( CW(\rho) \) intervals consistently achieve near-nominal coverage across a  range of copula and marginal distribution combinations.   
    \item In real-world scenarios, the true value of \(\rho\) is typically only lower-bounded. In our simulations, we constructed prediction intervals using a misspecified \(\rho_{\text{used}} = \rho_{\text{true}} - 0.25\). As expected, this led to slightly increased coverage and width (by approximately 10–20\%). Nevertheless, these intervals still strongly outperformed methods that do not incorporate \(\rho\), highlighting the benefit of leveraging even partial information about cross-world dependence.

\end{itemize}

Overall, knowledge of (lower bound of) $\rho$ allowed us to construct prediction intervals that are both valid and efficient, adapting to the underlying structure of the data. Even when $\rho$ is not known exactly, our method enables the use of $\rho$ as a sensitivity parameter—allowing practitioners to explore hypothetical scenarios and assess robustness to unobserved covariates. This type of analysis is not possible with existing methods.

\begin{figure}
    \centering
    \hspace*{-0.06\linewidth}\includegraphics[width=1.07\linewidth]{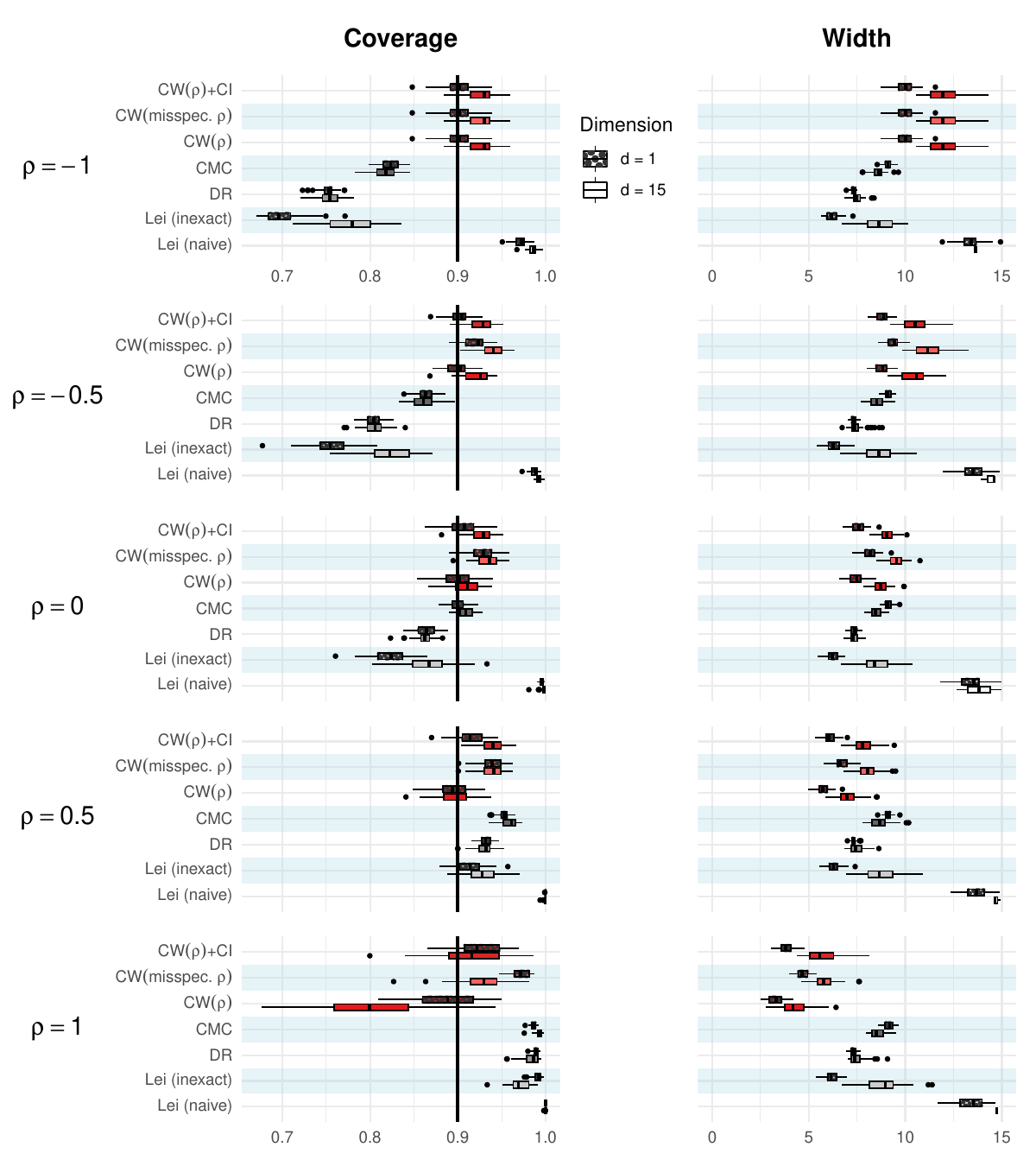}
    \caption{Box plots comparing the coverage and average interval width of different estimation methods across cross-world correlation levels $\rho \in \{-1, -0.5, 0, 0.5, 1\}$, based on 50 synthetic datasets with $n = 2000$ and $d=1$ and $d=15$. Methods include Lei’s exact and inexact intervals \citep{Lei_2021_Conformal_Inference}, the conformal DR-meta-learner \citep{Alaa_2023_Conformal_Meta_learners}, CMC \citep{jonkers2024conformalconvolutionmontecarlo}, and three versions of our $CW(\rho)$ intervals: the baseline $CW(\rho)$, a misspecified variant $CW(\text{misspec.}\rho)$ using $\tilde\rho = \rho - 0.25$ (capped at $-1$), and one with added conformal confidence intervals ($CW^{+CI}(\rho)$). The vertical line marks the desired $90\%$ coverage.}
    \label{fig_sim1}
\end{figure}

\section{Conclusion and Future Research}

Uncertainty quantification for ITEs is critical in high-stakes applications, especially when relying on black-box machine learning models. Despite its importance, consistent estimation of ITEs has remained fundamentally elusive due to their inherent unidentifiability. In this work, we introduce a novel cross-world assumption that enables consistent estimation of ITEs. Leveraging expert-informed reasoning about a lower bound on the cross-world parameter $\rho$, we propose a new class of prediction intervals, denoted $CW(\rho)$, and establish their asymptotic optimality alongside finite-sample guarantees. Our method substantially outperforms state-of-the-art baselines both theoretically and empirically, particularly when $\rho \approx \pm 1$. While $\rho$ is itself unidentifiable, we present several scenarios and domain-specific arguments where a conservative lower bound can be credibly assumed.

A promising direction for future research is to integrate the cross-world parameter $\rho$ into meta-learning frameworks and pseudo-outcome estimation strategies. Another compelling extension is to generalize $\rho$ to settings with continuous treatments \citep{schroder2024conformalpredictioncausaleffects, bodik_extrapolating_extreme}, to causal inference in time series \citep{NEURIPS2023_47f2fad8, bodikpasche2024grangercausalityextremes}, or combining it with Fréchet–Hoeffding-type copula bounds \citep{zhang2025individualtreatmenteffectprediction}.
Finally, although we often assumed a Gaussian copula structure between potential outcomes, alternative copulas, such as extreme value copulas, may be more suitable in applications focusing on tail behavior or rare events \citep{Deuber}. 

A key remaining challenge is the lack of empirical datasets in which the cross-world parameter $\rho$ can be directly evaluated or credibly inferred. We encourage further efforts to collect or annotate data that support expert-informed reasoning about $\rho$.

\section*{Conflict of interest and data availability}
The open-source implementation of the methods discussed in this manuscript together with the data used can be found in the supplementary package or at \url{https://github.com/jurobodik/ITE_prediction_cross_world.git}.

The authors declare that they have no known competing financial interests or personal relationships that could have appeared to influence the work reported in this paper.

\section*{Acknowledgments}
JB was supported by the Hasler-Stiftung foundation under grant number 24009. YH acknowledges support from the Citadel Securities PhD Fellowship. The authors would like to thank Valérie Chavez-Demoulin for continuous support.

\clearpage 
\bibliography{bibliography}
\newpage
\appendix
\pagenumbering{arabic}
\renewcommand*{\thepage}{A\arabic{page}}
\begin{center}
\Large \textbf{Appendix}
\end{center}
This appendix is organized as follows: \begin{itemize}
\item Appendix~\ref{appendix_rho_estimation} discusses the estimation procedure for $\rho_L(x)$ introduced in Section~\ref{section_2}, omitted from the main text for clarity. 
\item Appendix~\ref{section_appendix_theoretical_guarantees} extends the theoretical results presented in Section~\ref{section5}, omitted from the main text for clarity. 
\item Appendix~\ref{lit_review} provides a literature review of existing methods for estimating prediction intervals for individual treatment effects. \item Appendix~\ref{appendix_simulations_data} describes in detail the datasets used in our experiments in Section~\ref{section_simulations}. 
\item Appendix~\ref{appendix_simulations_additional_experiments} presents additional experiments, including results on the semi-synthetic IHDP dataset and a setting with non-Gaussian noise.
\item Appendix~\ref{appendix_proofs} contains all theoretical results and their corresponding proofs. \end{itemize}

\section{ Data-driven estimation of $\rho$ when auxiliary covariates are available}
\label{appendix_rho_estimation}

In Section~\ref{section_rho_with_Z}, we showed that $\rho(x)$ is lower bounded by either
$\tilde\rho_L(x)$ or $\rho_L(x)$, where
\[
\tilde\rho_L(x)=\frac{A(x)-B_L(x)}{\sigma_0(x)\sigma_1(x)},\qquad
\rho_L(x)=\frac{A(x)}{\sigma_0(x)\sigma_1(x)}.
\]
Here, we discuss estimation of this identifiable lower bound.

\subsection{Direct estimation of $\rho_L(x)$}
First, estimate the conditional means: \(
\widehat{\mu}_0(x,z),
\widehat{\mu}_1(x,z).
\) Here, any method can be used from classical parametric regression to flexible machine-learning estimators (e.g., random forests, gradient boosting, neural networks). Then estimate
\begin{equation}\label{first_equation_in_appendix_A}
\widehat{A}(x)=\widehat{\mathrm{Cov}}\{\widehat{\mu}_0(x,Z),\widehat{\mu}_1(x,Z)\mid X=x\},
\end{equation}
where the conditioning on $X=x$ may be implemented by binning or kernel smoothing. The marginal conditional variances $\sigma_t^2(x)$ can be estimated from each arm as
$\widehat{\sigma}_t^2(x)=\widehat{\mathrm{Var}}(Y\mid T=t,X=x)$ using standard regression or smoothing methods (ideally using the same method as for the estimation of the conditional means). Potentially, if we are not willing to assume non-negativity of $B(x)$, we estimate $\widehat{B}_L(x)=\widehat{\mathbb{E}}\!\left[\sqrt{\widehat{\sigma}_0(x,Z)\widehat{\sigma}_1(x,Z)}\,\middle|\, X=x\right]$ again by using some regression or flexible machine-learning estimator. Plugging these quantities into \eqref{eq:rho_lower} yields the data-driven lower bound $\widehat{\rho}_L(x)$.
\begin{lemma}[Idea; Full statement in Lemma~\ref{lem_appendix_rho_lower_consistency}]
If the nuisance parameters are consistently estimated, then $\widehat{\rho}_L(x)\overset{P}{\to} \rho_L(x)$ as $n\to\infty$. Consequently, for any $\epsilon>0$,
\begin{align}
\lim_{n\to\infty}\Pr\!\left(\rho(x)\ \ge\ \widehat{\rho}_L(x)-\epsilon\right)=1.
\label{eq:rho_lower_asymp_ineq}
\end{align}
\end{lemma}

\subsection{Imputation-based estimator.}
We now describe an alternative approach that estimates cross-world dependence by imputing counterfactual outcomes for the subset of units for which $Z$ is observed. The key idea is to \emph{impute the missing potential outcome} using $(X,Z)$, thereby constructing a pseudo-pair that approximates $(Y(0),Y(1))$ at the unit level.

There are many ways to impute $\widehat Y_i(1-T_i)$; see \cite{bodik2026retrospective} for details. Common options include matching \citep{abadie2006matching_imputation} and adversarial generative modeling \citep{yoon2018ganite, li2026controllable}. A simple approach is outcome-regression imputation:
\begin{align}
\widehat Y_i(1-T_i)
:=
\widehat\mu_{1-T_i}(X_i,Z_i),
\label{eq:Yhat_counterfactual_refined}
\end{align}
where $\widehat\mu_t(x,z)$ estimates $\mu_t(x,z)=\E[Y(t)\mid X=x,Z=z]$ via random forest, splines or neural networks. Another option is to impute via a CATE model:
\begin{align}
\widehat Y_i(1-T_i)
:=
Y_i + (1-2T_i)\widehat\tau(X_i,Z_i),
\label{eq:ert_refined}
\end{align}
where $\widehat\tau(x,z)$ estimates $\tau(x,z)=\E[Y(1)-Y(0)\mid X=x,Z=z]$. (Equivalently, this sets $\widehat Y_i(1)=Y_i+\widehat\tau(X_i,Z_i)$ when $T_i=0$ and $\widehat Y_i(0)=Y_i-\widehat\tau(X_i,Z_i)$ when $T_i=1$.)

Given the observed outcome and its imputed counterpart, we form the pseudo-potential-outcome pair
\begin{align}
\widetilde Y_i(1) := T_i Y_i + (1-T_i)\widehat Y_i(1),
\qquad
\widetilde Y_i(0) := (1-T_i) Y_i + T_i\widehat Y_i(0),
\label{eq:pseudo_pair_refined}
\end{align}
so that $\widetilde Y_i(T_i)=Y_i$ is observed and $\widetilde Y_i(1-T_i)$ is imputed.

We then estimate the cross-world parameter by
\begin{align}
\hat{\rho}_{\mathrm{imp}}(x)
:=
\mathrm{Cor}\{\widetilde Y(0),\widetilde Y(1)\mid X=x\},
\label{eq:cov_tilde_refined}
\end{align}
where conditioning on $X=x$ may be implemented via binning, kernel smoothing, or a regression of $\widetilde Y(0),\widetilde Y(1)$ on $X$ combined with estimates of the conditional variances.

\paragraph{Discussion.}
In contrast to the bound-based approach in \eqref{eq:rho_lower}, $\widehat\rho_{\mathrm{imp}}(x)$ is generally \emph{not} guaranteed to be a lower bound on $\rho(x)$ without further assumptions on the imputation error. This estimator effectively replaces the missing potential outcome with a proxy, so its accuracy depends on how well $(X,Z)$ predicts both $Y(0)$ and $Y(1)$ and on the stability of the imputation model across treatment arms. Nevertheless, when $(X,Z)$ is highly predictive, the pseudo-pairs in \eqref{eq:pseudo_pair_refined} can yield a stable, data-driven estimate of cross-world association that is useful for choosing a principled default value of $\rho(x)$ in the CW construction and for guiding sensitivity analyses.
In practice, we recommend using \emph{multiple} imputation strategies and aggregating the resulting estimates, e.g.,
\[
\widehat\rho_{\mathrm{imp}}^{,\mathrm{ens}}(x)=
\frac{1}{M}\sum_{m=1}^M \widehat\rho_{\mathrm{imp}}^{(m)}(x),
\]
where $\widehat\rho_{\mathrm{imp}}^{(m)}(x)$ is computed from pseudo-pairs formed using the $m$th imputation method. Different imputation procedures encode different modeling assumptions and therefore exhibit different biases and implicit ways of handling the missing counterfactual. Ensembling across several reasonable imputers can reduce method-specific artifacts and yield a more stable, data-driven default for $\rho(x)$ in the CW construction, while still allowing sensitivity analysis around the aggregated choice.

\subsection{Simulations about the estimation of $\rho(x)$}
\label{appendix_simulations_rho}

\subsubsection{Synthetic data from Section~\ref{subsubsection2.21}}

We study the tightness of the proposed lower bounds for the cross-world parameter \(\rho(x)\) under a simple randomized design. We generate $n=1000$ potential outcomes
\[
Y(0)=\beta_0X+\gamma Z_1+\varepsilon_0,\qquad
Y(1)=\beta_1X+\gamma Z_1+\varepsilon_1,\qquad \mathrm{Cor}(\varepsilon_0,\varepsilon_1)=\rho_\varepsilon,
\]
with \(X,Z_1, \varepsilon_0,\varepsilon_1\stackrel{\text{iid}}{\sim}\mathcal{N}(0,1)\) and $\beta_0, \beta_1\stackrel{\text{iid}}{\sim}{UNIF}(0.5, 1.5)$. Treatment is assigned as \(T\sim\mathrm{Bernoulli}(1/2)\) and the observed outcome is \(Y=TY(1)+(1-T)Y(0)\). 

We sweep \(\gamma\) to control the strength of shared structure across arms, and consider \(\rho_\varepsilon\in\{-0.5,0,0.5\}\). For each \((\gamma,\rho_\varepsilon)\), we estimate \(\hat\rho_L(x)\) and \(\hat{\tilde\rho}_L(x)\) using linear regression nuisance fits and kernel smoothing over \(X\). “Truth” \(\rho(x)\) is computed directly from simulated \((Y(0),Y(1))\). 

We summarize tightness by the gap \(\rho(x)-\hat\rho(x)\) averaged over an evaluation grid in \(x\). Plot~\ref{fig_rho_vs_rho} shows the mean gap as a function of \(\gamma\), with shaded bands given by the empirical 5th and 95th percentiles of the gap across $R=100$ Monte Carlo replications. 

\textbf{Conclusion: }As the predictive power of $Z$ increases, the gap between the true $\rho(x)$ and its estimate shrinks. As expected, \(\widehat{\tilde\rho}_L(x)\) is highly conservative. In contrast, when $\rho_\varepsilon<0$, \(\widehat{\rho}(x)\) can exceed $\rho(x)$ and therefore fails to be a valid lower bound.

\begin{figure}
    \centering
\includegraphics[width=0.95\linewidth]{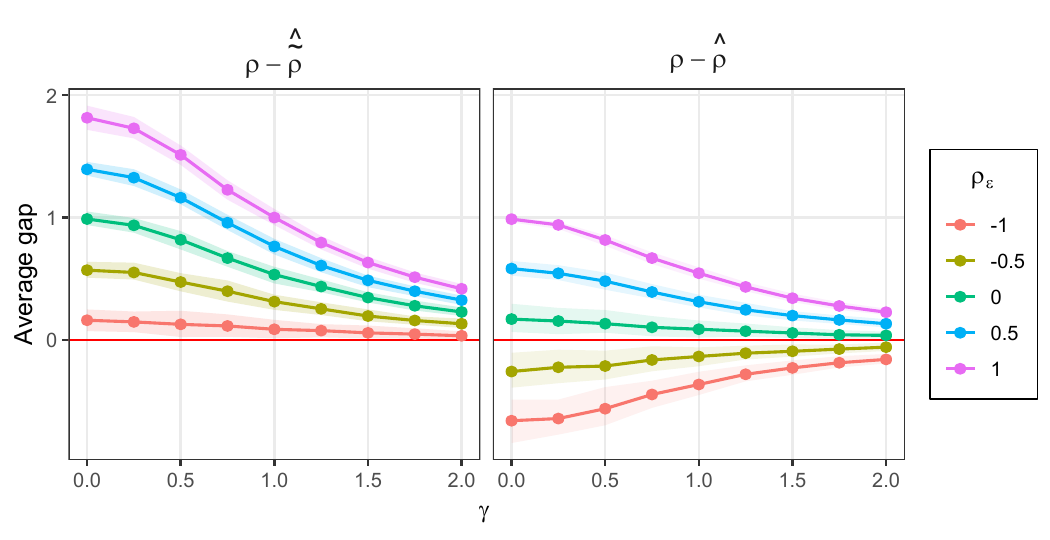}
 \caption{Average gap between the true cross-world correlation and its estimator as a function of the predictive power of the auxiliary variable $\gamma$. The left panel reports $\mathrm{GAP}=\mathbb{E}_X\!\left[\rho(X)-\widehat{\tilde\rho}(X)\right]$, while the right panel reports $\mathrm{GAP}=\mathbb{E}_X\!\left[\rho(X)-\widehat{\rho}(X)\right]$. Shaded bands are $95\%$ confidence intervals computed from repeated simulations as described in Appendix~A; the horizontal line at $0$ indicates no bias.}
    \label{fig_rho_vs_rho}
\end{figure}

\subsubsection{Twins dataset: details about $\rho$ estimation}
\label{appendix_twins_rho}

In the Twins benchmark we observe covariates $X=(X_1,\dots,X_{51})$  such as birth weight and gestational age, and both binary potential outcomes $(Y(0),Y(1))$  representing mortality within the first year of life.

For the bound estimation we split covariates into $X=(X_1,X_2)$, which we treat as the low-dimensional features at which $\rho(x)$ is evaluated, and auxiliary covariates $Z=(X_3,\dots,X_{51})$, used only to learn $\mu_t(x,z)=\mathbb{E}(Y\mid T=t,X=x,Z=z)$. We fit two arm-specific nuisance models for $\mu_t$ on the factual data in each treatment arm, using either probabilistic random forests (\texttt{ranger}, 500 trees) or a linear probability model (clamped to $[0,1]$). To reduce sparsity from conditioning on discrete covariates, we evaluate only at the three most frequent observed $(X_1,X_2)$ combinations.

For each evaluation point $x$, we compute kernel weights
$w_i(x)=\exp\!\left(-\frac{\|X_i-x\|^2}{2h_A^2}\right),$ where
$h_A=0.5$, and form predictions $\widehat\mu_0(x,Z_i)$ and $\widehat\mu_1(x,Z_i)$ by fixing $X=x$ while letting $Z$ vary across the sample. We estimate
$A(x)=\mathrm{Cov}\{\mu_0(x,Z),\mu_1(x,Z)\mid X=x\}$ by the weighted empirical covariance of these predicted probabilities. We estimate $\sigma_t(x)=\mathrm{Var}(Y\mid T=t,X=x)$ by a local weighted variance within each treatment arm using the same Gaussian kernel form (bandwidth $h_\Sigma=0.5$), and then compute
$\widehat\rho_L(x)=\frac{\widehat A(x)}{\sqrt{\widehat\sigma_0(x)\widehat\sigma_1(x)}},$
using small $\varepsilon$ stabilizers and clipping to $[-1,1]$ for numerical robustness. For the refined bound we additionally estimate
$B_L(x)=\mathbb{E}[\sqrt{\sigma_0(x,Z)\sigma_1(x,Z)}\mid X=x]$ using the binary-outcome identity
$\sigma_t(x,z)=\mu_t(x,z)\{1-\mu_t(x,z)\}$, i.e.,
\[
\widehat B_L(x)=
\frac{\sum_i w_i(x)\sqrt{\widehat\mu_0(x,Z_i)\{1-\widehat\mu_0(x,Z_i)\}\,\widehat\mu_1(x,Z_i)\{1-\widehat\mu_1(x,Z_i)\}}}{\sum_i w_i(x)},\,\,
\widehat{\tilde\rho}_L(x)=
\frac{\widehat A(x)-\widehat B_L(x)}{\sqrt{\widehat\sigma_0(x)\widehat\sigma_1(x)}}.
\]
Because Twins provides both $Y(0)$ and $Y(1)$, we also report an oracle estimate of $\rho(x)$ computed directly from $(Y(0),Y(1))$. Table~\ref{tab:twins-rho} shows a typical run, and similar qualitative behavior is observed under other choices of the pair $(X_i,X_j)$.

\begin{table}[h!]
\centering

\begin{tabular}{llrrrrr}
\hline
Pair & Strata & $n$ & $\widehat\rho_L(x_i, x_j)$ & $\widehat{\tilde\rho}_L(x_i, x_j)$ & $\widehat\rho_{\mathrm{oracle}}(x_i, x_j)$ \\
\hline
$(X_1,X_2)$ & 1 & 11630 & 0.417 & -0.108 & 0.647 \\
$(X_1,X_2)$ & 2 &   246 & 0.407 & -0.156 & 0.662 \\
$(X_1,X_2)$ & 3 &    40 & 0.463 & -0.021 & 0.661 \\
\hline
$(X_1,X_3)$ & 1 &  1233 & 0.407 & -0.130 & 0.698 \\
$(X_1,X_3)$ & 2 &   882 & 0.428 & -0.076 & 0.622 \\
$(X_1,X_3)$ & 3 &   789 & 0.353 & -0.249 & 0.599 \\
\hline
$(X_2,X_3)$ & 1 &  1215 & 0.402 & -0.143 & 0.694 \\
$(X_2,X_3)$ & 2 &   876 & 0.424 & -0.084 & 0.616 \\
$(X_2,X_3)$ & 3 &   774 & 0.349 & -0.250 & 0.596 \\
\hline
\end{tabular}
\caption{Estimates of $\rho$ on the most frequent pairs of covariates.}
\label{tab:twins-rho}
\end{table}

\subsubsection{Cross-over studies}
\label{appendix_Cross_over_studies}

\begin{table}[t]
\centering
\caption{Estimated empirical cross-world correlations across thirty cross-over datasets. ``Measured'' denotes datasets of non-simulated origin of public-domain datasets; ``Simulated'' denotes widely-used datasets that are simulated, heavily edited or derived from simulated datasets.}
\label{tab_cross_over_appendix}
\small
\begin{tabular}{clrrrr}
\toprule
Dataset number & Type & $n$ & $\rho$ & 95\% CI lower & 95\% CI upper \\
\midrule
09 & Simulated & 222 & 0.9920 & 0.9896 & 0.9939 \\
16 & Measured  & 38  & 0.9653 & 0.9338 & 0.9820 \\
21 & Measured  & 77  & 0.9072 & 0.8574 & 0.9401 \\
01 & Measured  & 77  & 0.8972 & 0.8426 & 0.9336 \\
06 & Measured  & 77  & 0.8972 & 0.8426 & 0.9336 \\
28 & Measured & 64  & 0.8853 & 0.8174 & 0.9290 \\
03 & Measured  & 76  & 0.8844 & 0.8231 & 0.9254 \\
30 & Simulated & 11  & 0.8842 & 0.6056 & 0.9697 \\
10 & Measured  & 18  & 0.8667 & 0.6716 & 0.9494 \\
25 & Simulated & 70  & 0.8611 & 0.7850 & 0.9116 \\
17 & Measured  & 19  & 0.8478 & 0.6400 & 0.9400 \\
08 & Simulated & 222 & 0.8378 & 0.7937 & 0.8732 \\
22 & Simulated & 42  & 0.8263 & 0.6976 & 0.9034 \\
24 & Measured  & 39  & 0.8261 & 0.6906 & 0.9056 \\
11 & Measured  & 37  & 0.8176 & 0.6715 & 0.9025 \\
13 & Simulated & 222 & 0.8135 & 0.7636 & 0.8537 \\
15 & Simulated & 222 & 0.8135 & 0.7636 & 0.8537 \\
27 & Simulated & 155 & 0.8123 & 0.7508 & 0.8598 \\
05 & Measured  & 26  & 0.8074 & 0.6112 & 0.9101 \\
02 & Measured  & 24  & 0.7900 & 0.5674 & 0.9050 \\
29 & Simulated & 12  & 0.7817 & 0.3768 & 0.9358 \\
23 & Measured  & 22  & 0.7193 & 0.4273 & 0.8754 \\
07 & Simulated & 360 & 0.7055 & 0.6495 & 0.7539 \\
04 & Measured  & 51  & 0.6343 & 0.4347 & 0.7745 \\
26 & Measured  & 54  & 0.5749 & 0.3630 & 0.7302 \\
18 & Simulated & 60  & 0.3600 & 0.1168 & 0.5625 \\
19 & Simulated & 60  & 0.3600 & 0.1168 & 0.5625 \\
20 & Simulated & 60  & 0.3437 & 0.0983 & 0.5496 \\
14 & Simulated & 76  & 0.3405 & 0.1246 & 0.5256 \\
12 & Simulated & 77  & 0.1748 & -0.0512 & 0.3837 \\
\midrule
Average & -- & -- & 0.7409 & -- & -- \\
\bottomrule
\end{tabular}
\end{table}

Table~\ref{tab_cross_over_appendix} reports empirical paired correlations computed from the 30 reference datasets distributed in the \texttt{replicateBE} package \citep{Schutz2020ReplicateBE, replicateBE_package}. These datasets arise from crossover studies, where subjects receive the reference formulation ($R$), the test formulation ($T$), or both across multiple periods. The collection was assembled primarily as a public benchmark for validating software used in replicate-design bioequivalence analyses.  It contains a mixture of public-domain datasets, edited variants, and simulated examples \citep{replicateBE_package}.

For example, the first dataset n.01 is a 4-period replicate bioequivalence study, in which each subject receives both the reference formulation $R$ and the test formulation $T$ across different periods. Here, the outcome is a pharmacokinetic response, denoted by \texttt{PK}. In simple words, this outcome measures how strongly the subject was exposed to the drug after receiving a given formulation, using a standard bioequivalence endpoint such as a peak concentration or overall exposure measure. Thus, $Y(0)$ denotes the outcome the subject would have under the reference treatment $R$, and $Y(1)$ denotes the outcome under the test treatment $T$.

For each dataset, we construct a subject-level pair $(Y(0), Y(1))$ by averaging repeated observations under treatment $R$ and treatment $T$ within subject whenever both are available, and then compute the direct paired empirical correlation $\hat\rho = \mathrm{cor}(Y(0), Y(1))$. Confidence intervals are obtained from the Fisher $z$-transform. Because the underlying study designs differ, including full replicate, partial replicate, and related crossover layouts, these estimates should be viewed as descriptive proxies rather than literal identifications of the cross-world correlation in our framework \citep{replicateBE_package}. Nevertheless, Table~\ref{tab_cross_over_appendix} shows that the estimated correlations are typically positive and large, which is consistent with the idea that stable latent subject characteristics may induce substantial positive dependence between outcomes under the two treatment states.

\subsubsection{High-dimensional simulations about the estimation of $\rho(x)$ }
\label{appendix_high_dim_est_rho}

We generate i.i.d.\ units with a one-dimensional observed covariate $X\sim\mathcal N(0,1)$ and high-dimensional auxiliary covariates $Z\in\mathbb R^{p_Z}$ with $Z \sim \mathcal N(0,I_{p_Z}),$ $\|a\|_2=1$, where only the first $s\le p_Z$ coordinates of $a$ are nonzero (a sparse signal; in our experiments we take $s=\min(5,p_Z)$). Potential outcomes follow
\[
Y(0)=b_0 X + \gamma a^\top Z + \varepsilon_0,\qquad
Y(1)=b_1 X + \gamma a^\top Z + \varepsilon_1,
\]
where $b_0,b_1\sim \mathrm{Unif}(0.5,1.5)$ are drawn once per replication and $(\varepsilon_0,\varepsilon_1)$ are standard Gaussian with
\[
\mathrm{cor}(\varepsilon_0,\varepsilon_1)=\rho_\varepsilon\in[-1,1].
\]
Treatment is randomized, $T\sim\mathrm{Bernoulli}(1/2)$, and the observed outcome is $Y = T Y(1) + (1-T)Y(0)$.

For each replication, we compute the ``ground truth'' function $\rho(x)=\mathrm{cor}(Y(1),Y(0)\mid X=x)$ via kernel smoothing using the full simulated pairs $(Y_i(0),Y_i(1))$:
\[
\widehat{\rho}_{\mathrm{true}}(x)
=
\frac{\widehat{\mathrm{cov}}_w\{Y(0),Y(1)\}}
{\sqrt{\widehat{\mathrm{var}}_w\{Y(0)\}\,\widehat{\mathrm{var}}_w\{Y(1)\}}},
\]
where the weights are Gaussian in $X$ with bandwidth $h_{\mathrm{truth}}$. We evaluate on a grid $x\in\{q_{0.05}(X),\dots,q_{0.95}(X)\}$. We then apply our estimators $\widehat\rho_L(x)$ and  $\widehat{\tilde\rho}_L(x)$ using either random forest or linear regression as conditional mean estimators. 

The main summaries reported for each configuration are the average gaps
\( \frac{1}{|{\cal X}_{\mathrm{eval}}|}\sum_{x\in{\cal X}_{\mathrm{eval}}}
\bigl(\rho(x)-\widehat\rho_L(x)\bigr),\) and \(
 \frac{1}{|{\cal X}_{\mathrm{eval}}|}\sum_{x\in{\cal X}_{\mathrm{eval}}}
\bigl(\rho(x)-\widehat{\tilde\rho}_L(x)\bigr).
\)

We repeat the experiment for $p_Z\in\{5,20,100\}$ with $s=\min(5,p_Z)$, sample size $n=1000$, and $R=100$ Monte Carlo replications per configuration. For each configuration, we report Monte Carlo means with pointwise uncertainty bands computed as empirical quantiles across replications with $\alpha=0.05$. These are not asymptotic confidence intervals for a population parameter, but rather Monte Carlo error bands summarizing variability across replications for fixed $(n,p_Z,\gamma,\rho_\varepsilon)$. The results are drawn in Plot~\ref{fig_high_dim_rho_est}.

\textbf{Conclusion}: Overall, the qualitative behavior is very similar to the low-dimensional setting in Appendix~\ref{appendix_rho_estimation}. Increasing $\dim(Z)$ primarily inflates the variability, as expected from the more challenging nuisance estimation problem. Importantly, even at $\dim(Z)=100$ the estimated gaps and their uncertainty bands remain stable across $\gamma$ and $\rho_\varepsilon$ and continue to match the theoretical ordering of the bounds. This suggests that the proposed procedure is reasonably robust to moderate dimensions of auxiliary covariates in this sparse-signal regime.

\begin{figure}
    \centering
\includegraphics[width=0.95\linewidth]{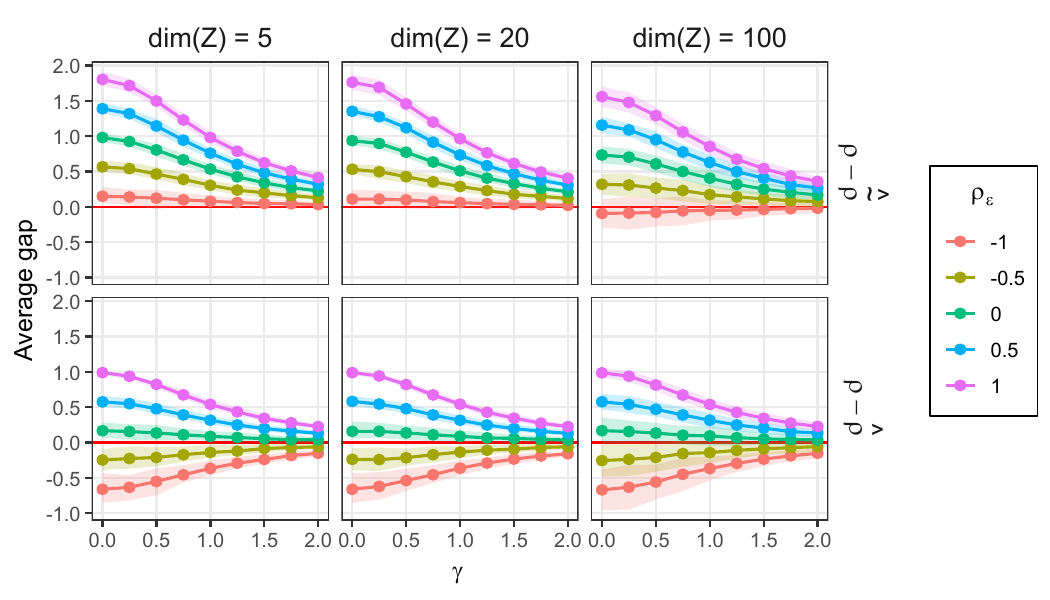}
 \caption{Average gap between the true cross-world correlation and its estimator as a function of the predictive power of the auxiliary variable $\gamma$ and dimension of the auxiliary variable $Z$. The upper panels report $\mathrm{GAP}(\gamma)=\mathbb{E}_X\!\left[\rho(X)-\widehat{\tilde\rho}(X)\right]$, while the lower panels report $\mathrm{GAP}(\gamma)=\mathbb{E}_X\!\left[\rho(X)-\widehat{\rho}(X)\right]$. Shaded bands are $95\%$ confidence intervals; the horizontal line at $0$ indicates no bias.}
    \label{fig_high_dim_rho_est}
\end{figure}

\section{Appendix: additional theoretical guarantees}
\label{section_appendix_theoretical_guarantees}

This appendix expands on the theoretical properties of the \( CW(\rho) \) prediction intervals, focusing on two important edge cases: perfect negative correlation (\( \rho = -1 \)) and perfect positive correlation (\( \rho = 1 \)) between potential outcomes. We present two theorems. Theorem~\ref{Theorem_sum} provides sufficient conditions for marginal coverage in the worst-case setting, where the counterfactuals are perfectly negatively correlated. Theorem~\ref{rho_1_theorem} shows that even assuming perfect alignment (\( \rho = 1 \)) yields valid coverage bounds under mild assumptions on bias of CATE estimator.

\begin{theorem}[Largest-width Scenario $\rho = -1$]
\label{Theorem_sum}
Assume that $\rho = \operatorname{cor}\big(Y(0), Y(1) \mid X = x\big) = -1$ for all $x\in\mathcal{X}$.  Suppose we have found two functions $\tilde l_0(x) = \hat{\mu}_0(x) -l_0(x)$ and $\tilde u_1(x)=\hat{\mu}_1(x) +u_1(x)$ that provide marginally valid prediction intervals:
\begin{equation}
    \label{eq1_in_thm1}
    \mathbb{P}\big(Y(0) \geq \tilde l_0(X)\big) \geq 0.9\quad \text{ and }\quad  \mathbb{P}\big(Y(1) \leq \tilde u_1(X)\big) \geq 0.9.
\end{equation}
Then, the $CW(\rho)$ prediction intervals are marginally valid:
\[
\mathbb{P}\bigg(Y(1) - Y(0) \leq   \hat{\tau}(X)  + D_{\rho}\big(l_0(X), u_1(X)\big)\bigg) \geq 0.9,
\]
if at least one of the following conditions is satisfied:
\begin{enumerate}
    \item \textbf{Conditional Coverage:} The bounds satisfy conditional coverage:    \[
 \mathbb{P}\big(Y(0) \geq \tilde l_0(X) \mid X = x\big) \geq 0.9 \quad \text{ and } \quad    \mathbb{P}\big(Y(1) \leq \tilde u_1(X) \mid X = x\big) \geq 0.9, \quad \forall x \in \mathcal{X}.
    \]
    \item \textbf{Similarity:} The bounds have similar coverage:
\[
    \mathbb{P}\big(Y(0) \geq \tilde l_0(X) \mid X = x\big) = \mathbb{P}\big(Y(1) \leq \tilde u_1(X) \mid X = x\big), \quad \forall x \in \mathcal{X}.
    \]
    \item \textbf{Naive:} $\mathbb{P}\big(Y(0) \geq \tilde l_0(X)\big) \geq 0.95$ and $\mathbb{P}\big(Y(1) \leq \tilde u_1(X)\big) \geq 0.95$.
    \item \textbf{Gaussianity/Convexity of the Quantile Function:} The quantile function of $Y(t) \mid X = x$ is convex on $[\delta_t,1]$, where
    \[
    \delta_0 =\inf_{x\in\mathcal{X}} \mathbb{P}\big(Y(0) \geq \tilde l_0(X) \mid X = x\big),  \quad  \delta_1 = \inf_{x\in\mathcal{X}} \mathbb{P}\big(Y(1) \leq \tilde u_1(X) \mid X = x\big).
    \]
    (The quantile function is typically convex in its tail for most standard distributions, such as Uniform, Gaussian Gamma, etc.).
\end{enumerate}
Additionally, the bound is tight in the sense that, without additional assumptions—even if all of the above conditions hold—there exists a situation where for any $\epsilon>0, $
\[
\mathbb{P}\bigg(Y(1) - Y(0) \leq \hat{\tau}(X)  + D_{\rho}\big(l_0(X), u_1(X)\big)  - \epsilon\bigg) < 0.9.
\]

\end{theorem}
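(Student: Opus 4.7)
The key structural observation is that $D_{-1}(a,b)=\sqrt{a^2+b^2+2ab}=a+b$ for $a,b\ge 0$, so the upper $CW(-1)$ bound collapses to $\hat{\tau}(X)+l_0(X)+u_1(X)=\tilde u_1(X)-\tilde l_0(X)$. Introducing the events $A:=\{Y(1)\le \tilde u_1(X)\}$ and $B:=\{Y(0)\ge \tilde l_0(X)\}$ and the conditional coverages $\phi_1(x):=\mathbb{P}(A\mid X=x)$, $\phi_0(x):=\mathbb{P}(B\mid X=x)$ (so that $\mathbb{E}[\phi_t(X)]\ge 0.9$ by assumption), one has the deterministic containment $A\cap B\subseteq\{Y(1)-Y(0)\le \tilde u_1(X)-\tilde l_0(X)\}$, which reduces the entire theorem to proving $\mathbb{P}(A\cap B)\ge 0.9$. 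Because $\rho(x)=-1$ pins the conditional copula of $(Y(0),Y(1))\mid X$ to the Fr\'echet--Hoeffding lower bound, I can use the quantile representation $(Y(0),Y(1))\mid X=x \stackrel{d}{=} (F_{0\mid x}^{-1}(U),F_{1\mid x}^{-1}(1-U))$ with $U\sim\mathrm{Unif}(0,1)$, yielding the clean identity $\mathbb{P}(A\cap B\mid X=x)=\min(\phi_0(x),\phi_1(x))$.

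\textbf{Cases 1--3 and tightness.} These become one-line arguments: Case 1 (conditional coverage) gives $\min(\phi_0,\phi_1)\ge 0.9$ pointwise; Case 2 (similarity) gives $\min(\phi_0,\phi_1)=\phi_0$, whose expectation is $\ge 0.9$; Case 3 (naive) sidesteps the copula altogether via Bonferroni, $\mathbb{P}(A\cap B)\ge\mathbb{P}(A)+\mathbb{P}(B)-1\ge 0.95+0.95-1=0.9$. For tightness I would exhibit $X$ degenerate with $Y(0)\sim N(0,1)$ and $Y(1)=-Y(0)$, choosing the symmetric quantile bounds $\tilde l_0=F_0^{-1}(0.1)$ and $\tilde u_1=F_1^{-1}(0.9)$: conditions 1, 2 and 4 hold, the coverage of $\{Y(1)-Y(0)\le \tilde u_1-\tilde l_0\}$ equals exactly $0.9$, and shrinking the bound by any $\epsilon>0$ drops the coverage below $0.9$.

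\textbf{Case 4 is the main obstacle.} Convexity of the quantile tails does \emph{not} yield $\min(\phi_0,\phi_1)\ge 0.9$ on average (simple two-point constructions on $\mathcal{X}$ with $\phi_0,\phi_1$ alternating between $0.8$ and $1$ show $\mathbb{E}[\min(\phi_0,\phi_1)]$ can be as low as $0.8$), so the crude $A\cap B$ containment is lossy. I would instead analyse the conditional distribution of $Y(1)-Y(0)$ directly: define $g(u;x):=F_{1\mid x}^{-1}(1-u)-F_{0\mid x}^{-1}(u)$, which is strictly decreasing in $u$, and let $u^\ast(x)$ solve $g(u^\ast(x);x)=\tilde u_1(x)-\tilde l_0(x)$, so that the target becomes $\mathbb{E}[1-u^\ast(X)]\ge 0.9$. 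Convexity of $F_{t\mid x}^{-1}$ on $[\delta_t,1]$ then lets me replace each quantile function by its tangent line at the reference point (i.e.\ $0.9$ for $F_{1\mid x}^{-1}$ and $1-\delta_0$ for $F_{0\mid x}^{-1}$), linearising the equation defining $u^\ast(x)$ into a relation of the form $1-u^\ast(x)\ge \lambda(x)\phi_0(x)+(1-\lambda(x))\phi_1(x)$ for some $\lambda(x)\in[0,1]$; integrating over $X$ and using $\mathbb{E}[\phi_t(X)]\ge 0.9$ concludes. The hard step is establishing this tangent-line inequality with a valid convex combination uniformly in $x$---this is precisely where the convexity of the quantile tails is essential and where the bulk of the technical work lies.
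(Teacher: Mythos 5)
Your handling of Cases 1--3 and of tightness is correct and essentially the paper's own route: after observing $D_{-1}(l_0,u_1)=l_0+u_1$, the paper likewise uses the affine countermonotone representation $Y(1)=a_x-b_xY(0)$ ($b_x>0$) to get the conditional bound $\min\big(\phi_0(x),\phi_1(x)\big)$ (its Lemma~\ref{lemma_about_rho_pm1}), which yields Cases 1 and 2, and it also handles Case 3 by Bonferroni. Your degenerate-$X$ Gaussian tightness example is fine (it realizes Conditions 1, 2 and 4 but not 3; the paper's discrete construction realizes 3 and 4 but not 1 and 2, so neither exhibits literally all four at once).

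The genuine gap is Case 4, which you explicitly leave unproven, and it is the only part of the theorem requiring a new idea. The missing step is the paper's Lemma~\ref{lemma_number_46}: since $Y(1)-Y(0)=a_x-(1+b_x)Y(0)$ is affine increasing in $-Y(0)$, and since $\tilde u_1(x)-\tilde l_0(x)$ can be written through the quantile function $Q_x$ of $-Y(0)\mid X=x$ evaluated at $\phi_0(x)$ and $\phi_1(x)$, convexity of $Q_x$ on the tail gives $\tfrac{1}{1+b_x}Q_x(\phi_0)+\tfrac{b_x}{1+b_x}Q_x(\phi_1)\ge Q_x\big(\tfrac{\phi_0+b_x\phi_1}{1+b_x}\big)$, hence conditional coverage at least $\tfrac{\phi_0(x)+b_x\phi_1(x)}{1+b_x}$. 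This is exactly the tangent-line/convex-combination inequality you conjecture, so your plan points in the right direction, but you have not established it. Moreover, your closing step --- ``integrating over $X$ and using $\mathbb{E}[\phi_t(X)]\ge 0.9$ concludes'' --- does not follow when the weight $\lambda(x)=\tfrac{1}{1+b_x}$ varies with $x$: the very two-point construction you used to show $\mathbb{E}[\min(\phi_0,\phi_1)]$ can drop to $0.8$ also makes an $x$-dependent convex combination of $\phi_0$ and $\phi_1$ average below $0.9$ when $\lambda$ is anti-correlated with the coverages. The paper's proof avoids this by pulling $\tfrac{1}{1+b}$ outside the integral, i.e.\ it effectively treats the conditional standard-deviation ratio $b_x$ as constant in $x$ (compare the coverage--variance relation in Theorem~\ref{theorem_general_gaussian}); any complete write-up along your lines must either prove the Jensen step and make such a constant-ratio (or coverage--variance) condition explicit, or supply a different argument for the integration step.
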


Theorem~\ref{Theorem_sum} shows that strong finite-sample coverage guarantees can be achieved by imposing mild assumptions, either on the quality of individual predictions (such as conditional validity or similar prediction errors across treatment groups) or on the distribution of potential outcomes (for example, assuming Gaussianity). As an alternative, one can calibrate separate prediction intervals \eqref{separate_def} at the $95\%$ level. In the worst-case scenario, where  \( \rho = -1 \) and an ``ugly'' distribution meets unreliable conditional coverage, this set can be tight (see Example~\ref{example_for_ultra_tighness} in Appendix~\ref{appendix_proofs} for a case where the naive prediction set is the smallest valid set). Notably, in this case, our proposed prediction intervals are equivalent to the naive construction from \cite{Lei_2021_Conformal_Inference}, since  
\(
\hat{\tau}(X) + D_{-1}\big(l_0(X), u_1(X)\big) = \tilde{u}_1(X) - \tilde{l}_0(X)
\). 

Note that all three conditions in Theorem~\ref{Theorem_sum} are testable, as they involve only treated or untreated units separately and do not rely on any cross-world assumptions. For example, Conditions 1 and 2 can be evaluated using a standard train/test split, as commonly done in the conformal inference literature. Condition 3 can be satisfied by construction, though this may result in wider prediction intervals. Condition 4 can potentially be assessed using normality tests on the residuals.

The following theorem suggests that, unless the estimator $\hat \tau$ is highly biased, assuming $\rho = 1$ is sufficient to guarantee coverage of the $CW(\rho)$-bounds. In what follows, we distinguish between 'ground truth' $\rho$ and 'used hyperparameter' $\tilde \rho$ which can potentially differ. Note that for any $\tilde\rho \leq \rho$, $CW(\tilde\rho)$ bounds are wider than $CW(\rho)$ bounds. 

\begin{theorem}[Assuming $\rho = 1$ is sufficient]
\label{rho_1_theorem}
Assume that $\rho = \operatorname{cor}\big(Y(0), Y(1)\mid X = x\big) = 1$. Suppose that we found marginally valid intervals:
\[
\mathbb{P}\big(Y(0) \geq \hat{\mu}_0(X)-l_0(X)\big) \geq 0.9, \quad \mathbb{P}\big(Y(1) \leq \hat{\mu}_1(X)+u_1(X)\big) \geq 0.9.
\]
Then, for any $\tilde\rho\in[-1,1]$, the $CW(\tilde\rho)$-bounds are marginally valid:
\[
\mathbb{P}\bigg(Y(1)- Y(0) \leq  \hat{\tau}(X)  + D_{\tilde\rho}\big(l_0(X), u_1(X)\big)\bigg) \geq 0.9,
\]
provided that the \textbf{bias in $\boldsymbol{\hat{\tau}(x)}$ is small}, in the following sense:
\begin{equation*}
\begin{aligned}
 \text{if } b_x \leq  1: \quad & \tau(x)-\hat\tau(x) \leq D_{\tilde\rho}\big(l_0(x), u_1(x)\big) + (1 - b_x)\big(e_1(x)-u_1(x)\big), \\ \text{if } b_x \geq  1: \quad & \tau(x) -\hat\tau(x)\leq D_{\tilde\rho}\big(l_0(x), u_1(x)\big) + \big(1 - \frac{1}{b_x} \big) \big(e_0(x) - l_0(x)\big), 
\end{aligned}
\end{equation*}
where $e_1(x) =   \mu_1(x) - \hat  \mu_1(x) , e_0(x) =   -\mu_0(x) + \hat  \mu_0(x) $, and $b_x = \frac{\operatorname{sd}(Y(1) \mid X = x)}{\operatorname{sd}(Y(0) \mid X = x)}$, for $x\in\mathcal{X}.$ 
\end{theorem}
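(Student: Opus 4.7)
The plan is to exploit the assumption $\rho = 1$ to write $Y(0)$ and $Y(1)$, conditional on $X$, as two affine functions of a single scalar noise. This reduces the ITE event $\{Y(1) - Y(0) \leq \hat\tau(X) + D_{\tilde\rho}(l_0(X), u_1(X))\}$ to a one-sided event on either $Y(0)$ or $Y(1)$ alone, which can then inherit the marginal validity of the univariate prediction intervals established in \eqref{eq_in_theorem_coverage}.

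Concretely, I would first use $\rho = 1$ together with the existence of $\mu_t$ and $\sigma_t$ to write $Y(t) = \mu_t(X) + \sigma_t(X) Z$ with $Z \mid X$ zero-mean and unit-variance, driven by the \emph{same} $Z$ for both $t=0,1$. Eliminating $Z$ via $Y(0)$ or $Y(1)$ yields the twin identities
\[
Y(1) - Y(0) = \tau(X) + (b_X - 1)\big(Y(0) - \mu_0(X)\big) = \tau(X) + \big(1 - 1/b_X\big)\big(Y(1) - \mu_1(X)\big).
\]
Fixing $X = x$, I would then branch on the sign of $b_x - 1$. When $b_x \leq 1$ the multiplier $b_x - 1$ is non-positive, so the marginally valid event $\{Y(0) \geq \hat\mu_0(x) - l_0(x)\}$ lower-bounds $Y(0) - \mu_0(x)$ and therefore, via the first identity, produces an \emph{upper} bound on $Y(1) - Y(0)$. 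Symmetrically, when $b_x \geq 1$ the second identity combined with $\{Y(1) \leq \hat\mu_1(x) + u_1(x)\}$ produces the analogous upper bound. A direct algebraic rearrangement shows that the stated bias condition is precisely the inequality needed to force this deterministic upper bound to be at most $\hat\tau(x) + D_{\tilde\rho}(l_0(x), u_1(x))$, yielding the conditional set inclusion (chosen marginal event)$\,\subseteq\,$(target ITE event) at $X = x$.

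Finally, I would integrate the resulting conditional inequality $\mathbb{P}(\text{target} \mid X) \geq \mathbb{P}(\text{chosen event} \mid X)$ over the law of $X$ to obtain the marginal guarantee. The main technical obstacle lies exactly in this last step: because the chosen event swaps between the $Y(0)$-event and the $Y(1)$-event depending on the sign of $b_X - 1$, the resulting marginal lower bound takes the mixed form $\mathbb{E}[\phi_0(X)\mathbf{1}_{\{b_X \leq 1\}} + \phi_1(X)\mathbf{1}_{\{b_X > 1\}}]$, which does not automatically collapse to $0.9$ from the marginal conditions $\mathbb{E}[\phi_0(X)] \geq 0.9$ and $\mathbb{E}[\phi_1(X)] \geq 0.9$ alone. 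Closing this gap will likely require a careful bookkeeping argument on the partition $\{b_X \leq 1\} \cup \{b_X > 1\}$, analogous in spirit to the coverage-variance balance exploited in Theorem~\ref{theorem_general_gaussian}, together with the degenerate boundary cases $\sigma_0(X) = 0$ or $\sigma_1(X) = 0$ where the common-noise representation collapses and the identities must be verified separately.
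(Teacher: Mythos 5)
Your proposal follows essentially the same route as the paper's proof: there too, the $\rho=1$ degeneracy is used to write the centered potential outcomes as exact affine transforms of one another (the paper works with $\tilde{Y}_1^{cen} = -b_x \tilde{Y}_0^{cen}$ after sign-flipping $Y(0)$, which is the same as your common-$Z$ representation), the argument splits on $b_x \lessgtr 1$ with the deterministic case $b_x=1$ handled separately, the target ITE event is rewritten as a one-sided event on a single centered outcome, and the stated bias condition is verified to be exactly the algebraic inequality forcing that event to contain the corresponding marginal-coverage event at each $x$. The integration subtlety you flag---that the resulting lower bound is $\mathbb{E}\big[\phi_0(X)\mathbf{1}\{b_X\le 1\} + \phi_1(X)\mathbf{1}\{b_X>1\}\big]$, which does not follow from $\mathbb{E}[\phi_0(X)]\ge 0.9$ and $\mathbb{E}[\phi_1(X)]\ge 0.9$ alone---is not resolved in the paper either: its proof fixes $x$, treats each case, and then simply ``integrates over $X$,'' which is airtight only when one case applies for (almost) all $x$ or under an extra condition such as conditional or equal conditional coverage, of the kind invoked explicitly in Theorems~\ref{theorem_general_gaussian} and~\ref{Theorem_sum}. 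So your caution at the last step is warranted, but it points to a gap the paper glosses over rather than an idea you are missing.
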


Theorem~\ref{rho_1_theorem} suggests that when $\rho = 1$, our proposed prediction intervals are marginally valid assuming some precision of the estimators. We discuss this in more details in Section~\ref{section_CI}. 

While Theorems \ref{Theorem_sum} and \ref{rho_1_theorem} focus on marginal coverage, a similar result can be also stated for conditional coverage: given \[
\mathbb{P}\big(Y(0) \geq \hat{\mu}_0(X)-l_0(X)\mid X=x \big) \geq 0.9, \quad \mathbb{P}\big(Y(1) \leq \hat{\mu}_1(X)+u_1(X)\mid X=x\big) \geq 0.9,
\]
we can similarly obtain conditionally valid 
\[
\mathbb{P}\bigg(Y(1)- Y(0) \leq  \hat{\tau}(X)  + D_{\tilde\rho}\big(l_0(X), u_1(X)\big)\mid X=x\bigg) \geq 0.9,
\]
given the assumptions in Theorems \ref{Theorem_sum} and \ref{rho_1_theorem}.

\section{Details on Numerical Experiments}
\label{appendix_simulations}
\subsection{Prediction intervals for ITE - literature review}
\label{lit_review}

\cite{Lei_2021_Conformal_Inference} proposed a naive approach that constructs \( C_0 \) and \( C_1 \) at level \( 1 - \alpha/2 \) as in \eqref{C_1_and_C_0_definitions}, and defines the ITE prediction set as  
\begin{equation}
    \label{Naive_definition}
    C_{\text{ITE}}^{\text{naive}} = C_1 + (-C_0) := \{c_1 - c_0 : c_1 \in C_1,\, c_0 \in C_0\}.
\end{equation}
They show that if \( C_0 \) and \( C_1 \) are correctly calibrated at level \( 1 - \alpha/2 \), then \( C_{\text{ITE}}^{\text{naive}} \) achieves marginal coverage at level \( 1 - \alpha \) as defined in \eqref{marginal_coverage_definition}.  Independently, \cite{Kivaranovic2020} showed that under independence of the potential outcomes (i.e., \( \rho = 0 \) in our notation), it suffices to calibrate \( C_0 \) and \( C_1 \) at level \( \sqrt{1 - \alpha} \) instead of \( 1 - \alpha/2 \) to ensure validity of \( C_{\text{ITE}}^{\text{naive}} \). While naive method enjoys strong theoretical guarantees, the resulting intervals are extremely conservative, often several times wider than optimal.

To address this, several strategies have been proposed to construct narrower intervals. In Section 4.2 of \cite{Lei_2021_Conformal_Inference}, a heuristic nested approach is suggested to shorten the intervals, without any coverage guarantees. \cite{Alaa_2023_Conformal_Meta_learners} introduce conformal meta-learners, which apply standard conformal prediction to pseudo-outcomes estimated via a chosen meta-learner.  \cite{jonkers2024conformalconvolutionmontecarlo} propose a Monte Carlo-based method that (assuming \( \rho = 0 \)) samples from estimated conditional distributions \( Y(1) \mid X \) and \( Y(0) \mid X \), and forms a pseudo-sample of ITEs by subtraction. Recent work by \citet{wang2025conformalinferenceindividualtreatment} uses conditional density estimation to reweight CQR under distributional shift between treated and untreated units, yielding improved intervals over those of \citet{Lei_2021_Conformal_Inference} in such settings.

In conclusion, existing approaches either offer coverage guarantees at the expense of overly wide prediction intervals, or produce narrower intervals without theoretical guarantees—though some perform reasonably well empirically when \( \rho = 0 \) (see Section~\ref{section_simulations}).

\subsection{Data}
\label{appendix_simulations_data}
We consider the following data-generating processes: 
\begin{itemize}
    \item \textbf{Synthetic}: if $d=1$ we generate covariate $X\sim Unif(-1,1)$. If $d>1$, we follow \cite{Wager_2018_Estimation_Inference, Alaa_2023_Conformal_Meta_learners, Lei_2021_Conformal_Inference, jonkers2024conformalconvolutionmontecarlo}, and we generate covariate vector \(
\mathbf{X} = (X_1, \ldots, X_d),
\)
where each covariate $X_j = \Phi(\tilde{X}_j)$, with $\Phi$ denoting the cumulative distribution function (CDF) of the standard normal distribution. The vector $(\tilde{X}_1, \ldots, \tilde{X}_d)$ follows an multivariate Gaussian distribution with mean zero and $\operatorname{Cov}(\tilde{X}_j, \tilde{X}_{j'}) = 0.25$ for $j \neq j'$. 

The treatment $T_i$ is generated via propensity score $\pi(\textbf{X}) = \frac{1 +|X_1|}{4}\in [0.25, 0.5]$, thereby providing sufficient overlap. The outcome is generated as \begin{align*}
Y_i(0) = f_0(\textbf{X}_i) + \varepsilon_i^0 \quad
Y_i(1) = f_0(\textbf{X}_i) + \tau(\textbf{X}_i)+ \varepsilon_i^1, \quad\text{where } (\varepsilon_i^0, \varepsilon_i^1)\sim \mathcal{N} \left(
\begin{bmatrix}
0 \\
0
\end{bmatrix},
\begin{bmatrix}
1 & 2\rho \\
2\rho & 4
\end{bmatrix}
\right),
\end{align*}
where $\tau(\textbf{x}) = \tau(x_1, x_2)$ is a random smooth polynomial depending only on the first two covariates (if $d=1$ only on the first covariate), generated using Perlin noise generator \citep{PerlinNoise} as in \cite{bodik2024structuralrestrictionslocalcausal}, and $f_0(x) = \beta^Tx$, where $\beta$ is generated via standard normal distribution.  
\item \textbf{IHDP (semi-synthetic):} Introduced in \cite{Bayes_Hill}, the dataset includes 25 observed pre-treatment covariates (\textbf{X}) such as birth weight, maternal age, educational background, etc. The treatment variable ($T$) indicates whether an infant received the intervention program (treated) or not (control). For each individual, the two potential outcomes ($Y_i(1), Y_i(0)$) correspond to cognitive test scores.  \cite{Bayes_Hill} simulated the potential outcomes as follows:
\begin{align}\label{eqIHDP}
Y_i(0) = f_0(X_i) + \varepsilon_i^0 \quad
Y_i(1) = f_1(X_i) + \varepsilon_i^1, \quad\text{where } \varepsilon_i^0, \varepsilon_i^1\overset{i.i.d.}{\sim} N(0,1).
\end{align}
Here, $f_0$ and $f_1$ are either linear functions (case ``A'') or more complex nonlinear functions (case ``B''). While this dataset is useful for benchmarking ATE and CATE estimators, the original implementation assumes $\rho = 0$, which may not reflect real-world behavior. Hence, we also considered the same data-generating process but using instead $$\begin{pmatrix}
\varepsilon_i^0 \\
\varepsilon_i^1
\end{pmatrix}
\sim \mathcal{N} \left(
\begin{pmatrix}
0 \\
0
\end{pmatrix},
\begin{pmatrix}
1 & \rho \\
\rho & 1
\end{pmatrix}
\right).$$ 
\item \textbf{Synthetic with non-Gaussian copula-based noise:} We use the same setup as in the first synthetic case but replace the Gaussian noise with samples from copula models combining different marginals and dependencies. Specifically, we consider Gaussian, Laplace, and \(t_3\) marginals, paired with Gaussian, Frank and $t$-copulas. 
\end{itemize}

\subsection{Additional experiments: IHDP dataset, non-Gaussian noise case and quantification of improvements of $CW(\rho)$}
\label{appendix_simulations_additional_experiments}
\begin{itemize}
    \item Figure~\ref{figure_result_IHDP} shows the coverage/width of various estimators applied on the IHDP dataset (with counterfactuals generated using a range of $\rho$). The results are similar to the case $d=15$ in Figure~\ref{fig_sim1}, although with larger variance.
    \item Figure~\ref{figure_result_copula} displays the empirical coverage of our \( CW(\rho) \) and \( CW^{+CI}(\rho) \) intervals on a synthetic dataset under various joint distributions of the noise. Each column corresponds to a different combination of copula and marginal distribution. We omit other copula–margin pairs, as they yielded similar results. For the combinations considered, the distribution of the noise appears to have little impact on the algorithm’s performance. While this simulation study represents only a single example and should not be interpreted as a conclusive proof, it hints some degree of robustness to deviations from Gaussian noise assumptions.
    \item Figure~\ref{figure_result_coverage_width_loss} compares the performance of prediction intervals of different methods across values of $\rho$ using the Coverage--Width loss, defined as
  \[
  \text{Coverage--Width Loss} = \frac{\text{width} - \text{width}_{\min}}{\text{width}_{\max} - \text{width}_{\min}} + \frac{2}{\alpha} \left| \text{coverage} - (1 - \alpha) \right|,
  \]
  where width is the average interval width across data points, and $\text{width}_{\min}$ and $\text{width}_{\max}$ are the minimum and maximum widths across methods for each combination of $\rho$ and dimension $d$. The Coverage--Width loss penalizes both wide intervals and miscalibrated coverage. 
  
  The results shown are aggregated across all experimental setups, including both IHDP and synthetic data described in Appendix~\ref{appendix_simulations_data} .

$CW(\rho)$ intervals achieve the best performance among all methods when $\rho \geq 0.5$, and remain competitive even when $\rho$ is misspecified. As theory suggests, $CW(\rho)$ becomes unstable near $\rho = 1$, but the enhanced variant $CW^{+CI}(\rho)$ maintains stable, state-of-the-art performance across all values of $\rho$. Overall, our CW-based intervals achieve 3--5× lower loss than other methods when $|\rho| \approx 1$, while for $\rho \approx 0$, $CW(\rho)$, CMC, and DR perform very similarly.
\end{itemize}

\begin{figure}
    \centering
    \includegraphics[width=1.05\linewidth]{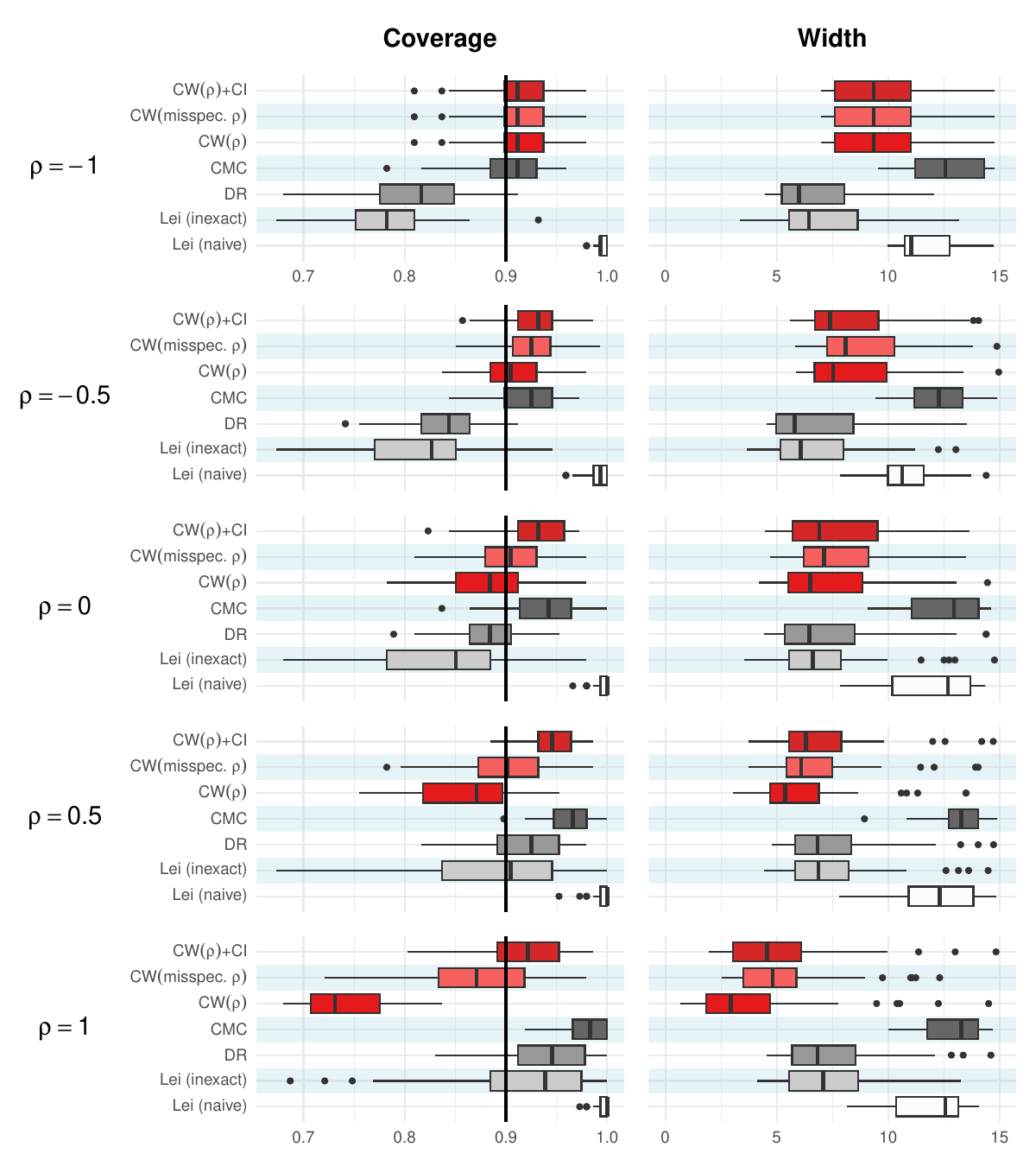} 
    \caption{Comparison of the coverage and average interval width of different interval estimation methods across varying correlation levels $\rho \in \{-1, -0.5, 0, 0.5, 1\}$ applied on the \textbf{IHDP} dataset. Methods include Lei's exact and inexact \citep{Lei_2021_Conformal_Inference}, conformal DR-meta-learner \citep{Alaa_2023_Conformal_Meta_learners}, CMC \citep{jonkers2024conformalconvolutionmontecarlo}, and two versions of our $CW(\rho)$ prediction intervals. The vertical line at 0.9 in the coverage panels marks the target coverage level.}
    \label{figure_result_IHDP}
\end{figure}

\begin{figure}
    \centering
    \includegraphics[width=1.05\linewidth]{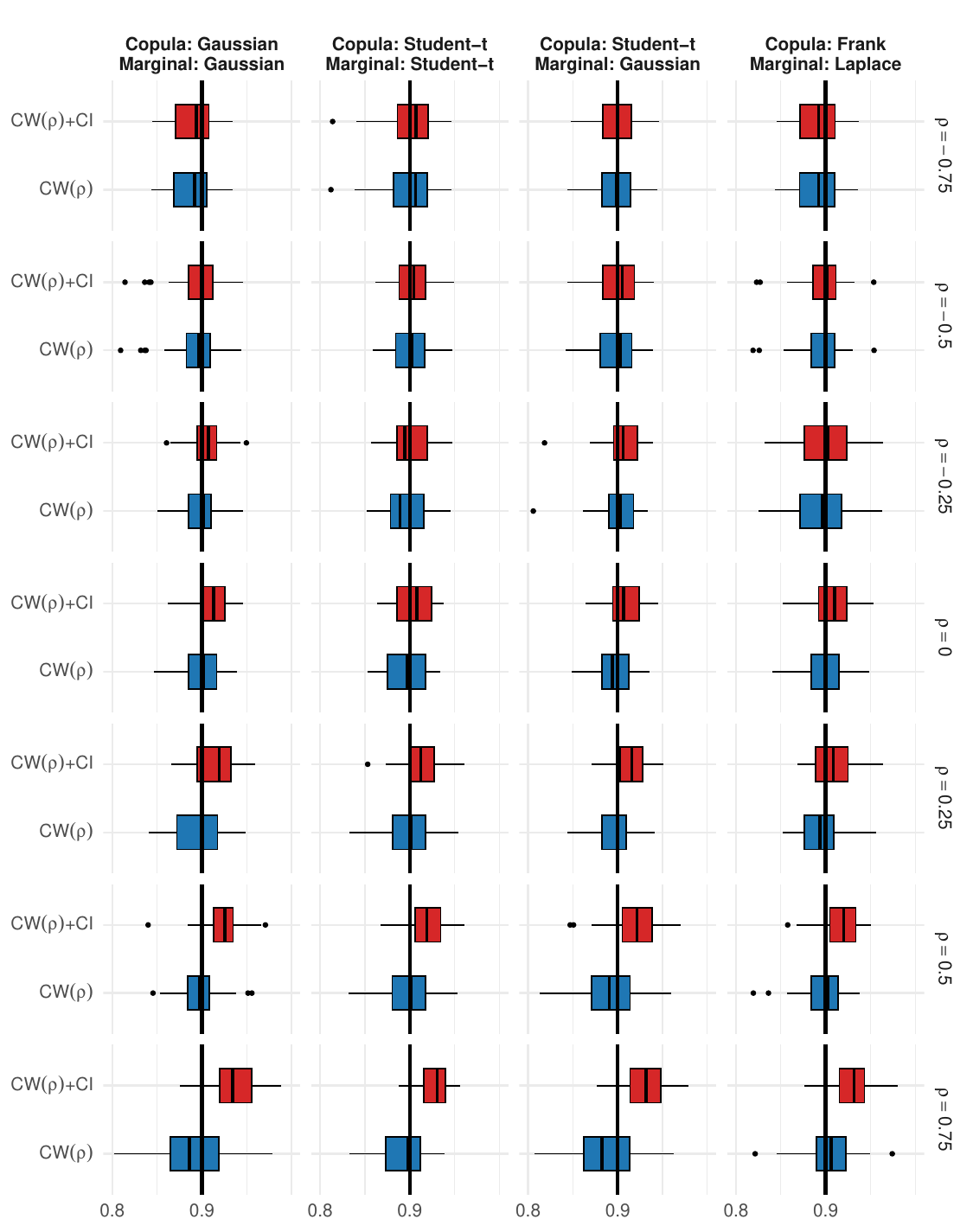} 
    \caption{Coverage across varying correlation levels $\rho$ applied on the synthetic dataset with different joint distribution of the noise. The vertical line at 0.9 in the coverage panels marks the target coverage level.}
    \label{figure_result_copula}
\end{figure}

\begin{figure}
    \centering
    \includegraphics[width=0.95\linewidth]{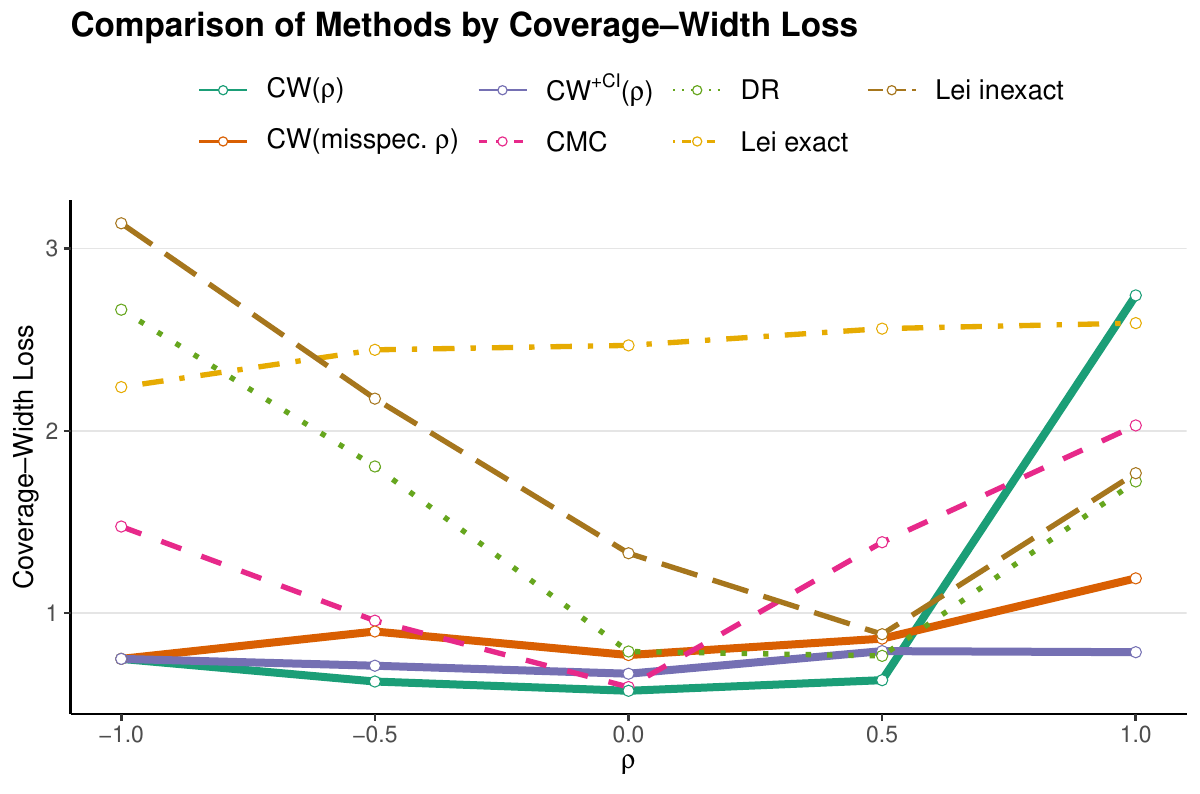} 
    \caption{Coverage--Width loss across values of $\rho$ for different prediction interval methods. Lower values indicate better performance in terms of coverage and interval width.}
    \label{figure_result_coverage_width_loss}
\end{figure}

\newpage
\section{Proofs}
\label{appendix_proofs}

\begin{customlem}{\ref{lemma_rho_L}}
   Assume that the relevant conditional second moments exist. Denote   $\sigma_t(x):=\mathrm{Var}\{Y(t)\mid X=x\}$, $\sigma_t(x,z):=\mathrm{Var}\{Y(t)\mid X=x,Z=z\},$ and $B_L(x):=\mathbb{E}\!\left[\sqrt{\sigma_0(x,Z)\,\sigma_1(x,Z)}\,\middle|\, X=x\right]$, $ t=0,1$. Then $B(x)\geq - B_L(x)$ and 
\begin{align}
\rho(x)\ \ge\ \tilde\rho_L(x)
:= \max\!\left\{
-1,\,
\frac{A(x)-B_L(x)}{\sqrt{\sigma_0(x)\sigma_1(x)}}
\right\}.
\tag{\ref{eq:rho_lower}}
\end{align}
 Additionally, if we assume $\mathrm{Cov}\{Y(0),Y(1)\mid X,Z\}\ge 0$, then the bound tightens to
\begin{equation}
\tag{\ref{eq_rho_L}}
\rho(x)\ge \rho_L(x):= A(x)/ \sqrt{\sigma_0(x)\sigma_1(x)}.
\end{equation}
\end{customlem}

\begin{proof}
\textbf{Step 1: a lower bound on $B(x)$.}
For any $z$ with $\mathbb{P}(Z=z\mid X=x)>0$, the conditional Cauchy--Schwarz inequality yields
\begin{align*}
\big|\mathrm{Cov}\{Y(0),Y(1)\mid X=x,Z=z\}\big|
&\le
\sqrt{\mathrm{Var}\{Y(0)\mid X=x,Z=z\}\,\mathrm{Var}\{Y(1)\mid X=x,Z=z\}} \\
&=
\sqrt{\sigma_0(x,z)\,\sigma_1(x,z)}.
\end{align*}
Hence,
\[
\mathrm{Cov}\{Y(0),Y(1)\mid X=x,Z=z\}\ \ge\ -\sqrt{\sigma_0(x,z)\,\sigma_1(x,z)}.
\]
Taking conditional expectation given $X=x$ on both sides and using the definition
\[
B(x):=\mathbb{E}\!\left[\mathrm{Cov}\{Y(0),Y(1)\mid X=x,Z\}\,\middle|\, X=x\right],
\]
we obtain
\[
B(x)\ \ge\ -\,\mathbb{E}\!\left[\sqrt{\sigma_0(x,Z)\,\sigma_1(x,Z)}\,\middle|\, X=x\right]
\]
\textbf{Step 2: a lower bound on $\rho(x)$.}
By the law of total covariance (conditional on $X=x$),
\[
\mathrm{Cov}\{Y(0),Y(1)\mid X=x\}=A(x)+B(x),
\]
where $A(x):=\mathrm{Cov}\{\mu_0(x,Z),\mu_1(x,Z)\mid X=x\}$ and
$\mu_t(x,z):=\mathbb{E}\{Y(t)\mid X=x,Z=z\}$.
Combining this decomposition with the bound from Step 1 gives
\[
\mathrm{Cov}\{Y(0),Y(1)\mid X=x\}\ \ge\ A(x)-B_L(x).
\]
Recalling that
\[
\rho(x):=\mathrm{Corr}\{Y(0),Y(1)\mid X=x\}
=\frac{\mathrm{Cov}\{Y(0),Y(1)\mid X=x\}}{\sqrt{\sigma_0(x)\sigma_1(x)}},
\]
we conclude that
\[
\rho(x)\ \ge\ \frac{A(x)-B_L(x)}{\sqrt{\sigma_0(x)\sigma_1(x)}}.
\]
Since any correlation is bounded below by $-1$, we may strengthen this to
\[
\rho(x)\ \ge\ \max\!\left\{-1,\,
\frac{A(x)-B_L(x)}{\sqrt{\sigma_0(x)\sigma_1(x)}}\right\}
=:\tilde\rho_L(x).
\]

\textbf{Step 3: tightened bound under nonnegative cross-world covariance.}
If, in addition, $\mathrm{Cov}\{Y(0),Y(1)\mid X,Z\}\ge 0$ almost surely, then
\[
B(x)=\mathbb{E}\!\left[\mathrm{Cov}\{Y(0),Y(1)\mid X=x,Z\}\,\middle|\, X=x\right]\ge 0.
\]
Therefore,
\[
\mathrm{Cov}\{Y(0),Y(1)\mid X=x\}=A(x)+B(x)\ \ge\ A(x),
\]
and dividing by $\sqrt{\sigma_0(x)\sigma_1(x)}$ yields
\[
\rho(x)\ \ge\ \frac{A(x)}{\sqrt{\sigma_0(x)\sigma_1(x)}}=: \rho_L(x).
\]
This completes the proof.
\end{proof}

\begin{lemma}[Asymptotic validity of $\widehat{\rho}_L(x)$]
\label{lem_appendix_rho_lower_consistency}
Fix $x$ such that $\sigma_0(x)\sigma_1(x)>0$. Let $\rho_L(x)$ be defined as in \eqref{eq:rho_lower} and let
$\widehat{\rho}_L(x)$ be the plug-in estimator obtained by replacing $A(x)$, $B_L(x)$ (if used), and
$\sigma_t(x)$ by estimators $\widehat A(x)$, $\widehat B_L(x)$, and $\widehat\sigma_t(x)$ constructed from an
auxiliary sample on which $(X,Z)$ are observed. Assume that:
\begin{enumerate}
\item[(i)] (\emph{Consistency of nuisance estimators}) For $t\in\{0,1\}$, $\widehat\mu_t(x,z)\overset{P}{\to} \mu_t(x,z)$ for
almost every $z$, and $\widehat\sigma_t(x)\overset{P}{\to}\sigma_t(x)$.
If $\widehat B_L(x)$ is used, additionally assume that $\widehat\sigma_t(x,z)\overset{P}{\to}\sigma_t(x,z)$ for almost every $z$.
\item[(ii)] (\emph{Moment conditions / domination}) $\E\{\mu_t(x,Z)^2\mid X=x\}<\infty$ and
$\E\{\sigma_t(x,Z)\mid X=x\}<\infty$ for $t\in\{0,1\}$, and the corresponding plug-in quantities are uniformly dominated
so that conditional dominated convergence applies.
\end{enumerate}
Then $\widehat{\rho}_L(x)\overset{P}{\to} \rho_L(x)$ as $n\to\infty$. Consequently, for any $\epsilon>0$,
\begin{align}
\lim_{n\to\infty}\Pr\!\left(\rho(x)\ \ge\ \widehat{\rho}_L(x)-\epsilon\right)=1.
\label{eq:rho_lower_asymp_ineq}
\end{align}
If, in addition, $\Cov\{Y(0),Y(1)\mid X,Z\}\ge 0$, then the same conclusion holds for the estimator $\widehat{\rho}_L(x)=\widehat A(x)/\{\widehat\sigma_0(x)\widehat\sigma_1(x)\}$.
\end{lemma}

\begin{proof}
We first establish the population lower bound. By \eqref{eq:cov_decomp_lower},
\[
\Cov\{Y(0),Y(1)\mid X=x\}=A(x)+B(x),
\]
where $A(x)=\Cov\{\mu_0(x,Z),\mu_1(x,Z)\mid X=x\}$ and
$B(x)=\E\!\left[\Cov\{Y(0),Y(1)\mid X=x,Z\}\mid X=x\right]$.
For each $z$, Cauchy--Schwarz implies
\begin{equation*}
    \begin{split}
&        \Cov\{Y(0),Y(1)\mid X=x,Z=z\} \\&\ge
-\sqrt{Var\{Y(0)\mid X=x,Z=z\}Var\{Y(1)\mid X=x,Z=z\}}\\&
= -\sqrt{\sigma_0(x,z)\sigma_1(x,z)}.
    \end{split}
\end{equation*}
Taking conditional expectation over $Z\mid X=x$ yields $B(x)\ge -B_L(x)$. Therefore,
\[
\Cov\{Y(0),Y(1)\mid X=x\}\ge A(x)-B_L(x),
\]
and dividing by $\sigma_0(x)\sigma_1(x)>0$ gives $\rho(x)\ge\rho_L(x)$ as in \eqref{eq:rho_lower}.
Under the sign restriction $\Cov\{Y(0),Y(1)\mid X,Z\}\ge 0$, we have $B(x)\ge 0$ and hence
$\rho(x)\ge A(x)/\{\sigma_0(x)\sigma_1(x)\}$.

We next show $\widehat{\rho}_L(x)\overset{P}{\to}\rho_L(x)$. Under (i)--(ii), conditional dominated convergence gives
\[
\E\{\widehat\mu_t(x,Z)\mid X=x\}\to \E\{\mu_t(x,Z)\mid X=x\},
\qquad
\E\{\widehat\mu_t(x,Z)^2\mid X=x\}\to \E\{\mu_t(x,Z)^2\mid X=x\},
\]
and thus $\widehat A(x)\overset{P}{\to} A(x)$ for any plug-in estimator $\widehat A(x)$ based on conditional moments of
$\widehat\mu_0(x,Z)$ and $\widehat\mu_1(x,Z)$ given $X=x$ (e.g.\ via binning or kernel smoothing). If $\widehat B_L(x)$
is used, the same argument applied to $\sqrt{\widehat\sigma_0(x,Z)\widehat\sigma_1(x,Z)}$ implies
$\widehat B_L(x)\overset{P}{\to} B_L(x)$. By assumption, $\widehat\sigma_t(x)\overset{P}{\to}\sigma_t(x)$ for $t\in\{0,1\}$.
Therefore, by Slutsky's theorem,
\[
\widehat{\rho}_L(x)
=
\frac{\widehat A(x)-\widehat B_L(x)}{\widehat\sigma_0(x)\widehat\sigma_1(x)}
\ \overset{P}{\to}\
\frac{A(x)-B_L(x)}{\sigma_0(x)\sigma_1(x)}
=
\rho_L(x),
\]
with the obvious simplification when the sign restriction is imposed (i.e.\ $\widehat B_L(x)\equiv 0$).

Finally, since $\rho(x)\ge\rho_L(x)$ deterministically, for any $\epsilon>0$,
\[
\Pr\!\left(\rho(x)\ge \widehat{\rho}_L(x)-\epsilon\right)
\ge
\Pr\!\left(\rho_L(x)\ge \widehat{\rho}_L(x)-\epsilon\right)
=
\Pr\!\left(\widehat{\rho}_L(x)\le \rho_L(x)+\epsilon\right)
\to 1,
\]
where the last convergence follows from $\widehat{\rho}_L(x)\overset{P}{\to}\rho_L(x)$. This proves
\eqref{eq:rho_lower_asymp_ineq}.
\end{proof}

\begin{example}
\label{example_for_ultra_tighness}
We show an example when the Naive set \eqref{Naive_definition} is tight in the following sense: for any $\epsilon>0$ there exist an example when the set 
$$C = \left[  \hat{\tau}(x) + u^{0.05}_0(x) + u^{0.05+\epsilon}_1(x), 
                 \hat{\tau}(x) - l^{0.05}_0(x) - l^{0.05+\epsilon}_1(x) \right]$$
does not have $90\%$ marginal coverage. 

 Consider the case $Y(1) =  - Y(0)$ (that is, $\rho = -1$), where 
    $$Y(1) =
\begin{cases} 
-1, & \text{with probability } 0.05+\epsilon \\
0, & \text{with probability } 0.9-\epsilon\\
1, & \text{with probability } 0.05
\end{cases}
$$
Consider $C_1 = [-1,0]$ and $C_0 = [0, 1]$. One can see that $P(Y(1)\in C_1) = 0.95$ and  $P(Y(0)\in C_0) = 0.95-\epsilon$. However,  the combination $C = [-1, 1]$ satisfy 
$P(Y(1) - Y(0) \in C) = 0.9-\epsilon<0.9$. Therefore, calibrating $Y(0)$ and $Y(1)$ on any smaller level than $0.95$ leads to non-valid coverage. 
\end{example}

\begin{customthm}{\ref{motivation_theorem}}[Motivation under a perfect (asymptotic) scenario]
Let $x\in\mathcal{X}$ and $\operatorname{cor}(Y(0), Y(1) \mid X = x) = \rho\in[-1, 1]$. Assume a perfect scenario:  $(Y(1), Y(0))\mid X=x$ is Gaussian,  $\hat{\mu}_t(x) =  \mu_t(x)$ and suppose that we found conditionally valid prediction intervals:
\begin{align*}
 &\mathbb{P}(Y(t) \leq \hat{\mu}_t(x) + u_t(x)\mid X=x)  = 0.95, \quad \mathbb{P}(Y(t) \geq \hat{\mu}_t(x) - l_t(x)  \mid X=x)=0.95, \,\,\,\,t=0,1.
\end{align*}
Then,  $CW(\rho)$ prediction intervals \eqref{D_rho_intervals_definition} are optimal in a sense that it is the smallest prediction set satisfying:
\begin{align*}
 \mathbb{P}(Y(1) - Y(0)\in C_{ITE}(X)\mid X=x)= 0.9.
\end{align*}
\end{customthm}

\begin{proof}
First, we introduce some notation:
\begin{itemize}
    \item Let \( c := \Phi^{-1}(0.95) \approx 1.6449 \) denote the 0.95 quantile of a standard Gaussian random variable.
        \item Let \( \sigma_t^2(x) := \operatorname{Var}(Y(t) \mid X = x) \) denote the conditional variance.
\end{itemize}
Without loss of generality, we assume that \( Y(t) \mid X = x \) is centered and that \( \hat{\mu}_t(x) = 0 \), for both $t=0, 1$. Under this assumption, observe the following:
\begin{itemize}
    \item \( u_t(x) = \operatorname{Quantile}_{0.95}(Y(t) \mid X = x) \).
    \item Since \( Y(t) \mid X = x \) is Gaussian and centered, symmetry implies \( l_t(x) = u_t(x) \).
    \item Therefore, \( u_t(x) = c \cdot \sigma_t(x) \), by the standard form of the quantile function for a Gaussian distribution.
    \item $\operatorname{Var}(Y(1) -  Y(0) \mid X=x) = \sigma_0^2(x) + \sigma_1^2(x) - 2\rho \sigma_0(x) \sigma_1(x)$ as a classical form of a sum of correlated gaussian random variables. Therefore, 
$$ \operatorname{Quantile}_{0.95}(Y(1)- Y(0) \mid X=x) = c\sqrt{\sigma_0^2(x) + \sigma_1^2(x) - 2\rho \sigma_0(x) \sigma_1(x) }.$$
\end{itemize}
Putting all these results together: 
\begin{align*}
    &\mathbb{P}(Y(1)- Y(0) \leq D_\rho (u_1(x), l_0(x))\\& =\mathbb{P}(Y(1)- Y(0)  \leq \sqrt{ l_0^2(x) + u_1^2(x)-2\rho l_0(x) u_1(x)}\mid X=x)\\& =\mathbb{P}(Y(1)- Y(0)  \leq \sqrt{ u_0^2(x) + u_1^2(x)-2\rho u_0(x) u_1(x)}\mid X=x) \\&=  \mathbb{P}(Y(1)- Y(0) \leq c\sqrt{\sigma_0^2(x) + \sigma_1^2(x) - 2\rho \sigma_0(x) \sigma_1(x) })\mid X=x)\\&
    = \mathbb{P}(Y(1)- Y(0) \leq  \operatorname{Quantile}_{0.95}(Y(1)- Y(0) \mid X=x) \mid X=x ) = 0.95.
\end{align*}
Analogously 
\begin{align*}
    &\mathbb{P}(Y(1)- Y(0)  \geq -\sqrt{ u_0^2(x) + l_1^2(x)-2\rho u_0(x) l_1(x)}\mid X=x) = 0.95.
\end{align*}
Hence, we proved that \begin{align*}
 \mathbb{P}(Y(1) - Y(0)\in C_{ITE}(X)\mid X=x)= 0.9.
\end{align*}
Finally, it is well known that for any Gaussian random variable \( Z \), the shortest prediction interval with 90\% coverage is given by  
\(
\left( \operatorname{Quantile}_{0.05}(Z), \operatorname{Quantile}_{0.95}(Z) \right).
\)
By defining \( Z = Y(1) - Y(0) \) and conditioning on any \( X = x \), it follows that the \( D_\rho \) prediction set \( C_{\text{ITE}}(x) \) is the smallest possible interval achieving the desired coverage. Moreover, by continuity of the Gaussian distribution, for any \( \epsilon > 0 \), we necessarily have:
\[
\mathbb{P}\left(Y(1) - Y(0) \leq \sqrt{ l_0^2(x) + u_1^2(x) - 2\rho\, l_0(x) u_1(x)} - \epsilon \mid X = x \right) < 0.95,
\]
\[
\mathbb{P}\left(Y(1) - Y(0) \geq \sqrt{ u_0^2(x) + l_1^2(x) - 2\rho\, u_0(x) l_1(x)} - \epsilon \mid X = x \right) < 0.95.
\]
\end{proof}

\begin{customthm}{\ref{Theorem_sum}}[Largest-width Scenario $\rho = -1$]
\,
\begin{tcolorbox}[colback=gray!5!white, colframe=gray!20!white, boxrule=0.2pt, left=1pt, right=1pt]
\begin{flushleft}
For notational convenience in the proof, we re-define 
\[
\tilde{Y}_0 := -Y(0), \qquad \tilde{Y}_1 := Y(1), \quad \text{ and } \quad f_t(x) := \mathbb{E}[\tilde{Y}_t \mid X=x], \quad t=0,1,
\]
so that we are interested in $ITE = \tilde{Y}_1 + \tilde{Y}_0$ and CATE becomes $\tau(x) = f_0(x) + f_1(x)$. With this notation in place, an equivalent formulation of Theorem~\ref{Theorem_sum} is given below. 
\end{flushleft}
\end{tcolorbox}

Assume that $\rho = -1$, i.e., $\operatorname{cor}(\tilde{Y}_0, \tilde{Y}_1 \mid X = x) = +1$.  Suppose we have found two functions $\tilde u_0(x) = \hat{f}_0(x) + u_0(x)$ and $\tilde u_1(x)=\hat{f}_1(x) +u_1(x)$ that provide marginally valid upper bounds:
\[
\mathbb{P}(\tilde{Y}_t \leq \tilde u_t(X)) \geq 0.9, \quad t = 0,1.
\]
Then, the $CW(\rho)$ prediction intervals are marginally valid:
\[
\mathbb{P}(\tilde{Y}_1 + \tilde{Y}_0 \leq   \hat{\tau}(X)  + D_{\rho}\big(u_0(X), u_1(X)\big)) \geq 0.9,
\]
if at least one of the following conditions is satisfied:
\begin{enumerate}
    \item \textbf{Conditional Coverage:} The bounds satisfy conditional coverage:    \[
    \mathbb{P}(\tilde{Y}_t \leq \tilde u_t(X) \mid X = x) \geq 0.9, \quad \forall x \in \mathcal{X}, \quad t = 0,1.
    \]
    \item \textbf{Similarity:} The bounds have similar coverage:
\[
    \mathbb{P}(\tilde{Y}_0 \leq \tilde u_0(X) \mid X = x) = \mathbb{P}(\tilde{Y}_1 \leq \tilde u_1(X) \mid X = x), \quad \forall x \in \mathcal{X}.
    \]
    \item \textbf{Naive:} $\mathbb{P}(\tilde{Y}_t \leq \tilde u_t(X)) \geq 0.95$ for both $t=0,1$.
    \item \textbf{Convexity of the Quantile Function:} The quantile function of $\tilde{Y}_t \mid X = x$ is convex on $[\delta,1]$, where
    \[
    \delta = \inf_{x\in\mathcal{X}} \mathbb{P}(\tilde{Y}_t \leq \tilde u_t(X) \mid X = x), \quad t = 0,1.
    \]
    (The quantile function is typically convex in its tail for most standard distributions, such as Uniform, Gaussian Gamma, etc.).
    \item $\boldsymbol{\rho = +1}$ while we use $ D_{-1}$ prediction bounds.
\end{enumerate}
Additionally, the bound is tight in the sense that, without additional assumptions—even if all of the above conditions hold—there exists a situation where for any $\epsilon>0, $
\[
\mathbb{P}(\tilde{Y}_0 + \tilde{Y}_1 \leq \hat{\tau}(X)  + D_{\rho}\big(u_0(X), u_1(X)\big)  - \epsilon) < 0.9.
\]
\end{customthm}

\begin{proof}
Notice that $$ \hat{\tau}(X)  + D_{\rho}\big(u_0(X), u_1(X)\big) = \tilde u_0(X)+\tilde u_1(X) .$$
Define $$\phi_t(x) = \mathbb{P}(\tilde{Y}_t \leq \tilde u_t(X) \mid X = x), \quad t=0,1.$$ 

\textbf{Bullet-point 1: }  Direct consequence of  Lemma~\ref{lemma_about_rho_pm1}. In fact, this guarantees us not only marginal coverage, but even stronger conditional coverage. 

\textbf{Bullet-point 2: }  We have for $t=0,1$$$\int_x   \phi_t(x) dF_X(x) \geq 0.9.$$Using Lemma~\ref{lemma_about_rho_pm1}, for each $x\in\mathcal{X}$ we have $$\mathbb{P}(\tilde{Y}_0 +\tilde{Y}_1\leq \tilde u_0(X)+\tilde u_1(X) \mid X = x)\geq \min(\phi_0(x),\phi_1(x))  .$$
Together
\begin{equation*}
    \begin{split}
&\int_x   \mathbb{P}(\tilde{Y}_0+\tilde{Y}_1 \leq \tilde u_0(X)+\tilde u_1(X) \mid X = x) dF_X(x)        \\&
\geq  \int_x \min(\phi_0(x), \phi_1(x)) dF_X(x)\\& 
= \int_x \phi_t(x) dF_X(x) \geq 0.9.
    \end{split}
\end{equation*}

\textbf{Bullet-point 3: }  This statement follows from Bonferroni correction, as also noted in \citep{Lei_2021_Conformal_Inference}, since $\mathbb{P}(\tilde{Y}_0 \leq \tilde u_0(X)) \geq 0.95$ and $\mathbb{P}(\tilde{Y}_1\leq \tilde u_1(X)) \geq 0.95$ always implies $\mathbb{P}(\tilde{Y}_0 + \tilde{Y}_1 \leq \tilde u_0(X) + \tilde u_1(X)) \geq 0.9$.

\textbf{Bullet-point 4: }  
\textbf{Fact: }For any two random variables $Z_0, Z_1$ with finite correlation, it holds that $cor(Z_0, Z_1)=1\iff Z_1 = a+bZ_0$ for some $a,b\in\mathbb{R}$ (follows directly from Cauchy-Schwarz inequality). 

Fix $x\in\mathcal{X}$. Denote $\tilde{Y}_1 = a+b\tilde{Y}_0$ where $b>0$, conditioning on $X=x$. Due to convexity, we directly have (see also Lemma~\ref{lemma_number_46})
$$\mathbb{P}(\tilde{Y}_0+\tilde{Y}_1 \leq \tilde u_0(X)+\tilde u_1(X) \mid X = x)\geq \frac{\phi_0(x) + b\phi_1(x)}{1+b}.$$
Therefore
\begin{equation*}
    \begin{split}
&\int_x   \mathbb{P}(\tilde{Y}_0+\tilde{Y}_1 \leq \tilde u_0(X)+\tilde u_1(X) \mid X = x) dF_X(x)        \\&
\geq  \int_x \frac{\phi_0(x) + b\phi_1(x)}{1+b} dF_X(x)\\& 
= \frac{1}{1+b}\int_x \phi_0(x) dF_X(x) +  \frac{b}{1+b}\int_x \phi_1(x) dF_X(x)\\&\geq \frac{1}{1+b}0.9 +  \frac{b}{1+b}0.9= 0.9.
    \end{split}
\end{equation*}

\textbf{Bullet-point 5:} 
Fix $x\in\mathcal{X}$ and in the proof consider conditioning on $X=x$ in every step. Define \( \tilde{Y}^{cen}_t := \tilde{Y}_t - \mathbb{E}[\tilde{Y}_t] \) for \( t = 0, 1 \). It follows that \( \tilde{Y}^{cen}_1 = -b_x \tilde{Y}^{cen}_0 \) for some $b_x>0$. Assume $b_x\leq 1$, otherwise we switch $\tilde{Y}^{cen}_1$ with $\tilde{Y}^{cen}_0$. Simple computation gives us 
\begin{align*}
&   \mathbb{P}(\tilde{Y}_0 +\tilde{Y}_1\leq \tilde u_0(X)+\tilde u_1(X) \mid X = x) 
\\&= \mathbb{P}(\tilde{Y}^{cen}_0 +\tilde{Y}^{cen}_1 \leq \tilde u_0(x) - f_0(x) +\tilde u_1(x) - f_1(x)\mid X=x  )
 \\&= \mathbb{P}((1-b_x)\tilde{Y}^{cen}_0 \leq\tilde u_0(x) - f_0(x) + \tilde u_1(x) - f_1(x)\mid X=x  )
  \\&\geq \mathbb{P}(\tilde{Y}^{cen}_0 \leq  \tilde u_0(x) - f_0(x)\mid X=x  )
  \\& = \mathbb{P}( \tilde{Y}_0 \leq  \tilde u_0(x) \mid X=x  )
\end{align*}
Since this holds for every $x$, we have $ \mathbb{P}(\tilde{Y}_0 +\tilde{Y}_1\leq\tilde u_0(X)+\tilde u_1(X) ) \geq  \mathbb{P}( \tilde{Y}_0 \leq  \tilde u_0(X))\geq 0.9$. 

\textbf{Tightness: } Consider any continuous bounded distribution $F$ (possibly with a convex quantile function) with upper support bound $sup(F)<\infty$. Let 
$$X =
\begin{cases} 
-1, & \text{with probability } 0.05 \\
0, & \text{with probability } 0.05 \\
1, & \text{with probability } 0.9, 
\end{cases}\quad \quad
\tilde{Y}_1 = \begin{cases} 
\mathbb{P}(\tilde{Y}_1=10)= 1, & if\,\,\,X=-1 \\
\mathbb{P}(\tilde{Y}_1=10)= 1, & if\,\,\,X=0 \\
\tilde{Y}_1\sim F, & if\,\,\,X=1 , 
\end{cases}
$$

$$
u_0(x) = \begin{cases} 
10, & if\,\,\,x=-1 \\
9, & if\,\,\,x=0 \\
sup(F) & if\,\,\,x=1 , 
\end{cases}, 
u_1(x) = \begin{cases} 
9, & if\,\,\,x=-1 \\
10, & if\,\,\,x=0 \\
sup(F) & if\,\,\,x=1 , 
\end{cases}
$$
Then, $\mathbb{P}(\tilde{Y}_1 + \tilde{Y}_0 \leq\tilde u_0(X) + \tilde u_1(X))=0.9$ since this happens if and only if $X=1$. However, for any $\epsilon$ we have that $\mathbb{P}(\tilde{Y}_1 + \tilde{Y}_0 \leq\tilde u_0(X) + \tilde u_1(X)-\epsilon)=0.9F(sup(F) - \epsilon)<0.9$ 

\end{proof}

\begin{lemma}[Case $\rho =\pm 1$]\label{lemma_about_rho_pm1}
Let $X, Y$ be a pair of random variables such that $Y = a + b X$ for some $a,b\in\mathbb{R}$. Let $q_x, q_y\in\mathbb{R}$ such that $$\mathbb{P}(X\leq  q_x)\geq 0.9, \,\,\,\,\,\, \mathbb{P}(Y\leq q_y)\geq 0.9.$$
Then $$\mathbb{P}(X+Y \leq q_x + q_y)\geq 0.9.$$
\end{lemma}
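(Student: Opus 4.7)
The plan is to split on the sign of $b$ in the affine relation $Y=a+bX$, so that in every case the event $\{X+Y\leq q_x+q_y\}$ reduces to a half-line in $X$ that contains one of the two half-lines of probability at least $0.9$ furnished by the hypothesis. The trivial inclusion $\{X\leq q_x\}\cap\{Y\leq q_y\}\subseteq\{X+Y\leq q_x+q_y\}$ will be used implicitly; the whole difficulty is to show that the intersection (or a suitable enlargement of it) has probability at least $0.9$.

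First I would handle $b\geq 0$. Both $\{X\leq q_x\}$ and $\{Y\leq q_y\}=\{X\leq(q_y-a)/b\}$ are then half-lines of the form $\{X\leq\cdot\}$ (with the degenerate sub-case $b=0$ handled separately, since $\mathbb{P}(Y\leq q_y)\geq 0.9$ forces $a\leq q_y$). Two half-lines in the same direction are nested, so their intersection has probability $\min(\mathbb{P}(X\leq q_x),\mathbb{P}(Y\leq q_y))\geq 0.9$, and on it $X+Y\leq q_x+q_y$ is immediate. For $b<0$, set $d:=(q_y-a)/b$, so that $\{Y\leq q_y\}=\{X\geq d\}$. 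A preliminary step is to deduce $d\leq q_x$: otherwise $\{X\leq q_x\}$ and $\{X\geq d\}$ would be disjoint, contradicting that their probabilities sum to at least $1.8$. I would then rewrite $X+Y=a+(1+b)X$ and sub-split on the sign of $1+b$. When $1+b>0$, the event $\{X+Y\leq q_x+q_y\}$ becomes $\{X\leq M\}$ with $M:=(q_x+q_y-a)/(1+b)$, and the required inequality $M\geq q_x$ clears, after multiplying by the positive denominator, to the already-established $d\leq q_x$; when $1+b<0$, the event instead becomes $\{X\geq M\}$, and the symmetric inequality $M\leq d$ reduces, after multiplication by the positive quantity $b(1+b)$, to the same relation. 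The degenerate case $b=-1$, in which $X+Y=a$ is deterministic, is covered by rearranging $d\leq q_x$ directly to $a\leq q_x+q_y$.

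The only point I expect to demand care is the bookkeeping of inequality flips when dividing by $b$ and by $1+b$, both of which can be negative; beyond this sign-tracking and the handling of the edge values $b\in\{0,-1\}$, the argument uses nothing more than the union bound and linearity, and the entire proof collapses to the single comparison $d\leq q_x$ supplied by inclusion--exclusion.
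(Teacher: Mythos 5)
Your proof is correct, and for the main case it rests on the same idea as the paper's: reduce both hypotheses and the target event to half-lines in $X$ and compare thresholds. Where you differ is in completeness. The paper's proof sets $q'_y=(q_y-a)/b$, asserts $\mathbb{P}(X\le q'_y)\ge 0.9$, and then lower-bounds the threshold $\frac{q_x+bq'_y}{1+b}$ by $\min\{q_x,q'_y\}$ as a weighted average; both of these steps are valid only when $b>0$ (for $b<0$ the event $\{Y\le q_y\}$ is $\{X\ge q'_y\}$, and the weights $\frac{1}{1+b},\frac{b}{1+b}$ are no longer both nonnegative), so as written it covers only $b\ge 0$ together with the trivial $b=-1$ case — which suffices for the application in Theorem~\ref{Theorem_sum}, where the reparametrized correlation is $+1$, but not for the lemma as stated for arbitrary $a,b\in\mathbb{R}$. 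Your argument fills exactly this gap: for $b<0$ you extract the key inequality $d=(q_y-a)/b\le q_x$ from the fact that two events of probability at least $0.9$ must intersect, and then the sign split on $1+b$ (plus the degenerate values $b\in\{0,-1\}$) reduces everything to that single comparison. So your route is slightly more elementary (no weighted-average step) and strictly more general, proving the lemma for all real slopes; the paper's version is shorter but implicitly restricted to the positive-slope case it actually uses.
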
\begin{proof}
    If $b=0$ then the statement is trivial. If $b=-1$ then $X+Y = a$ and the statement is, again, trivial. From now on, assume $b\neq 0, -1. $

Define $q'_y = (q_y-a)/b$ and notice that $\mathbb{P}(X\leq q'_y)\geq 0.9$. Since $X+Y = a + (b+1)X$, rewrite $X+Y\leq q_x + q_y$ as 
\[
X+Y\leq q_x + q_y \iff X \leq \frac{q_x + q_y - a}{1 + b} =  \frac{q_x + bq'_y}{1 + b}.
\]
Therefore 
$$\mathbb{P}(X + Y \leq q_x + q_y) = \mathbb{P}(X\leq  \frac{q_x + bq'_y}{1 + b}) \geq  \mathbb{P}(X\leq \min\{q_x, q'_y\})\geq 0.9.$$

\end{proof}

\begin{lemma}
\label{lemma_number_46}
Let $X,Y$ be random variables such that $Y = a+bX$ for some $a\in\mathbb{R},b>0$. Define $Q_X(\cdot), Q_Y(\cdot)$ quantile function of $X$ and $Y$. If $Q_X$ is convex on $(\beta_0, \beta_1)$, where  $\beta_0<\beta_1\in(0,1)$, then
 $$\mathbb{P}(X+Y\leq Q_X(\beta_0) +  Q_Y(\beta_1))\geq \frac{\beta_0 + b\beta_1}{1+b}. $$
    
\end{lemma}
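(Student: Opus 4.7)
The plan is to reduce the two-variable event $\{X+Y \leq Q_X(\beta_0) + Q_Y(\beta_1)\}$ to a single-variable event on $X$, and then exploit the convexity hypothesis on $Q_X$ to lower-bound its probability. The reduction is straightforward because $Y = a + bX$ with $b > 0$ is a strictly increasing affine transformation, so quantile functions transform compatibly: $Q_Y(\beta_1) = a + b\,Q_X(\beta_1)$. Combining this with $X + Y = a + (1+b)X$, the event can be rewritten as
\[
X + Y \leq Q_X(\beta_0) + Q_Y(\beta_1) \iff X \leq \frac{1}{1+b}\,Q_X(\beta_0) + \frac{b}{1+b}\,Q_X(\beta_1).
\]

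Next, I would set $\lambda = \frac{1}{1+b}$, so that $1-\lambda = \frac{b}{1+b}$ and $\lambda\beta_0 + (1-\lambda)\beta_1 = \frac{\beta_0 + b\beta_1}{1+b}$. Since both $\beta_0$ and $\beta_1$ lie in $(\beta_0, \beta_1)$'s closure and $Q_X$ is assumed convex on $(\beta_0, \beta_1)$ (with the natural continuous extension to the endpoints by monotonicity of quantile functions), the convexity inequality yields
\[
Q_X\!\left(\lambda \beta_0 + (1-\lambda)\beta_1\right) \leq \lambda\, Q_X(\beta_0) + (1-\lambda)\, Q_X(\beta_1).
\]
Therefore the RHS of the reduced event is at least $Q_X\!\left(\frac{\beta_0 + b\beta_1}{1+b}\right)$, which gives
\[
\mathbb{P}\!\left(X \leq \lambda Q_X(\beta_0) + (1-\lambda) Q_X(\beta_1)\right) \geq \mathbb{P}\!\left(X \leq Q_X\!\left(\tfrac{\beta_0 + b\beta_1}{1+b}\right)\right).
\]

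Finally, I would invoke the general property of quantile functions that $\mathbb{P}(X \leq Q_X(t)) \geq t$ for any $t \in (0,1)$ (with equality when $F_X$ is continuous at $Q_X(t)$). Applied at $t = \frac{\beta_0 + b\beta_1}{1+b} \in (\beta_0, \beta_1) \subset (0,1)$, this chain of inequalities delivers the stated bound. I do not anticipate any substantial obstacle: the argument is just the change-of-variables for quantiles, a single application of Jensen-type convexity, and the defining property of quantile functions. The only minor subtlety is handling the boundary if $F_X$ is discontinuous or if $Q_X$ is only assumed convex on the open interval, which can be resolved by standard right-continuity/monotonicity arguments or by approximating $\beta_0, \beta_1$ from the interior.
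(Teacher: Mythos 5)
Your proposal is correct and follows essentially the same route as the paper's proof: rewrite the event via $Q_Y(\beta_1)=a+bQ_X(\beta_1)$ and $X+Y=a+(1+b)X$, apply convexity of $Q_X$ at the convex combination $\frac{\beta_0+b\beta_1}{1+b}$, and finish with the quantile property $\mathbb{P}(X\leq Q_X(t))\geq t$. Your extra care about the open-interval/endpoint issue and about using the inequality form of the quantile property (rather than equality) is a minor refinement of the same argument, not a different approach.
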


\begin{proof}

Using Jensen inequality, it holds that:  
\begin{align*}
&\mathbb{P}(X+Y\leq Q_X(\beta_0) +  Q_Y(\beta_1))
=\mathbb{P}(a+(1+b)X\leq Q_X(\beta_0) + a + b Q_X(\beta_1))\\&
\mathbb{P}(X\leq \frac{1}{1+b}Q_X(\beta_0)  + \frac{b}{1+b} Q_X(\beta_1))
\geq \mathbb{P}(X\leq Q_X(\frac{\beta_0 + b\beta_1}{1+b}) ) \\&= \frac{\beta_0 + b\beta_1}{1+b}. 
\end{align*}
\end{proof}

\begin{customthm}{\ref{rho_1_theorem}}[Best-Case Scenario $\rho = 1$]
\,
\begin{tcolorbox}[colback=gray!5!white, colframe=gray!20!white, boxrule=0.2pt, left=1pt, right=1pt]
\begin{flushleft}
For notational convenience in the proof, we re-define 
\[
\tilde{Y}_0 := -Y(0), \qquad \tilde{Y}_1 := Y(1), \quad \text{ and } \quad f_t(x) := \mathbb{E}[\tilde{Y}_t \mid X=x], \quad t=0,1,
\]
so that we are interested in $ITE = \tilde{Y}_1 + \tilde{Y}_0$ and CATE becomes $\tau(x) = f_0(x) + f_1(x)$. With this notation in place, an equivalent formulation of Theorem~\ref{rho_1_theorem} is given below. 
\end{flushleft}
\end{tcolorbox}

Assume that $\rho = 1$, i.e., $\operatorname{cor}(\tilde{Y}_0, \tilde{Y}_1 \mid X = x) = -1$. Suppose that our bounds are marginally valid:
\[
\mathbb{P}(\tilde{Y}_t \leq \hat{f}_t(X)+u_t(X)) \geq 0.9, \quad t = 0,1.
\]
Then, for any $\tilde\rho\in[-1,1]$, the $D_{\tilde\rho}$-bounds are marginally valid:
\[
\mathbb{P}\bigg(\tilde{Y}_0 + \tilde{Y}_1 \leq  \hat{\tau}(X)  + D_{\tilde\rho}\big(u_0(X), u_1(X)\big)\bigg) \geq 0.9,
\]
provided that the bias in $\hat{\tau}(x)$ is small, in the following sense:
\begin{equation}\label{eq_extremly_underestimated}
\begin{aligned}
 \text{if } b_x \leq  1: \quad & \tau(x)-\hat\tau(x) \leq D_{\tilde\rho}\big(u_0(x), u_1(x)\big) + (1 - b_x)\big(e_1(x)-u_1(x)\big), \\ \text{if } b_x \geq  1: \quad & \tau(x) -\hat\tau(x)\leq D_{\tilde\rho}\big(u_0(x), u_1(x)\big) + \big(1 - \frac{1}{b_x} \big) \big(e_0(x) - u_0(x)\big), 
\end{aligned}
\end{equation}
where $e_t(x) =   f_t(x) - \hat f_t(x) $, $t=0,1$, and $b_x = \frac{\operatorname{sd}(\tilde{Y}_1 \mid X = x)}{\operatorname{sd}(\tilde{Y}_0 \mid X = x)}$, for $x\in\mathcal{X}.$ 
\end{customthm}

\begin{proof}
Fix \( x \in \mathcal{X} \). Define the function \( \phi_t(x) \) as:
\[
\phi_t(x) = \mathbb{P}\left( \tilde{Y}_t \leq \hat{f}_t(X) + u_t(X) \mid X = x \right), \quad t = 0, 1.
\]
We use the following well-known fact: For any two random variables \( Z_0 \) and \( Z_1 \) with finite correlation, it holds that \( \operatorname{cor}(Z_0, Z_1) = -1 \) if and only if \( Z_1 = a - b Z_0 \) for some constants \( a, b \in \mathbb{R} \), with \( b > 0 \) (this follows directly from the Cauchy-Schwarz inequality). Hence, conditioning on \( X = x \), we define the relation \( \tilde{Y}_1 = a_x - b_x \tilde{Y}_0 \), where \( b_x \) is given by
\[
b_x = \frac{\operatorname{sd}(\tilde{Y}_1 \mid X = x)}{\operatorname{sd}(\tilde{Y}_0 \mid X = x)}.
\]
Define the centered versions \(\tilde{Y}^{cen}_t := \tilde{Y}_t - \mathbb{E}[\tilde{Y}_t] \) for \( t = 0, 1 \). It follows that \(\tilde{Y}^{cen}_1 = -b_x\tilde{Y}^{cen}_0 \).

\textbf{Case \( b_x = 1 \):} In this case, \(\tilde{Y}^{cen}_1 +\tilde{Y}^{cen}_0 \overset{a.s.}{=} 0 \). Hence,
\[
\mathbb{P}\bigg(\tilde{Y}_0 + \tilde{Y}_1 \leq  \hat{\tau}(X)  + D_{\tilde\rho}\big(u_0(X), u_1(X)\big)\bigg) =\mathbb{P}\bigg(0\leq  e_0(x)+e_1(x)  + D_{\tilde\rho}\big(u_0(X), u_1(X)\big)\bigg).
\]
This is satisfied since \eqref{eq_extremly_underestimated}
reads as \[
e_0(x) + e_1(x) \leq  D_{\tilde\rho}\big(u_0(x), u_1(x)\big).
\]
This directly implies that the bounds hold with probability 1. 

\textbf{Main Argument:}
Consider the case $b_x<1$. Using simple computations,
\[
\mathbb{P}(\tilde{Y}_0+\tilde{Y}_1 \leq \hat{f}_0(x)+  \hat{f}_1(x) + D_\rho\big(u_0(x) , u_1(x)\big)  \mid X=x)
\]
can be rewritten as
\[
\mathbb{P}((1-b_x)\tilde{Y}^{cen}_0 \leq e_0(x) +   e_1(x) + D_\rho\big(u_0(x) , u_1(x)\big) \mid X=x ).
\]
Rearranging,
\[
\mathbb{P}(\tilde{Y}^{cen}_0 \leq \frac{e_0(x) +   e_1(x) + D_\rho\big(u_0(x) , u_1(x)\big)}{1-b_x} \mid X=x).
\]
Using the following algebraic fact
\begin{align*}
  &  e_0(x)+  b_xe_1(x) \leq D_{\rho}\big(u_0(x), u_1(x)\big) - (1-b_x)u_1(x),\\&\quad\quad\quad\quad\quad\quad\iff\\
    &\frac{e_0(x) +   e_1(x) + D_\rho\big(u_0(x) , u_1(x)\big)}{1-b_x} \geq  e_0(x)+u_0(x),
\end{align*}
we obtain 
\[
\mathbb{P}(\tilde{Y}^{cen}_0 \leq e_0(x)+u_0(x) \mid X=x) = \mathbb{P}(\tilde{Y}_0+\tilde{Y}_1 \leq  \hat\tau(x) + D_\rho\big(u_0(x) , u_1(x)\big)  \mid X=x).
\]
The marginal coverage property follows by integrating over \( X \). 

An analogous argument can be done when $b_x>1$ as the problem is symmetric. 
\end{proof}

\begin{customthm}{\ref{theorem_general_gaussian}}[General Gaussian case]
\,
\begin{tcolorbox}[colback=gray!5!white, colframe=gray!20!white, boxrule=0.2pt, left=1pt, right=1pt]
\begin{flushleft}
For notational convenience in the proof, we re-define 
\[
\tilde{Y}_0 := -Y(0), \qquad \tilde{Y}_1 := Y(1), \quad \text{ and } \quad f_t(x) := \mathbb{E}[\tilde{Y}_t \mid X=x], \quad t=0,1,
\]
so that we are interested in $ITE = \tilde{Y}_1 + \tilde{Y}_0$ and CATE becomes $\tau(x) = f_0(x) + f_1(x)$. With this notation in place, an equivalent formulation of Theorem~\ref{theorem_general_gaussian} is given below. 
\end{flushleft}
\end{tcolorbox}

Let $\rho\in[-1,1]$ and assume that $\operatorname{cor}(\tilde{Y}_0, \tilde{Y}_1 \mid X = x) = -\rho$. Suppose that the predictive bounds are marginally valid:
\[
\mathbb{P}(\tilde{Y}_t \leq \hat{f}_t(X)+u_t(X)) \geq 0.9, \quad t = 0,1.
\]
Then, for any $\tilde\rho\leq \rho$, the $CW(\tilde\rho)$ intervals remain marginally valid:
\[
\mathbb{P}\bigg(\tilde{Y}_0 + \tilde{Y}_1 \leq  \hat{\tau}(X)  + D_{\tilde\rho}\big(u_0(X), u_1(X)\big)\bigg) \geq 0.9,
\]
provided that the following conditions hold:  
\begin{itemize}
    \item \textbf{Gaussianity}: $(\tilde{Y}_0, \tilde{Y}_1)\mid X=x$ is normally distributed for all $x\in\mathcal{X}$, 
    \item \textbf{Coverage-variance relation}:   $\mathbb{E}[w(X)\phi_0(X)+ (1-w(X)) \phi_1(X)] \geq 0.9,$ where  
    \[
       \phi_t(x) := \mathbb{P}(\tilde{Y}_t \leq \hat{f}_t(X) + u_t(X) \mid X=x), \quad w(x) := (1-\rho^2) \frac{\operatorname{Var}(\tilde{Y}_0\mid X=x)}{\operatorname{Var}(\tilde{Y}_0+\tilde{Y}_1\mid X=x)} \in [0,1]. \]
    This condition holds, for example, under \textit{Conditional Coverage, Similarity, Homoskedasticity}, if  
    $\frac{\operatorname{Var}(\tilde{Y}_0\mid X=x)}{\operatorname{Var}(\tilde{Y}_1\mid X=x)} = \text{constant},$ or if $\rho = \pm 1$. 
    \item \textbf{$\boldsymbol{\hat\tau(x)}$ is not too biased}: $\tau(x) - \hat\tau(x) = e_0(x) +e_1(x) \leq  D_{\tilde\rho}\big(u_0(x), u_1(x)\big) - D_{\rho}\big(u_0(x)-e_0(x), u_1(x)-e_1(x)\big)$, where $e_t(x) = f_t(x)-\hat f_t(x)$. This holds for example if $\rho = -1$.
\end{itemize}
\end{customthm}

\begin{proof}

We use the notation $\tilde{u_t}:=u_t -e_t$. 

\textbf{Main idea: assuming $\rho = \tilde\rho = 0$ and  $\hat{f}_t(x) = f_t(x) = 0$:}
\begin{equation*}
    \begin{split}
&\mathbb{P}\bigg(\tilde{Y}_0 + \tilde{Y}_1 \leq  \hat{\tau}(X)  + D_{\tilde\rho}\big(u_0(X), u_1(X)\big)\bigg)\\&=\int_x   \mathbb{P}(\tilde{Y}_0+\tilde{Y}_1 \leq \sqrt{u_0^2(X)+u_1^2(X)} \mid X = x) dF_X(x)        \\&
\overset{Lemma~\ref{lemma_subGauss}, \text{ case }\rho = 0}{\geq } \int_x  w(x)\phi_0(x)+ (1-w(x))\phi_1(x) dF_X(x)\\& 
=  \mathbb{E}[w(X)\phi_0(X)+ (1-w(X)) \phi_1(X)] \geq 0.9.
    \end{split}
\end{equation*}

\textbf{General case:} 
\begin{equation*}
    \begin{split}
&\int_x   \mathbb{P}(\tilde{Y}_0+\tilde{Y}_1 \leq \hat\tau(x)+ \sqrt{u_0^2(x)+u_1^2(x) - 2\tilde\rho u_0(x)u_1(x)} \mid X = x) dF_X(x)        \\&
=\int_x   \mathbb{P}( \tilde{Y}_0 - f_0(x)+ \tilde{Y}_1 - f_0(x) \leq -e_0(x)-e_1(x) +  \sqrt{u_0^2(x)+u_1^2(x) - 2\tilde\rho u_0(x)u_1(x)} \mid X = x) dF_X(x)        \\&
\overset{Bias-Assumption}{\geq } \int_x   \mathbb{P}(\tilde{Y}_0 - f_0(x)+ \tilde{Y}_1 - f_0(x) \leq  D_\rho{(\tilde u_0(x),\tilde u_1(x))} \mid X = x) dF_X(x)        \\&
\overset{Lemma~\ref{lemma_subGauss}}{\geq } \int_x  w(x)\phi_0(x)+ (1-w(x))\phi_1(x) dF_X(x)=\mathbb{E}[w(X)\phi_0(X)+ (1-w(X)) \phi_1(X)]\\& \overset{Coverage-variance}{\geq } 0.9 .
    \end{split}
\end{equation*}

\textbf{Observation 1:} The condition  
\[
\mathbb{E}[w(X)\phi_0(X)+ (1-w(X)) \phi_1(X)] \geq 0.9
\]  
holds under \textit{Conditional Coverage, or Similarity, or Homoskedasticity}, or if  
\(
\frac{\operatorname{Var}(\tilde{Y}_0\mid X=x)}{\operatorname{Var}(\tilde{Y}_1\mid X=x)} = \text{constant}, \) or if  \(  \rho = \pm 1.
\)  
This follows because:  
\begin{itemize}
    \item Under \textit{Conditional Coverage}  we have \(\phi_t(x) \geq 0.9\) and hence $\mathbb{E}[w(X)\phi_0(X)+ (1-w(X)) \phi_1(X)] \geq  \mathbb{E}[w(X)0.9+ (1-w(X))0.9] \geq 0.9$.  
    \item  Under \textit{Similarity}, we have \(\phi_0(x) = \phi_1(x)\) and hence $\mathbb{E}[w(X)\phi_0(X)+ (1-w(X)) \phi_1(X)] = \mathbb{E}[\phi_0(X)] \geq 0.9$ .  
    \item Under \textit{Homoskedasticity} or constant variance, we have \(w(x) = \text{constant}\) and hence $\mathbb{E}[w\phi_0(X)+ (1-w) \phi_1(X)]= w \mathbb{E}[\phi_0(X)]+ (1-w)  \mathbb{E}[\phi_1(X)]  \geq 0.9$ since $\mathbb{E}\phi_t(X) \geq  0.9$. 
    \item If \(\rho = \pm 1\), then \(\frac{\operatorname{Var}(\tilde{Y}_0\mid X=x)}{\operatorname{Var}(\tilde{Y}_1\mid X=x)} = \text{constant}\).  
\end{itemize}

\textbf{Observation 2: } If $\rho =-1$ then 
\begin{align*}
& D_{-1}\big(u_0(x), u_1(x)\big) - D_{-1}\big(u_0(x)-e_0(x), u_1(x)-e_1(x)\big)\\&= u_0(x) + u_1(x) -  (u_0(x)-e_0(x))- (u_1(x)-e_1(x)) \\&= e_0 + e_1.
 \end{align*}
Hence always $e_0 + e_1 \leq  D_{-1}\big(u_0(x), u_1(x)\big) - D_{-1}\big(u_0(x)-e_0(x), u_1(x)-e_1(x)\big). $
\end{proof}

\begin{lemma}
\label{lemma_subGauss} 
Let $X, Y$ be a pair of Gaussian centered random variables with correlation $\rho$. Let $q_X, q_Y\in\mathbb{R}$ such that $$\mathbb{P}(X\leq q_X)= \phi_0, \,\,\,\,\,\, \mathbb{P}(Y\leq   q_Y)= \phi_1, \quad \phi_0, \phi_1\in (0,1).$$
Then, 
\begin{equation}\label{eq43576}
    \mathbb{P}(X+Y \leq \sqrt{q_X^2 + q_Y^2 + 2\rho q_Xq_Y })\geq (1-w)\phi_0+w\phi_1,
\end{equation}
where $w =(1-\rho^2) \frac{\sigma_Y^2}{\sigma_{X+Y}^2}\in(0,1)$. 

\end{lemma}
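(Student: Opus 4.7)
The plan is to reduce the bivariate Gaussian inequality to a scalar statement about $\Phi$ via a variance decomposition of $X+Y$, and to close the argument using monotonicity of $\Phi$ together with its concavity on $[0,\infty)$. First I would standardize: setting $z_0 := q_X/\sigma_X = \Phi^{-1}(\phi_0)$ and $z_1 := q_Y/\sigma_Y = \Phi^{-1}(\phi_1)$, and observing that $X+Y$ is centered Gaussian with variance $\sigma_{X+Y}^2 = \sigma_X^2 + 2\rho\sigma_X\sigma_Y + \sigma_Y^2$, the LHS of \eqref{eq43576} becomes $\Phi(z^{\star})$ with $z^{\star} := t/\sigma_{X+Y}$ and $t := \sqrt{q_X^2 + q_Y^2 + 2\rho q_X q_Y}$.

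Next, I would decompose $Y$ into its projection on $X$ and an orthogonal residual, $Y = \rho(\sigma_Y/\sigma_X)X + Z$, with $Z \perp X$ and $\operatorname{Var}(Z) = (1-\rho^2)\sigma_Y^2$. Writing $X+Y = \alpha X + Z$ with $\alpha := 1 + \rho\sigma_Y/\sigma_X$, a short calculation gives $\operatorname{Var}(\alpha X)/\sigma_{X+Y}^2 = 1-w$ and $\operatorname{Var}(Z)/\sigma_{X+Y}^2 = w$. This clarifies why $1-w$ and $w$ appear as mixing weights on the RHS: they are precisely the fractions of $\operatorname{Var}(X+Y)$ contributed by the $X$-component and by the independent residual. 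Combined with the algebraic identity $q_X^2 + q_Y^2 + 2\rho q_X q_Y = (q_X + \rho q_Y)^2 + (1-\rho^2)q_Y^2$, this lets me recast the square of the key quantity as a convex combination, $z^{\star\,2} = (1-w)\tilde z^{\,2} + w z_1^2$, where $\tilde z := (q_X + \rho q_Y)/(\sigma_X + \rho\sigma_Y)$ is a fixed linear combination of $z_0$ and $z_1$.

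The remainder is a one-variable argument. The power-mean inequality applied to $|\tilde z|,|z_1|$ with weights $1-w,w$ gives $z^{\star} \geq (1-w)|\tilde z| + w|z_1|$, and monotonicity together with concavity of $\Phi$ on $[0,\infty)$ then yields
\[
\Phi(z^{\star}) \;\geq\; \Phi\!\bigl((1-w)\tilde z + w z_1\bigr) \;\geq\; (1-w)\Phi(\tilde z) + w\Phi(z_1).
\]
It remains to compare $\Phi(\tilde z)$ with $\Phi(z_0)$. A direct calculation shows $\tilde z \geq z_0$ iff $\rho(z_1 - z_0) \geq 0$, so whenever the signs of $\rho$ and $z_1 - z_0$ agree the chain closes immediately. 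In the complementary regime I would instead use the symmetric dual identity $q_X^2 + q_Y^2 + 2\rho q_X q_Y = (q_Y + \rho q_X)^2 + (1-\rho^2)q_X^2$, which runs the same argument with the roles of $X$ and $Y$ swapped, and then combine the two bounds. The degenerate endpoints $\rho = \pm 1$ correspond to $w=0$ and can be handled directly by Lemma~\ref{lemma_about_rho_pm1}, bypassing the Gaussianity assumption altogether.

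The hard part will be stitching the two symmetric decompositions into the single asymmetric combination $(1-w)\phi_0 + w\phi_1$ demanded by the statement: the variance fractions attributed to $X$ and to $Y$ within $X+Y$ are generally distinct, so the two decompositions produce bounds with different weight systems that must be merged. I expect this final step to require either a finer use of the concavity of $\Phi$ in the relevant tail region (leveraging $\phi_0,\phi_1 \geq 1/2$, which is the regime of interest in Theorem~\ref{theorem_general_gaussian}) or an auxiliary inequality comparing the two weight systems while carefully exploiting the sign of $z_1 - z_0$ relative to $\rho$.
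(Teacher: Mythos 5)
Your decomposition is, up to relabeling, the same one the paper uses: writing $Y=\rho(\sigma_Y/\sigma_X)X+Z$ with $Z\perp X$ and observing that the residual contributes exactly the fraction $w=(1-\rho^2)\sigma_Y^2/\sigma_{X+Y}^2$ of $\operatorname{Var}(X+Y)$ is precisely the paper's reduction of the correlated case to the independent one; the only stylistic difference is that you close the scalar step with QM--AM plus concavity of $\Phi$ on $[0,\infty)$, while the paper works with convexity of $h(x)=(\Phi^{-1}(x))^2$ applied to the mixed level $\phi_2=(1-w)\phi_0+w\phi_1$. In the regime where your sign condition holds (essentially $\rho(z_1-z_0)\ge 0$, together with the implicit positivity $\tilde z,z_1\ge 0$ that both your concavity step and the paper's identification $\sqrt{h(\phi)}=\Phi^{-1}(\phi)$ require), your chain $\Phi(z^{\star})\ge\Phi((1-w)\tilde z+wz_1)\ge(1-w)\Phi(\tilde z)+w\Phi(z_1)\ge(1-w)\phi_0+w\phi_1$ is correct and arguably cleaner than the paper's algebra.

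The gap you flag at the end, however, is not a stitching difficulty that finer convexity will repair: in the complementary regime the inequality with the stated weights is simply not attainable by your plan. The dual identity yields $\Phi(z^{\star})\ge w'\phi_0+(1-w')\phi_1$ with $w'=(1-\rho^2)\sigma_X^2/\sigma_{X+Y}^2$, and since $w+w'-1=-\rho\big(\rho(\sigma_X^2+\sigma_Y^2)+2\sigma_X\sigma_Y\big)/\sigma_{X+Y}^2<0$ for $\rho>0$, the dual bound is strictly weaker than the target $(1-w)\phi_0+w\phi_1$ exactly when $\phi_0>\phi_1$, so no merging of the two weight systems can recover it. Indeed the target itself fails there: take $\sigma_X=1$, $\sigma_Y=2$, $\rho=0.9$, $\phi_0=0.99$, $\phi_1=0.6$; then $q_X\approx 2.326$, $q_Y\approx 0.507$, $\sqrt{q_X^2+q_Y^2+2\rho q_Xq_Y}\approx 2.791$, $\sigma_{X+Y}\approx 2.933$, so the left side of \eqref{eq43576} is $\Phi(0.952)\approx 0.83$, while $w\approx 0.088$ makes the right side $\approx 0.956$. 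So your proposal can only establish the lemma under an additional condition tying the sign of $\rho$ to the ordering of $\Phi^{-1}(\phi_0)$ and $\Phi^{-1}(\phi_1)$ (or with the weights adapted to whichever of the two decompositions applies). It is worth noting that this is exactly the point the paper's own proof passes over: its ``without loss of generality $\phi_1\ge\phi_0$, otherwise exchange $X$ and $Y$'' silently replaces $(1-w,w)$ by the dual weights $(w',1-w')$, which is not the statement being proved; so the difficulty you identified is real, but it is a defect of the lemma as stated rather than a missing trick in your argument, and any honest completion must add the sign/ordering hypothesis (which does hold in the intended application of Theorem~\ref{theorem_general_gaussian} only under its auxiliary conditions, e.g.\ $\phi_0=\phi_1$ or constant variance ratio).
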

\begin{proof}
Denote the following:

\begin{itemize}
    \item $\phi_2 = (1-w)\phi_0+w\phi_1$. 
    \item     $\Phi^{-1}(x)$ the quantile function of standard Gaussian distribution
    \item  $h(x):=(\Phi^{-1}(x))^2$. It is a simple exercise to show that $h$ is a convex function (see the proof below).
\end{itemize}

First we show the case $\rho=0$. We use this result to prove general case $\rho\in[-1,1]$ by decomposing $X+Y$ into independent components $X+Y = Z+\varepsilon$ where $Z \indep\varepsilon$. 

\textbf{Case $\rho = 0$:} We want to show 
\begin{align*}
    \operatorname{Quantile}_{\phi_2}(X+Y)\leq \sqrt{q_X^2 + q_Y^2 }. 
\end{align*}
We can obviously consider that the equalities hold: $$\mathbb{P}(X\leq q_X)= \phi_0, \,\,\,\,\,\, \mathbb{P}(Y\leq   q_Y)= \phi_1,$$
otherwise the right side of \eqref{eq43576} is just be larger. Note that in such a case, $q_X = \Phi^{-1}(\phi_0)\sigma_X$,  $q_Y = \Phi^{-1}(\phi_1)\sigma_Y$. Finally, we have
\begin{align*}
\operatorname{Quantile}_{\phi_2}(X+Y) &= \Phi^{-1}(\phi_2)\sqrt{\sigma_X^2 + \sigma_Y^2}\\&
=\sqrt{h(\phi_2)\sigma_X^2 + h(\phi_2)\sigma_Y^2}\\&
\leq \sqrt{h(\phi_0)\sigma_X^2 + h(\phi_1)\sigma_Y^2}\\&
= \sqrt{q_X^2 + q_Y^2 },
\end{align*}
where the inequality follows directly from the convexity of $h$. 

\textbf{Case $\rho> 0$:}

Without loss of generality, assume that $\phi_1\geq\phi_0$; otherwise we can make an analogous argument exchanging $X$ and $Y$. We can obviously consider that the equalities hold: $$\mathbb{P}(X\leq q_X)= \phi_0, \,\,\,\,\,\, \mathbb{P}(Y\leq   q_Y)= \phi_1,$$
otherwise the right side of \eqref{eq43576} is just be larger.

Let us write $Y=\alpha X + \varepsilon$, where $X\indep \varepsilon$. This is possible from a basic linear regression property of jointly Gaussian random variables.  We also define $Z = (1+\alpha)X$. Notice the following properties: $$\alpha = \rho\frac{\sigma_Y}{\sigma_X}, \,\,\,\sigma_\varepsilon^2 = \sigma_Y^2 - \alpha^2\sigma_X^2, \,\,\,\sigma_Z^2 = (1+\alpha)^2\sigma_X^2, \,\,\,q_X = \Phi^{-1}(\phi_0)\sigma_X, \,\,\,q_Y = \Phi^{-1}(\phi_1)\sigma_Y.$$

We want to show 
\begin{align*}
    \operatorname{Quantile}_{\phi_2}(X+Y)\leq \sqrt{q_X^2 + q_Y^2+ 2\rho q_Xq_Y }. 
\end{align*}
However, 
\begin{align*}
    \operatorname{Quantile}_{\phi_2}(X+Y) =    \operatorname{Quantile}_{\phi_2}(Z+\varepsilon) \leq \sqrt{\sigma_Z^2h(\phi_0) +\sigma_\varepsilon^2h(\phi_1) } 
\end{align*}
from the already solved independent case. 

It is a simple algebraic exercise to show that $\sqrt{\sigma_Z^2h(\phi_0) +\sigma_\varepsilon^2h(\phi_1) } \leq  \sqrt{q_X^2 + q_Y^2+ 2\rho q_Xq_Y } $: Lets rewrite

\begin{align*}
&\sigma_Z^2h(\phi_0) +\sigma_\varepsilon^2h(\phi_1) =   
 (1+\alpha)^2\sigma_X^2h(\phi_0) + (\sigma_Y^2 - \alpha^2\sigma_X^2)h(\phi_1), \\&
 q_X^2 + q_Y^2+ 2\rho q_Xq_Y = h(\phi_0)\sigma_X^2 + h(\phi_1)\sigma_Y^2+ 2\rho \sigma_X\sigma_Y\sqrt{h(\phi_0)h(\phi_1)}
\end{align*}
and hence 
\begin{align*}
&\sqrt{\sigma_Z^2h(\phi_0) +\sigma_\varepsilon^2h(\phi_1) } \leq  \sqrt{q_X^2 + q_Y^2+ 2\rho q_Xq_Y } \\& \iff\\&
 (1+\alpha)^2\sigma_X^2h(\phi_0) + (\sigma_Y^2 - \alpha^2\sigma_X^2)h(\phi_1) \leq h(\phi_0)\sigma_X^2 + h(\phi_1)\sigma_Y^2+ 2(\alpha \frac{\sigma_X}{\sigma_Y}) \sigma_X\sigma_Y\sqrt{h(\phi_0)h(\phi_1)}\\&\iff\\&
 \alpha^2\bigg( h(\phi_1) -   h(\phi_0)\bigg) +2 \alpha \bigg( \sqrt{h(\phi_0)h(\phi_1)} - h(\phi_0) \bigg)\geq 0. 
\end{align*}
Notice that since we postulated $h(\phi_1)\geq h(\phi_0)$, this inequality is always satisfied for $\alpha\geq 0$. Since $\alpha\geq 0 \iff \rho \geq 0$, we showed that  $\sqrt{\sigma_Z^2h(\phi_0) +\sigma_\varepsilon^2h(\phi_1) } \leq  \sqrt{q_X^2 + q_Y^2+ 2\rho q_Xq_Y } $ and therefore  $    \operatorname{Quantile}_{\phi_2}(X+Y)\leq \sqrt{q_X^2 + q_Y^2+ 2\rho q_Xq_Y }$, what we wanted to show.

\textbf{Case $\rho< 0$:}

Analogously to the case $\rho>0$, we write $X = -\alpha Y  + \varepsilon$ where $Y\indep \varepsilon$ and $\alpha>0$. Using the same steps, 
\begin{align*}
&    \operatorname{Quantile}_{\phi_2}(X+Y)\leq \sqrt{q_X^2 + q_Y^2+ 2\rho q_Xq_Y }\\&\iff\\&
 \alpha^2\bigg(  h(\phi_0) -  h(\phi_1) \bigg) +2 \alpha \bigg(   h(\phi_0) - \sqrt{h(\phi_0)h(\phi_1)}  \bigg)\leq 0. 
\end{align*}
Notice that since we postulated $h(\phi_1)\geq h(\phi_0)$, this inequality is always satisfied for $\alpha> 0$. Therefore, $    \operatorname{Quantile}_{\phi_2}(X+Y)\leq \sqrt{q_X^2 + q_Y^2+ 2\rho q_Xq_Y }$, what we wanted to show. 

\textbf{Proof that $h$ is convex: } Proof by picture is in Figure~\ref{convex_gauss}. Also used in \citep{bodik2025identifiabilitycausalgraphsnonadditive}. Formally, to prove that \( h(x) = \left( \Phi^{-1}(x) \right)^2 \) is convex, we need to show that the second derivative of \( h(x) \) is non-negative for all \( x \in (0, 1) \).

\begin{figure}[h!]
    \centering
    \includegraphics[width=0.5\linewidth]{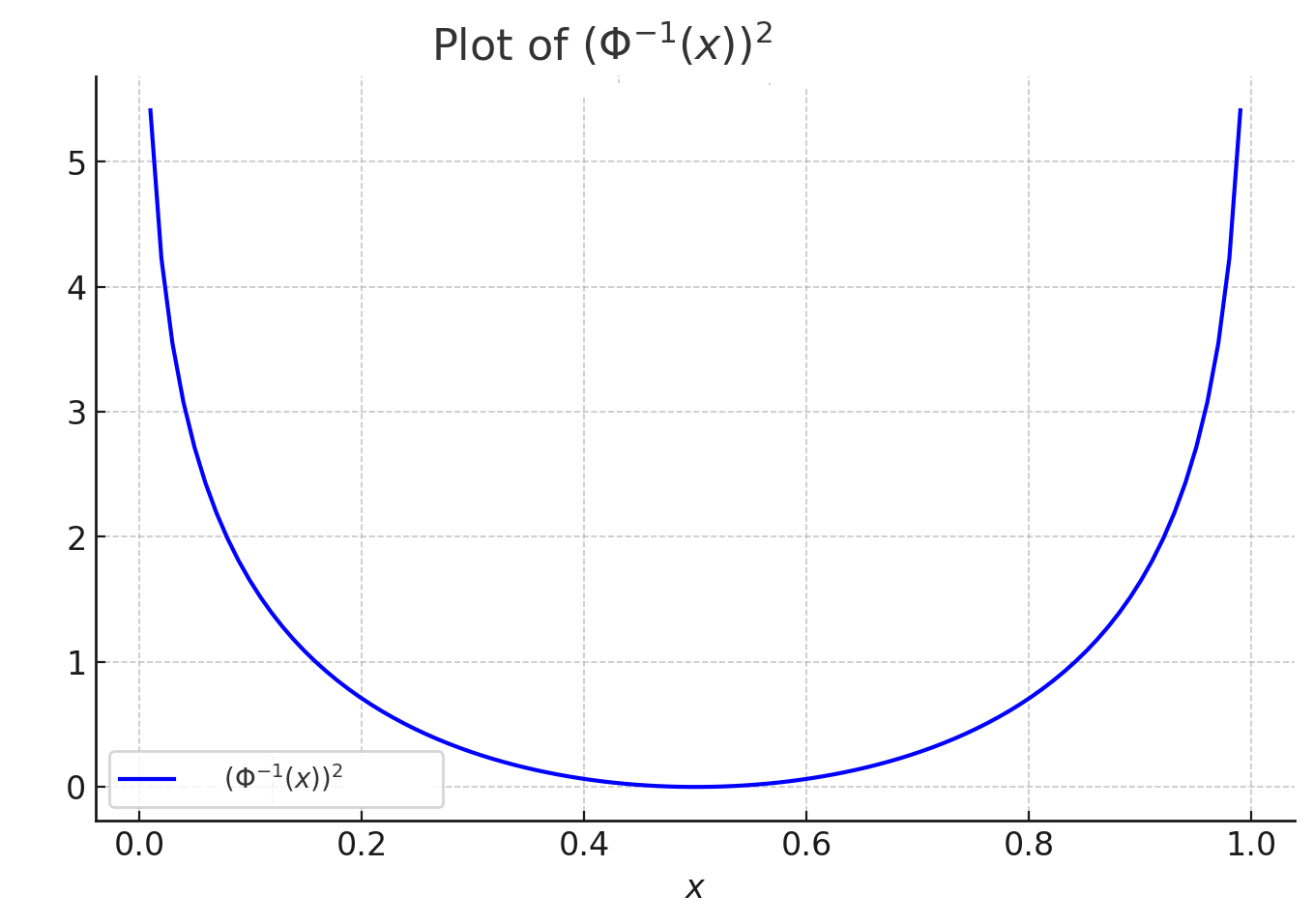}
    \caption{Convexity of  $h(x):=\big( \Phi^{-1}(x)\big)^2 $ can be seen directly from this plot without any algebraic computations. }
    \label{convex_gauss}
\end{figure}

Let \( \Phi^{-1}(x) = z \), so that \( \Phi(z) = x \). Then, \( h(x) = z^2 \). To compute the first derivative of \( h(x) \), apply the chain rule:
\[
h'(x) = 2z \cdot \frac{d}{dx} \left( \Phi^{-1}(x) \right).
\]
The derivative of the inverse function \( \Phi^{-1}(x) \) is given by:
\[
\frac{d}{dx} \left( \Phi^{-1}(x) \right) = \frac{1}{\phi(\Phi^{-1}(x))},
\]
where \( \phi(x) \) is the standard normal density function. Thus, the first derivative of \( h(x) \) is:
\[
h'(x) = 2 \Phi^{-1}(x) \cdot \frac{1}{\phi(\Phi^{-1}(x))}.
\]
Next, we compute the second derivative of \( h(x) \). We will again apply the product rule and chain rule:
\[
h''(x) = 2 \frac{d}{dx} \left( \Phi^{-1}(x) \right) \cdot \frac{1}{\phi(\Phi^{-1}(x))} + 2 \Phi^{-1}(x) \cdot \frac{d}{dx} \left( \frac{1}{\phi(\Phi^{-1}(x))} \right).
\]
Now, differentiate \( \frac{1}{\phi(\Phi^{-1}(x))} \):
\[
\frac{d}{dx} \left( \frac{1}{\phi(\Phi^{-1}(x))} \right) = - \frac{\phi'(\Phi^{-1}(x))}{\phi(\Phi^{-1}(x))^2} \cdot \frac{d}{dx} \left( \Phi^{-1}(x) \right).
\]
Using the fact that \( \phi'(z) = -z \phi(z) \), we get:
\[
\frac{d}{dx} \left( \frac{1}{\phi(\Phi^{-1}(x))} \right) = \frac{z}{\phi(z)^2} \cdot \frac{1}{\phi(z)} = \frac{z}{\phi(z)^3}.
\]
Thus, the second derivative of \( h(x) \) becomes:
\[
h''(x) = 2 \cdot \frac{1}{\phi(\Phi^{-1}(x))^2} + 2 \Phi^{-1}(x) \cdot \frac{\Phi^{-1}(x)}{\phi(\Phi^{-1}(x))^3}.
\]
We now simplify the expression for \( h''(x) \):
\[
h''(x) = \frac{2}{\phi(\Phi^{-1}(x))^2} + \frac{2 (\Phi^{-1}(x))^2}{\phi(\Phi^{-1}(x))^3}.
\]
Since \( \phi(z) > 0 \) for all \( z \), both terms are positive, implying that \( h''(x) > 0 \) for all \( x \in (0, 1) \). Therefore, the function \( h(x) \) is convex on \( (0, 1) \).
\end{proof}

\begin{customthm}{\ref{theorem_CI}}[Conditional validity of \(CW^{+CI}(\rho) \) intervals when \( \rho = 1 \)]
\,
\begin{tcolorbox}[colback=gray!5!white, colframe=gray!20!white, boxrule=0.2pt, left=1pt, right=1pt]
\begin{flushleft}
For notational convenience in the proof, we re-define 
\[
\tilde{Y}_0 := -Y(0), \qquad \tilde{Y}_1 := Y(1), \quad \text{ and } \quad f_t(x) := \mathbb{E}[\tilde{Y}_t \mid X=x], \quad t=0,1,
\]
so that we are interested in $ITE = \tilde{Y}_1 + \tilde{Y}_0$ and CATE becomes $\tau(x) = f_0(x) + f_1(x)$. With this notation in place, an equivalent formulation of Theorem~\ref{theorem_CI} is given below. 
\end{flushleft}
\end{tcolorbox}

Let \( x \in \mathcal{X} \) and assume \( \rho(x) = 1 \), i.e., \( \operatorname{cor}(\tilde{Y}_0, \tilde{Y}_1 \mid X = x) = -1 \). Suppose that for some \( \beta \in (0,1) \), our confidence intervals satisfy \( \mathbb{P}\big(\tau(x) \leq \hat{\tau}(x) + r_u(x)\big) \geq 1 - \beta \). For any $\tilde\rho\in[-1,1]$ holds:

\begin{itemize}
    \item If \( \operatorname{var}(\tilde{Y}_0 \mid X = x) = \operatorname{var}(\tilde{Y}_1 \mid X = x) \), then:
\[
\mathbb{P}\big(\tilde{Y}_0 + \tilde{Y}_1 \leq \hat{\tau}(X) + r_u(X) + D_{\tilde\rho}(u_0(X), u_1(X)) \mid X = x\big) \geq 1 - \beta.
\]
\item  If \( \operatorname{var}(\tilde{Y}_0 \mid X = x) \neq \operatorname{var}(\tilde{Y}_1 \mid X = x) \), and the following conditions hold for each \(t \in \{0,1\}\):
    \begin{align*}
    \mathbb{P}(\tilde{Y}_t \leq \hat{f}_t(X) + u_t(X) \mid X = x) &= 1 - \alpha, \label{eq:cond_validity} \\
    \mathbb{P}\left(f_t(x) \leq \hat{f}_t(x) + \tfrac{1}{2} r_u(x)\right) &\geq 1 - \tfrac{1}{2} \beta,
    \end{align*}
    then the \(CW^{+CI}(\rho)\) prediction intervals are conditionally valid:     \[
    \mathbb{P}\left(\tilde{Y}_0 + \tilde{Y}_1 \leq \hat{\tau}(X) + r_u(X) + D_{\tilde{\rho}}(u_0(X), u_1(X)) \mid X = x\right) \geq (1 - \alpha)(1 - \beta).
    \]
\end{itemize}
\end{customthm}

\begin{proof}
As in the proof of Theorem~\ref{rho_1_theorem}, we use the following classical result: for any two random variables \( Z_0 \) and \( Z_1 \) with finite correlation, we have
\[
\operatorname{cor}(Z_0, Z_1) = -1 \quad \Leftrightarrow \quad Z_1 = a - b Z_0
\]
for some constants \( a, b \in \mathbb{R} \) with \( b > 0 \). This follows directly from the Cauchy-Schwarz inequality.

Conditioning on \( X = x \), we apply this result to the pair \( (\tilde{Y}_0, \tilde{Y}_1) \), and define the relation
\[
\tilde{Y}_1 = a_x - b_x \tilde{Y}_0,
\]
(equality is always meant as almost sure equality), where
\[
b_x = \frac{\operatorname{sd}(\tilde{Y}_1 \mid X = x)}{\operatorname{sd}(\tilde{Y}_0 \mid X = x)} = \sqrt{ \frac{\operatorname{var}(\tilde{Y}_1 \mid X = x)}{\operatorname{var}(\tilde{Y}_0 \mid X = x)}}.
\]
Let \( \tilde{Y}^{cen}_t := \tilde{Y}_t - \mathbb{E}[\tilde{Y}_t \mid X = x] \) denote the centered versions of \( \tilde{Y}_0 \) and \( \tilde{Y}_1 \). Then
\[
\tilde{Y}^{cen}_1 = -b_x \tilde{Y}^{cen}_0.
\]

\paragraph{Case 1: \( b_x = 1 \).} We consider the following probability:
\begin{align*}
&\mathbb{P}\left(\tilde{Y}_0 + \tilde{Y}_1 \leq \hat{\tau}(X) + r_u(X) + D_{\tilde\rho}\big(u_0(X), u_1(X)\big) \mid X = x\right) \\
&= \mathbb{P}\left(\tau(X) \leq \hat{\tau}(X) + r_u(X) + D_{\tilde\rho}\big(u_0(X), u_1(X)\big) \mid X = x\right) \\
&\geq \mathbb{P}\left(\tau(X) \leq \hat{\tau}(X) + r_u(X) \mid X = x\right) \geq 1 - \beta.
\end{align*}

\paragraph{Case 2: \( b_x < 1 \).} Using the identity \( \tilde{Y}^{cen}_1 = -b_x \tilde{Y}^{cen}_0 \), we write:
\[
\tilde{Y}_0 + \tilde{Y}_1 = \tilde{Y}^{cen}_0 + \tilde{Y}^{cen}_1 + f_0(x) + f_1(x) = (1 - b_x)\tilde{Y}^{cen}_0 + f_0(x) + f_1(x).
\]
So we have:
\begin{align*}
&\mathbb{P}\left(\tilde{Y}_0 + \tilde{Y}_1 \leq \hat{\tau}(x) + r_u(x) + D_{\tilde\rho}(u_0(x), u_1(x)) \mid X = x\right) \\
&= \mathbb{P}\left((1 - b_x)\tilde{Y}^{cen}_0 + f_0(x) + f_1(x) \leq \hat{f}_0(x) + \hat{f}_1(x) + r_u(x) + D_{\tilde\rho}(u_0(x), u_1(x)) \mid X = x\right) \\
&= \mathbb{P}\left((1 - b_x)\tilde{Y}^{cen}_0 \leq e_0(x) + e_1(x) + r_u(x) + D_{\tilde\rho}(u_0(x), u_1(x)) \mid X = x\right),
\end{align*}
where \( e_t(x) = \hat{f}_t(x) - f_t(x) \) for \( t = 0, 1 \).

We are given that:
\[
\mathbb{P}\left(\tilde{Y}^{cen}_0 \leq e_0(x) + u_0(x) \mid X = x\right) = 0.9, \quad \mathbb{P}\left(b_x \tilde{Y}^{cen}_0 \leq e_1(x) + u_1(x) \mid X = x\right) = 0.9.
\]
From this, it follows that:
\[
e_0(x) + u_0(x) = \frac{e_1(x) + u_1(x)}{b_x}.
\]
Therefore:
\[
\mathbb{P}\left((1 - b_x)\tilde{Y}^{cen}_0 \leq |(e_0(x) + u_0(x)) - (e_1(x) + u_1(x))| \mid X = x\right) = 0.9.
\]
Now, due to the assumption of \( \mathbb{P}(f_t(x) \leq \hat{f}_t(x) + \tfrac{1}{2}r_u(x))\geq 1-\beta/2 \), we have that
\begin{equation*}
    \mathbb{P}\bigg(r_u(x) \geq -2 \max\{e_0(x), e_1(x)\} \bigg)\geq 1-\beta.
\end{equation*}
Note the following fact:
\[
\begin{aligned}
r_u(x) &\geq -2 \max\{e_0(x), e_1(x)\} \\
\implies \quad |(e_0(x) + u_0(x)) - (e_1(x) + u_1(x))| &\leq e_0(x) + e_1(x) + r_u(x) + |u_0(x) - u_1(x)|.
\end{aligned}
\]
Hence, with probability at least \( 1 - \beta \), we have:
\[
|(e_0(x) + u_0(x)) - (e_1(x) + u_1(x))|  \leq e_0(x) + e_1(x) + r_u(x) + |u_0(x) - u_1(x)|.
\]
Putting it all together:
\[
\mathbb{P}\left(\tilde{Y}_0 + \tilde{Y}_1 \leq C_{ITE}(x) \mid X = x\right) \geq (1-\alpha)(1-\beta),
\]
where $C_{ITE}(x) = \hat{\tau}(x) + r_u(x) + D_\rho(u_0(x), u_1(x))$. 

An analogous argument can be done when $b_x>1$ as the problem is symmetric in $\tilde{Y}_0\leftrightarrow \tilde{Y}_1$. 
\end{proof}



\end{document}